\newtheorem{theorem}{Theorem}[section]
\newtheorem{corollary}[theorem]{Corollary}
\newtheorem{definition}[theorem]{Definition}
\newtheorem{lemma}[theorem]{Lemma}
\newtheorem{proposition}[theorem]{Proposition}
\theoremstyle{remark}
\newtheorem{remark}[theorem]{Remark}
\theoremstyle{remark}
\newtheorem{assumption}[theorem]{Assumption}
\newcommand{\Eof}[1]{\mathbb{E}\left[ #1 \right]}
\newcommand{\p}{\partial}
\newcommand{\cL}{\mathcal{L}}
\title{Growth rate of liquidity provider's wealth in G3Ms}
\author[C.~Y. Lee]{Cheuk Yin Lee}
\author[S.-N. Tung]{Shen-Ning Tung}
\author[T.-H. Wang]{Tai-Ho Wang}
\date{\today}
\address{Cheuk Yin Lee \newline
School of Science and Engineering, \newline
The Chinese University of Hong Kong (Shenzhen)\newline
Shenzhen, China
}
\email{leecheukyin@cuhk.edu.cn}
\address{Shen-Ning Tung \newline
Department of Mathematics, \newline
National Tsing Hua University \newline
Hsinchu, Taiwan
}
\email{tung@math.nthu.edu.tw}
\address{Tai-Ho Wang \newline
Department of Mathematics \newline
Baruch College, The City University of New York \newline
1 Bernard Baruch Way, New York, NY10010
}
\email{tai-ho.wang@baruch.cuny.edu}
\keywords{Automatic market making, Decentralized exchange, Decentralized finance}
\begin{document}

\begin{abstract}
We study how trading fees and continuous-time arbitrage affect the profitability of liquidity providers (LPs) in Geometric Mean Market Makers (G3Ms). We use stochastic reflected diffusion processes to analyze the dynamics of a G3M model under the arbitrage-driven market \cite{milionis2022automated}. Our research focuses on calculating LP wealth and extends the findings of Tassy and White \cite{tassy2020growth} for the constant product market maker (Uniswap v2) to a broader range of G3Ms, including Balancer. This allows us to calculate the long-term expected logarithmic growth of LP wealth, offering new insights into the complex dynamics of AMMs and their implications for LPs in decentralized finance.
\end{abstract}

\maketitle

\section{Introduction}

Decentralized finance (DeFi) has transformed financial markets by enabling trustless trading and investment through blockchain technology
\cite{capponi2021adoption, gobet2023decentralized}. At the heart of this
transformations are Automated Market Makers (AMMs) \cite{mohan2022automated}, which replace traditional order book exchanges with smart contracts that automate price discovery and trade execution, eliminating intermediaries and broadening market accessibility. Among the diverse landscape of AMM designs, Geometric Mean Market Makers (G3Ms) have gained significant traction, powering leading DeFi protocols such as Uniswap \cite{adams2020uniswap} and Balancer \cite{martinelli2019balancer}. By maintaining a fixed weighted geometric mean of asset reserves, a G3M enforces a constant target allocation across its constituent assets, creating a natural rebalancing dynamic that closely mirrors the discipline of a classical constant-weighted index fund. This structural parallel raises a natural and practically important question: can a G3M, through its fee mechanism, generate long-term wealth growth that matches or even exceeds that of a traditional passive investment strategy?

The goal of this paper is to address this question by developing a tractable analytical framework for studying the long-term growth of liquidity provider (LP) wealth in G3Ms, explicitly accounting for the interplay between continuous-time arbitrage and transaction fees. Our approach is motivated by the observation that, in a G3M, every trade moves the pool price toward the external market price, generating fee revenue for LPs while simultaneously exposing them to adverse selection. Understanding how these two forces interact over long horizons is essential for evaluating G3Ms as investment vehicles and for designing fee structures that maximize LP returns.

We model the G3M under the influence of continuous arbitrage activity against a frictionless external reference market, adapting the arbitrage-driven price dynamics of \cite{najnudel2024arbitrage} and drawing on the loss-versus-rebalancing framework of \cite{milionis2022quantifying, milionis2022automated, milionis2023automated}. The key analytical tool is the theory of reflected diffusion processes \cite{lions1984stochastic}: the log-ratio of the external market price to the G3M pool price, which we call the \emph{mispricing process}, evolves as a diffusion confined to the no-arbitrage interval $[\ln\gamma, -\ln\gamma]$ by reflecting boundary local times that encode the fee-driven arbitrage activity. The long-term growth rate of LP wealth can then be expressed in terms of the ergodic properties of this reflected diffusion, yielding explicit formulas that connect fee structures, market volatility, and portfolio weights.

The scope of this paper is not to provide a fully realistic microstructural model of AMM trading, but to establish a clean baseline that isolates the arbitrage channel and demonstrates the potential of G3Ms to serve as on-chain index fund infrastructure. In particular, our results show that, from a portfolio construction perspective, short-term arbitrage losses can be transformed into long-run volatility harvesting through the fee mechanism, with the LP wealth growth rate outperforming the excess growth rate of a constant-rebalanced portfolio in Stochastic Portfolio Theory (SPT) \cite{fernholz2002stochastic, karatzas2009stochastic} --- which itself assumes frictionless trading --- when the fee tier is chosen appropriately.

To this end, we adopt three simplifying assumptions, each of which we now motivate.

\paragraph{\textit{Assumption A: No Noise Traders.}}
We assume all trades are executed by arbitrageurs, with no noise traders present. This is a deliberate worst-case baseline: in practice, retail order flow is typically routed through aggregation solvers such as 1inch or CowSwap, whose highly nonlinear routing logic is difficult to incorporate into a closed-form model (see, e.g., \cite{brady2024amm}). Under this assumption, LPs face only adversarial order flow and are in the least favorable scenario for fee collection. Nevertheless, our results demonstrate that even in this worst-case setting, the LP wealth growth rate can outpace that of a constant-rebalanced portfolio under SPT, provided the fee tier is
chosen appropriately. The presence of uninformed order flow would, in practice, only strengthen this conclusion.

\paragraph{\textit{Assumption B: Frictionless External Reference Market.}}
We assume the existence of an external reference price with infinite liquidity and no transaction costs, representing an aggregated oracle across all available liquidity sources --- DEXs, CEXs, and routing solvers --- rather than the price on any single venue. The special case of a single CEX with proportional transaction costs is studied in \cite{campbell2025optimal}, where the arbitrage structure takes a similar form to ours but with a modified no-arbitrage boundary; in that setting, our growth rate computation carries over verbatim. Incorporating fixed transaction costs, however, would break the scale homogeneity of the mispricing dynamics, substantially increasing analytical complexity. We therefore leave a systematic treatment of inter-venue transaction costs
as an important direction for future work.

\paragraph{\textit{Assumption C: Continuous Arbitrage.}}
We assume arbitrageurs continuously monitor and act on arbitrage opportunities instantaneously. While idealized, this assumption is increasingly justified by modern blockchain infrastructure: block times on networks such as Arbitrum ($\approx 250$ms), Solana ($\approx 400$ms), and MegaETH ($\approx 10$ms) are short enough that the continuous-time limit provides a reasonable approximation of high-frequency arbitrage activity.

\subsection{Literature Review}

\paragraph{\textit{AMMs.}}
A large body of literature has examined various aspects of AMM mechanics.
\cite{angeris2019analysis, angeris2020improved} analyzed the theoretical properties and price-tracking capabilities of constant function markets.
\cite{evans2021liquidity} extended this analysis to G3Ms with variable weights, while \cite{fukasawa2023weighted, fukasawa2023modelfree} investigated impermanent loss and hedging strategies. More recently, \cite{najnudel2024arbitrage} examined arbitrage-driven price dynamics in fee-based AMMs, providing insights into the relationship between fee structures and market behavior.

\paragraph{\textit{LP Risk.}}
A central concern in the AMM literature is the quantification and management of LP losses. \cite{milionis2022quantifying, milionis2022automated, milionis2023automated} introduced the concept of ``loss-versus-rebalancing'' (LVR) and developed frameworks for measuring systematic LP losses in both standard AMMs and concentrated liquidity markets. \cite{cartea2023predictable, cartea2024decentralized} explored predictable
loss in constant function markets and developed optimal liquidity provision strategies under separate fee rate models. \cite{bronnimann2024automated} further analyzed risks and incentives in AMM liquidity provision, proposing novel transaction cost structures. \cite{risk2026pricing} develops a pricing and hedging framework for CFMMs, characterizing impermanent loss as a weighted strip of vanilla options via the Carr--Madan spanning formula and validating the resulting implied volatility structure against Uniswap v3
and Deribit option data. Beyond loss quantification, \cite{drissi2025equilibrium} studied equilibrium liquidity provision in decentralized markets, demonstrating how LPs can offset inventory and adverse selection risks through hedging strategies, and highlighting the importance of fee income in the context of LP risk-return trade-offs.

\paragraph{\textit{Blockchain Market Microstructure.}}
A growing literature examines microstructural features specific to blockchain trading environments that are absent from classical market models. Factors such as gas costs, block discreteness, and market fragmentation can materially affect LP returns and arbitrage dynamics. \cite{aoyagi2025coexisting} examines the interaction between AMMs
and centralized limit order books, demonstrating how their coexistence influences liquidity provision and price discovery. \cite{capponi2025viability} develops a model to evaluate the viability of blockchain markets as the sole venue for price formation, with particular attention to the effects of block times on market efficiency. Closely
related to our analysis of optimal fee tiers, \cite{campbell2025optimal} studies LP profitability in a dynamic model where an AMM operates alongside a CEX with optimal order routing and arbitrage, finding that the optimal fee is stable under normal market conditions but should be raised during periods of high volatility --- a finding that complements our characterization of the fee tier that maximizes long-term LP wealth growth.

\subsection{Contribution}
This paper extends the analysis of \cite{tassy2020growth}, which studied LP wealth growth in the constant product market maker (Uniswap V2), to a broad class of G3Ms, including Balancer. Within the framework described above, the paper makes three key contributions:
\begin{enumerate}
    \item We derive explicit formulas for the long-term expected logarithmic growth rate of LP wealth under various market conditions, including time-homogeneous, time-inhomogeneous, and stochastic volatility settings, and across different volatility and drift scenarios.
    \item We analyze how different fee structures affect LP returns, characterizing the optimal fee tier that maximizes LP wealth growth and revealing non-monotone phase transitions as a function of pool weights and market drift.
    \item We compare the long-term performance of G3Ms to constant-rebalanced
    portfolio strategies in the sense of SPT, demonstrating that G3Ms can serve as a competitive on-chain index fund infrastructure capable of outperforming classical passive investment benchmarks.
\end{enumerate}

\subsection*{Outline}
The remainder of the paper is organized as follows. Section \ref{sec:G3Ms}
establishes notation and examines G3M market dynamics with and without transaction costs. Section \ref{sec:arbitrage_driven_dynamics} studies the arbitrageur's problem and G3M dynamics under both discrete and continuous arbitrage. Section \ref{sec:LP_growth_rate} presents our main results, connecting LP wealth growth to parabolic PDEs with Neumann boundary conditions and extending the analysis to cases with independent stochastic volatility and drift.
\section{Constant Weight G3Ms} \label{sec:G3Ms}
Geometric Mean Market Makers (G3Ms) are a prominent class of automated market makers (AMMs) that utilize a weighted geometric mean to define the feasible set of trades. This section provides a detailed review of G3M mechanisms, following the work in \cite{evans2021liquidity, mohan2022automated, fukasawa2023modelfree}, for a system with two assets, $X$ and $Y$, and a fixed weight $w \in (0,1)$.

\subsection{G3M Trading Mechanics}
Let $(x,y) \in \mathbb{R}^{2}_+$ denote the reserves of assets $X$ and $Y$ in the liquidity pool (LP). The core principle of a G3M is to maintain a constant weighted geometric mean of the reserves:
\begin{equation} \label{eqn:geometric_mean}
    x^w y^{1-w} = \ell,
\end{equation}
where $\ell$ represents the overall liquidity in the pool.

\subsubsection{Trading without Transaction Costs}
In an idealized setting without transaction costs, trades in a G3M must preserve the constant weighted geometric mean of the reserves. This means a trade $\Delta = (\Delta_x, \Delta_y) \in \mathbb{R}^2$, representing the changes in asset reserves, is feasible if and only if:
\begin{equation} \label{eqn:feasible_trade}
x^w y^{1-w} = (x + \Delta_x)^w (y + \Delta_y)^{1-w}.
\end{equation}
Here, $\Delta_x > 0$ indicates that asset $X$ is being added to the pool, while $\Delta_x < 0$ means $X$ is being withdrawn.

To determine the price of asset $X$ relative to asset $Y$, we analyze how infinitesimal changes in reserves affect the weighted geometric mean. Taking the total derivative of equation \eqref{eqn:geometric_mean}, we get
$$
w x^{w-1} y^{1-w} dx + (1-w) x^w y^{-w} dy = 0,
$$
which simplifies to
$$
w \frac{dx}{x} + (1-w) \frac{dy}{y} = 0.
$$
This relationship between infinitesimal changes in $x$ and $y$ allows us to define the price $P$ of asset $X$ in terms of asset $Y$:
\begin{equation} \label{eqn:AMM_price}
    P = \left. -\frac{d \Delta_y}{d \Delta_x} \right|_{\Delta_x=0} = - \frac{dy}{dx} = \frac{y/(1-w)}{x/w} = \frac{w}{1-w} \frac{y}{x}.
\end{equation}
Therefore, the price in a G3M without transaction costs is determined solely by the ratio of the reserves.

\begin{remark}
A key advantage of G3Ms without transaction costs is their \textit{path independence} \cite[\S 2.3]{angeris2020improved}. This means the final state of the pool depends only on the net change in reserves, not the specific sequence of trades that led to it. This property makes these G3Ms resistant to price manipulation strategies.
\end{remark}

\subsubsection{With Transaction Costs} \label{sec:w_transaction_cost}
In a more realistic scenario, there are transaction costs, typically modelled as a proportional fee. Let $\gamma \in (0,1)$ represent the transaction cost parameter, where $1-\gamma$ is the proportional fee. The trading process now involves two key concepts: feasible trades and reserve updates.

\paragraph{\textit{Determining Feasible Trades}}
A trade is feasible if it satisfies the following conditions, which incorporate the transaction cost:
\begin{itemize}
    \item \textbf{Buying Asset $X$ from the Pool ($\Delta_y > 0$):} The trader pays a fee on the amount of asset $Y$ they provide.
    \begin{equation} \label{eqn:AMM_buy}
        (x + \Delta_x)^w (y + \gamma \Delta_y)^{1-w} = \ell.
    \end{equation}
    \item \textbf{Selling Asset $X$ to the Pool ($\Delta_x > 0$):} The trader pays a fee on the amount of asset $X$ they provide.
    \begin{equation} \label{eqn:AMM_sell}
        (x + \gamma \Delta_x)^w (y + \Delta_y)^{1-w} = \ell.
    \end{equation}
\end{itemize}

\paragraph{\textit{Updating Reserves}}
After a trade, the reserves and the pool's liquidity are updated accordingly:
\begin{align*}
    x &\mapsto x + \Delta_x, \\
    y &\mapsto y + \Delta_y, \\
    \ell &\mapsto (x + \Delta_x)^w (y + \Delta_y)^{1-w}.
\end{align*}

Introducing transaction costs leads to several important differences:

\paragraph{\textit{Marginal Exchange Rate}}
The effective price for infinitesimal trades is given by the marginal exchange rate, which now incorporates the transaction cost.
\begin{itemize}
\item For buying $X$: Differentiating \eqref{eqn:AMM_buy} gives $-\frac{d \Delta_y}{d \Delta_x}|_{\Delta_x=0} = \frac{1}{\gamma}P$.
\item For selling $X$: Differentiating \eqref{eqn:AMM_sell} gives $-\frac{d \Delta_y}{d \Delta_x}|_{\Delta_x=0} = \gamma P$.
\end{itemize}
This creates a \textbf{bid-ask spread}, where the price for buying $X$ is higher than the price for selling $X$.

\paragraph{\textit{Constant Wealth Proportion}}
Despite the transaction costs, the proportion of asset $X$ (and $Y$) in the pool's total wealth remains constant at $w$ (and $1-w$, respectively), as shown by \eqref{eqn:AMM_price}:
\begin{equation} \label{eqn:weight}
    \frac{Px}{w} = \frac{y}{1-w}.
\end{equation}

\paragraph{\textit{Relationship between Reserves and Liquidity}}
There is a correspondence between $(x,y)$ and $(\ell,P)$ given by:
\begin{align*}
    \ln x &= \ln \ell - (1-w) \ln P - (1-w) \ln(1-w) + (1-w) \ln w, \\
    \ln y &= \ln \ell + w \ln P + w \ln(1-w) - w \ln w.
\end{align*}

\paragraph{\textit{Price Impact of Trading}}
The impact of trading on the pool price is given by:
\begin{align*}
    \frac{dP}{P} = \frac{dy}{y} - \frac{dx}{x} = - \frac{1}{1-w} \frac{dx}{x} = \frac{1}{w} \frac{dy}{y}.
\end{align*}

\begin{remark} \
    \begin{enumerate}
        \item The transaction cost parameter $\gamma$ creates a bid-ask spread, similar to traditional limit order books, where buyers pay a slightly higher price than sellers.
        \item Unlike G3Ms without transaction fees, the presence of transaction costs introduces \textbf{path dependence} \cite[\S 2.3]{angeris2020improved}. This means that the order and size of trades influence the final outcome. To illustrate this, consider a trader selling $\Delta_x$ of asset $X$ in exchange for $\Delta_y$ of asset $Y$. The transaction satisfies
        $$
        (x+ \gamma\Delta_x)^w (y+\Delta_y)^{1-w} = x^w y^{1-w}.
        $$
        If the trader instead splits the trade into two smaller transactions, $\Delta_x = \Delta^1_x + \Delta^2_x$, receiving $\Delta^1_y$ and $\Delta^2_y$ of asset $Y$ respectively, then the trades satisfy
        \begin{align*}
        (x+ \gamma\Delta^1_x)^w (y+\Delta^1_y)^{1-w} &= x^w y^{1-w}, \\
        (x+ \Delta^1_x + \gamma \Delta^2_x)^w (y+\Delta^1_y+\Delta^2_y)^{1-w} &= (x+ \Delta^1_x)^w (y+\Delta^1_y)^{1-w}.
        \end{align*}
        Comparing these equations reveals that $\Delta_y < \Delta^1_y + \Delta^2_y <0$, demonstrating that splitting the trade leads to a higher cost (i.e., a smaller amount of asset Y received). This phenomenon is further explored in \cite[Appendix D]{angeris2019analysis} and \cite[Proposition 1]{fukasawa2023modelfree}.
    \end{enumerate}
\end{remark}

\subsection{Continuous Trading Dynamics}
In the continuous trading regime, where trades occur infinitesimally often, we can describe the G3M dynamics using differential equations. These equations capture how the reserves, price, and liquidity evolve in response to continuous buying and selling.

\subsubsection{Reserve Dynamics}
The changes in reserves $x$ and $y$ are governed by the following equations, which incorporate the transaction cost parameter $\gamma$:
\begin{itemize}
    \item \textbf{When buying $X$ from the pool ($dx<0$):}
    \begin{equation} \label{eqn:xy_buy}
        w \frac{dx}{x} + \gamma(1-w)\frac{dy}{y} = 0.
    \end{equation}
    \item \textbf{When selling $X$ to the pool ($dx>0$):}
    \begin{equation} \label{eqn:xy_sell}
        \gamma w \frac{dx}{x} + (1-w)\frac{dy}{y} = 0.
    \end{equation}
\end{itemize}
These equations reflect that when buying $X$, the trader pays a fee on the $Y$ asset provided, while when selling $X$, the fee is paid on the $X$ asset provided.

\subsubsection{Price Dynamics}
The price $P$ of asset $X$ relative to asset $Y$ evolves according to:
\begin{equation} \label{eqn:price_diff}
    \frac{dP}{P} = \frac{dy}{y} - \frac{dx}{x} = 
    \begin{cases}
        (1 + \gamma \frac{1-w}{w}) \frac{dy}{y} &\text{ if } dy>0, \\
        (1 + \frac{1}{\gamma} \frac{1-w}{w}) \frac{dy}{y} &\text{ if } dy<0.
    \end{cases}
\end{equation}
This equation shows how buying pressure ($dy>0$) pushes the price up while selling pressure ($dy<0$) pushes it down. The transaction cost $\gamma$ affects the magnitude of these price changes.

\subsubsection{Liquidity Dynamics}
The liquidity $\ell$ of the pool changes as follows:
\begin{equation} \label{eqn:liquidity_diff}
    \frac{d\ell}{\ell} = w \frac{dx}{x} + (1-w) \frac{dy}{y} = 
    \begin{cases}
    (1-\gamma) (1-w) \frac{dy}{y} &\text{ if } dy>0, \\
    (1 - \frac{1}{\gamma}) (1-w) \frac{dy}{y} &\text{ if } dy<0.
    \end{cases}
\end{equation}
This equation reveals that the liquidity consistently increases due to the transaction costs collected by the pool.

\begin{remark}
In the continuous trading regime, the G3M effectively maintains certain quantities constant, depending on the direction of trading:
\begin{itemize}
    \item When buying $X$: $x^w y^{\gamma (1-w)}$ remains constant.
    \item When selling $X$: $x^{\gamma w} y^{1-w}$ remains constant.
\end{itemize}
This behavior leads to a continuous accumulation of liquidity in the pool.
\end{remark}

\subsection{LP Wealth}
The wealth of a liquidity provider (LP) in a G3M is determined by the value of their holdings in the pool. At any given time, the LP's wealth, denoted by $V(P)$, is simply the sum of the value of their $X$ holdings and their $Y$ holdings:
$
V(P) = Px + y
$
where $P$ is the current price of asset $X$ relative to asset $Y$.

Using the relationship between the reserves, price, and liquidity (equation \eqref{eqn:weight}), we can express the LP's wealth in a more convenient form:
\begin{equation} \label{eqn:LP_wealth}
V(P)
= \frac{Px}{w}
= \frac{Px}{w} \left(\frac{\frac{Px}{w}}{\frac{Px}{w}} \right)^w \left(\frac{\frac{y}{1-w}}{\frac{Px}{w}} \right)^{1-w}
= \frac{\ell P^w}{w^w (1-w)^{1-w}}.
\end{equation}
This equation shows that the LP's wealth is directly proportional to the pool's liquidity $\ell$ and depends on the price $P$ raised to the power of the weight $w$. Taking the logarithm of both sides, we obtain the log wealth:
\begin{equation} \label{eqn:log_LP}
    \ln V(P) = \ln \ell + w \ln P + \mathcal{S}_w,
\end{equation}
where $\mathcal{S}_w = - w \ln w - (1-w) \ln (1-w)$. The term $\mathcal{S}_w$ is the entropy of the weight distribution $(w,1-w)$, which reflects the diversification of the LP's holdings.

\begin{remark}
The LP's wealth is expressed here in terms of the G3M pool price $P$. This is natural from the LP's perspective, as they may not always have access to the true reference market price. Furthermore, this formulation (equation \eqref{eqn:log_LP}) is crucial for computing the long-term growth rate of LP wealth.

Section \ref{sec:LP_wealth_growth} compares this approach to a valuation based on the reference market price. This analysis will show that the logarithmic values of these two expressions differ by, at most, a constant factor, which depends on the transaction cost parameter $\gamma$.
\end{remark}
\section{Arbitrage-Driven G3M Dynamics} \label{sec:arbitrage_driven_dynamics}
This section investigates how the presence of arbitrageurs influences the behavior of a G3M. Arbitrageurs are traders who exploit price discrepancies between different markets to profit. In our context, they take advantage of any differences between the G3M pool price and the price on an external reference market.

To focus specifically on the impact of arbitrage, we make two simplifying assumptions:
\begin{assumption} \label{assumption:no_noisetrading}
There are no noise traders in the market.
\end{assumption}
Noise traders are those who trade based on non-fundamental factors, introducing randomness into the market. By excluding them, we can isolate the effects of arbitrageurs who act rationally based on price discrepancies.

\begin{assumption} \label{assumption:frictionless}
There exists an external reference market with infinite liquidity and no trading costs.
\end{assumption}
This assumption ensures that arbitrageurs can execute trades in the reference market without incurring any costs or affecting the market price. This idealized setting allows us to focus on the arbitrageurs' impact on the G3M.

\subsection{Arbitrage Bounds and Price Dynamics} \label{sec:arbitage_bound}
In the presence of a frictionless external reference market (Assumption \ref{assumption:frictionless}), arbitrageurs can freely exploit any price discrepancies between the G3M and the reference market. This arbitrage activity imposes bounds on the G3M price, preventing it from straying too far from the reference market price.

To understand these bounds, let's denote the price of asset $X$ relative to asset $Y$ in the reference market as $S$, and the corresponding price in the G3M pool as $P$. Arbitrageurs aim to maximize their profit by buying asset $X$ where it's cheaper and selling it where it's more expensive. This leads to the following optimization problem for an arbitrageur, as shown in \cite{angeris2019analysis}:
\begin{align} \label{eqn:arbitrage_optimization}
    \max_{\Delta \in \mathbb{R}^2} \, &- ( S\Delta_x + \Delta_y ) \\
\text{s.t. }
    &(x_0 + \gamma \Delta_x)^w (y_0 + \Delta_y)^{1-w} = x_0^w y_0^{1-w} \quad \text{if } \Delta_x \geq 0 \notag \\
    &(x_0 + \Delta_x)^w (y_0 + \gamma \Delta_y)^{1-w} = x_0^w y_0^{1-w} \quad \text{ if } \Delta_x < 0, \notag
\end{align}
where $(x_0, y_0)$ represents the initial reserve of assets $X$ and $Y$ in the pool. This optimization problem captures the arbitrageur's objective: maximize profit while adhering to the G3M's trading rules (equations \eqref{eqn:AMM_buy} and \eqref{eqn:AMM_sell}), which include transaction costs.

The solution to this optimization problem reveals that the G3M price, after arbitrageurs have acted, must satisfy
$$
P^* =
\begin{cases}
    \gamma^{-1}S &\text{if } P_0 > \gamma^{-1}S, \\
    P_0 &\text{if } \gamma S \leq P_0 \leq \gamma^{-1} S, \\
    \gamma S &\text{if } P_0 < \gamma S.
\end{cases}
$$
where $\gamma$ is the transaction cost parameter of the G3M \cite[\S 2.1]{angeris2019analysis}. This interval, $[\gamma S, \gamma^{-1}S]$, defines the \textit{no-arbitrage bounds}.

Essentially, these bounds create a "safe zone" for the G3M price. If the price falls outside this zone, arbitrageurs will immediately execute trades, buying or selling asset $X$ until the price returns within the bounds. This arbitrage activity effectively regulates the G3M price, keeping it anchored to the reference market price.

Figure \ref{fig:pool_CEX_timeseries} provides empirical evidence of this behavior. It shows a time series of the G3M price, clearly demonstrating that it consistently stays within the no-arbitrage bounds.

\begin{figure}[h!]
\centering
\includegraphics[width=1\linewidth]{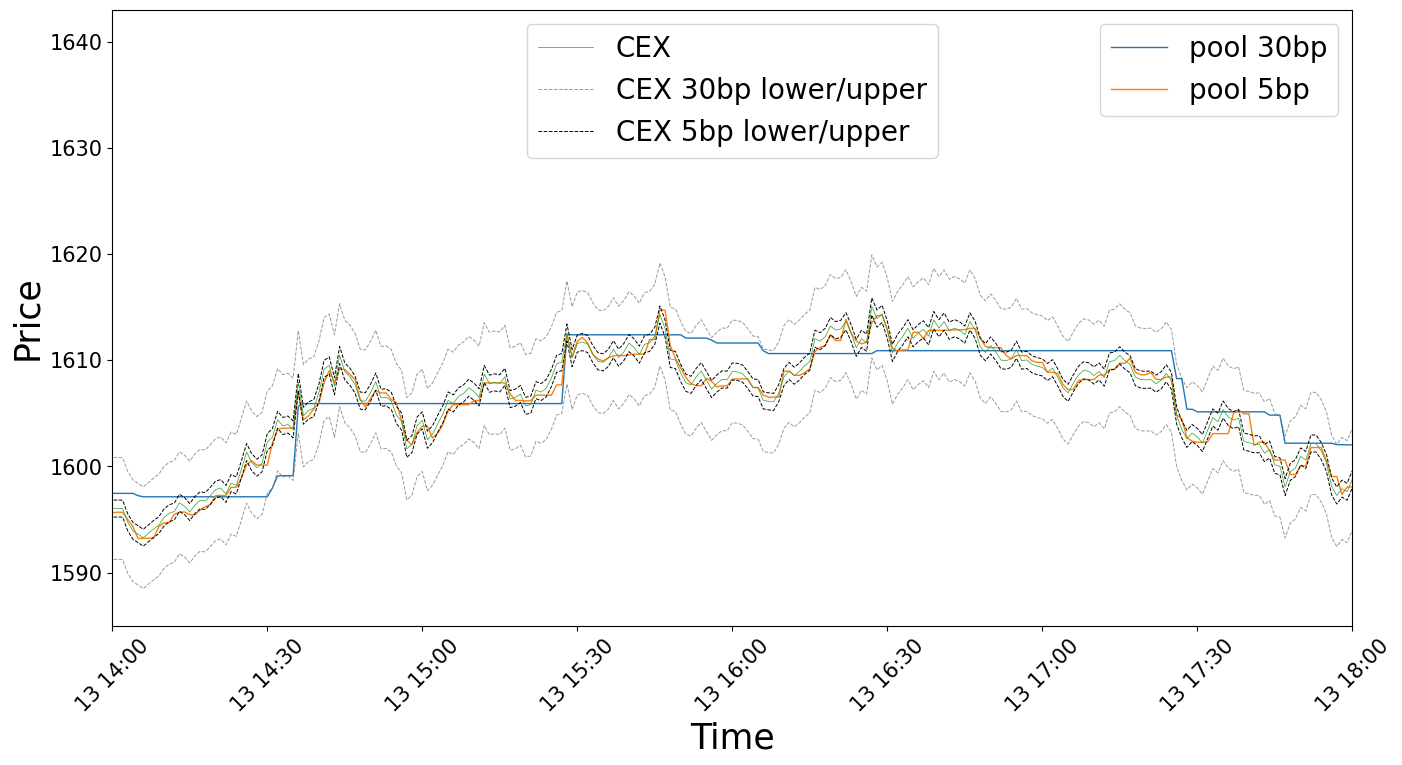}
\caption{Price time series (1-minute intervals) from 14:00:00 to 18:00:00 on September 13, 2023. Dashed lines indicate the upper and lower boundaries of the no-arbitrage region.}
\label{fig:pool_CEX_timeseries}
\end{figure}

\subsection{Mispricing and Arbitrage Dynamics}
To formally analyze how arbitrageurs influence the G3M's behavior, we introduce the concept of \textbf{mispricing}, which quantifies the discrepancy between the G3M pool price and the reference market price. We investigate how this mispricing evolves under both discrete and continuous arbitrage scenarios, building upon the framework in \cite{milionis2022automated,milionis2023automated}.

\subsubsection{Discrete Arbitrage Model}
First, we consider a discrete-time model where arbitrageurs arrive at discrete times:
\begin{assumption} \label{assumption:discrete_arbitrage}
    Arbitrageurs arrive at discrete times $0 = \tau_0 < \tau_1 < \tau_2 < \cdots \tau_m \leq T$.
\end{assumption}

At each arrival time $\tau_i$, an arbitrageur observes the reference market price $S_{\tau_i}$ and the prevailing G3M pool price $P_{\tau_{i-1}}$. They then execute trades to exploit any price difference, aiming to maximize their profit.

To quantify this price difference, we define the mispricing $Z_t$ as: 
\begin{equation} \label{eqn:Z}
    Z_t \triangleq \ln \frac{S_t}{P_t}.
\end{equation}
This quantity measures the logarithmic difference between the reference market price $S_t$ and the G3M pool price $P_t$.

Following the discussion of Section \ref{sec:arbitage_bound}, the arbitrage process can be described by the following recipe, similar to \cite[Lemma 2]{milionis2023automated}:
\begin{itemize}
    \item \textbf{If the G3M price is too low relative to the reference market} ($S_{\tau_i} < \gamma P_{\tau_{i-1}}$ or equivalently $Z_{\tau_i^-} < \ln \gamma$), the arbitrageur sells asset $X$ to the pool at the relatively higher price and simultaneously buys it on the reference market at the lower price. This pushes the G3M price up until it reaches the upper no-arbitrage bound ($Z_{\tau_i} = \ln \gamma$).
    \item \textbf{If the G3M price is too high relative to the reference market} ($S_{\tau_i} > \gamma^{-1} P_{\tau_{i-1}}$ or equivalently $Z_{\tau_i^-} > - \ln \gamma$), the arbitrageur buys asset $X$ from the pool at the relatively cheaper price and immediately sells it on the reference market at the higher price. This arbitrage activity pushes the G3M price down until it reaches the lower no-arbitrage bound ($Z_{\tau_i} = - \ln \gamma$).
    \item \textbf{If the G3M price is already within the no-arbitrage bounds} ($\gamma P_{\tau_{i-1}} \leq S_{\tau_i} \leq \gamma^{-1} P_{\tau_{i-1}}$ or equivalently $\ln \gamma \leq Z_{\tau_i^-} \leq -\ln \gamma$), there is no profitable arbitrage opportunity, and the arbitrageur does not execute any trades.
\end{itemize}
This arbitrage process leads to the following update rule for the G3M price at each arbitrageur arrival time:
\begin{equation} \label{eqn:P_update}
P_{\tau_i} = \begin{cases}
    \gamma S_{\tau_i} &\text{ if } Z_{\tau_i^-} < \ln \gamma, \\
    P_{\tau_{i-1}} &\text{ if } \ln \gamma \leq Z_{\tau_i^-} \leq -\ln \gamma, \\
    \gamma^{-1} S_{\tau_i} &\text{ if } Z_{\tau_i^-} > - \ln \gamma.
\end{cases}
\end{equation}
Accordingly, the mispricing process evolves as:
\begin{align} \label{eqn:Z_update}
Z_{\tau_i}
&= \max\left\{\min\{Z_{\tau_i^-}, -\ln \gamma\}, \ln \gamma \right\} \\
&= \begin{cases}
    \ln \gamma &\text{ if } Z_{\tau_i^-} < \ln \gamma, \\
    Z_{\tau_i^-}  &\text{ if } \ln \gamma \leq Z_{\tau_i^-} \leq -\ln \gamma, \\
    - \ln \gamma &\text{ if } Z_{\tau_i^-} > - \ln \gamma.
\end{cases} \notag
\end{align}
These equations capture how arbitrageurs adjust the G3M price in discrete steps to keep it within the no-arbitrage bounds.

\begin{proposition}[Discrete Mispricing Dynamics] \label{Prop:discrete_mispricing}
Given Assumptions \ref{assumption:no_noisetrading}, \ref{assumption:frictionless}, and \ref{assumption:discrete_arbitrage} and the initial condition $\gamma P_0 \le S_0 \le \gamma^{-1} P_0$, we can define:
\begin{equation*}
    J_i = \max\left\{\min\{Z_{\tau_i^-}, -\ln \gamma\}, \ln \gamma \right\} - Z_{\tau_i^-}, \quad
    L_t = \sum_{i: \tau_i \leq t} \{J_i\}^+, \quad
    U_t = \sum_{i: \tau_i \leq t} \{J_i\}^-,
\end{equation*}
where $\{a\}^+ = \max\{a, 0\}$ and $\{a\}^- = \max\{-a, 0\}$ denote the positive and negative parts of $a$, respectively. Then for all $t \geq 0$,
\begin{align*}
\ln P_t &= \ln P_0 + U_t - L_t, \\
Z_t &= \ln S_t - \ln P_0 + L_t - U_t.
\end{align*}
Moreover, $L_t$ and $U_t$ satisfy:
\begin{align}
L_t &= \sup_{i: \tau_i \leq t} \left( - \ln(\gamma P_0) + \ln S_{\tau_i} - U_{\tau_i} \right)^-, \label{eqn:sup_L} \\
U_t &= \sup_{i: \tau_i \leq t} \left( \ln(\gamma^{-1} P_0) - \ln S_{\tau_i} - L_{\tau_i} \right)^-. \label{eqn:sup_U}
\end{align}
\end{proposition}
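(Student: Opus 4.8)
The plan is to view the pair $(L_t, U_t)$ as the solution of a discrete, doubly reflected Skorokhod problem on the interval $[\ln\gamma, -\ln\gamma]$ driven by the path $t \mapsto \ln S_t - \ln P_0$, and to establish the four identities in two stages: first the two pathwise decompositions of $\ln P_t$ and $Z_t$, then the two extremal representations \eqref{eqn:sup_L}--\eqref{eqn:sup_U} by induction on the arbitrage index.

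First I would record the elementary consequence of Assumptions \ref{assumption:no_noisetrading} and \ref{assumption:frictionless}: since no trade occurs strictly between consecutive arrivals, the pool price is piecewise constant with $P_{\tau_i^-} = P_{\tau_{i-1}}$, while the frictionless reference price is continuous, so $S_{\tau_i^-} = S_{\tau_i}$. Hence $Z_{\tau_i^-} = \ln S_{\tau_i} - \ln P_{\tau_{i-1}}$ by \eqref{eqn:Z}, and the jump of $\ln P$ at $\tau_i$ is $\ln P_{\tau_i} - \ln P_{\tau_{i-1}} = -(Z_{\tau_i} - Z_{\tau_i^-}) = -J_i = \{J_i\}^- - \{J_i\}^+$. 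Summing this telescoping increment over all $\tau_i \le t$, and using that $\ln P$ is flat off the arrival times, gives $\ln P_t = \ln P_0 + U_t - L_t$; substituting into $Z_t = \ln S_t - \ln P_t$ yields the second identity. Both follow at once from this ``no movement between arrivals'' observation.

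For the extremal representations I would pass to the driving values $X_{\tau_i} := \ln S_{\tau_i} - \ln P_0$ and rewrite \eqref{eqn:sup_L}--\eqref{eqn:sup_U}, using $(c)^- = (-c)^+$, as
\[
L_t = \sup_{i:\,\tau_i \le t}\bigl(\ln\gamma - X_{\tau_i} + U_{\tau_i}\bigr)^+, \qquad
U_t = \sup_{i:\,\tau_i \le t}\bigl(\ln\gamma + X_{\tau_i} + L_{\tau_i}\bigr)^+ .
\]
By the first-stage identities, $Z_{\tau_i^-} = X_{\tau_i} + L_{\tau_{i-1}} - U_{\tau_{i-1}}$, so the update rule \eqref{eqn:Z_update} says precisely that $L$ (resp.\ $U$) increases at $\tau_i$ exactly when $Z_{\tau_i^-} < \ln\gamma$ (resp.\ $Z_{\tau_i^-} > -\ln\gamma$), and then only one of the two moves, by $\{J_i\}^+$ (resp.\ $\{J_i\}^-$). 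I would prove the two maxima by a single induction on $n$, checking at step $\tau_n$ the three regimes: (i) lower reflection, where $L_{\tau_n} = \ln\gamma - X_{\tau_n} + U_{\tau_n}$ is forced to equal the $n$-th candidate and dominates all earlier terms because $L_{\tau_n} = L_{\tau_{n-1}} + \{J_n\}^+ > L_{\tau_{n-1}}$; (ii) upper reflection, where $L$ is frozen and the $n$-th candidate collapses to $L_{\tau_{n-1}} + 2\ln\gamma < L_{\tau_{n-1}}$; and (iii) no reflection, where the constraint $Z_{\tau_n^-} \ge \ln\gamma$ again keeps the $n$-th candidate at most $L_{\tau_{n-1}}$. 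In each case the inductive hypothesis identifies $L_{\tau_n}$ with the running maximum, and the argument for $U$ is symmetric under $(Z, L, U, \ln\gamma) \mapsto (-Z, U, L, -\ln\gamma)$.

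The one genuine obstacle is the coupling: each extremal formula contains the \emph{other} reflection term, so the induction must propagate $L$ and $U$ simultaneously. What makes it close is that at any single arrival at most one boundary binds --- the two thresholds $\ln\gamma$ and $-\ln\gamma$ are distinct since $\gamma \in (0,1)$ --- so in each regime one sequence is frozen while the other is pinned to its boundary. The strictly positive gap $-2\ln\gamma$ between the thresholds is exactly what drops the inactive candidate term below its running maximum in case (ii), which is the only nontrivial inequality the induction requires.
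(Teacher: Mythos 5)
Your proposal is correct and follows essentially the same route as the paper: the decompositions of $\ln P_t$ and $Z_t$ are obtained by telescoping the jump rule \eqref{eqn:Z_update}, and the extremal representations \eqref{eqn:sup_L}--\eqref{eqn:sup_U} are proved by induction on the arbitrage times, with a case analysis on which (if any) boundary binds at each arrival. Your three-regime case analysis and the observation that the gap $-2\ln\gamma$ forces the inactive candidate below its running maximum is exactly the content of the paper's terser inductive step (which merges your cases (ii) and (iii) into a single ``otherwise'' case), so the two arguments coincide in substance.
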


\begin{proof}
The first assertion follows directly from \eqref{eqn:P_update} and \eqref{eqn:Z_update}. For the second assertion, note that both sides of \eqref{eqn:sup_L} and \eqref{eqn:sup_U} are non-decreasing and piecewise constant, with potential jumps occurring at $\tau_i$ for $0 \leq i \leq m$. It suffices to show the equalities at each $\tau_i$.

The proof employs induction on $i$. For $i=0$, the equalities hold by the initial condition $\ln \gamma \leq Z_0 \leq -\ln \gamma$. For $i<k$, the induction hypothesis yields: 
$$
L_{\tau_k} = \max \left\{ L_{\tau_{k-1}}, \left\{ -\ln \gamma + Z_{\tau_k} - L_{\tau_k} \right\}^- \right\}.
$$
Given that
$$
L_{\tau_k}
\begin{cases}
    \geq L_{\tau_{k-1}} &\text{ if } Z_{\tau_k} = \ln \gamma, \\
    = L_{\tau_{k-1}} &\text{ otherwise},
\end{cases}
$$
it follows that
$$
\left\{ -\ln \gamma + Z_{\tau_k} - L_{\tau_k} \right\}^-
= \min\{L_{\tau_k} + \ln \gamma - Z_{\tau_k}, 0 \}
\begin{cases}
    = L_{\tau_{k-1}} &\text{ if } Z_{\tau_k} = \ln \gamma, \\
    < L_{\tau_{k-1}} &\text{ otherwise},
\end{cases} 
$$
which validates \eqref{eqn:sup_L}. The same argument confirms \eqref{eqn:sup_U}.
\end{proof}

\begin{remark} \
\begin{itemize}
    \item The processes $L_t$ and $U_t$ act as \textbf{regulatory barriers} on the mispricing process $Z_t$. Specifically, at each arbitrage time $\tau_i$, $L_t$ prevents $Z_t$ from falling below the lower bound $\ln \gamma$, while $U_t$ prevents it from exceeding the upper bound $-\ln \gamma$. This ensures that the mispricing remains within the no-arbitrage interval $[\ln \gamma, -\ln \gamma]$.
    \item Fukasawa et al. \cite{fukasawa2023weighted, fukasawa2023modelfree} analyzed arbitrage in G3Ms by focusing on the dynamics of the reserve processes. Here, we adopt a different perspective by directly analyzing the mispricing process, which is compatible with their work through Equation \eqref{eqn:weight}. Our approach has the advantage of analyzing the growth rate of liquidity provider wealth, as discussed in Section \ref{sec:LP_growth_rate}. 
\end{itemize}
\end{remark}

\subsubsection{Continuous Arbitrage Model}
Now, let's consider the continuous-time limit, where arbitrageurs continuously monitor the market and react instantaneously to any arbitrage opportunities. This idealized scenario allows us to capture the dynamics of a highly efficient market with vigilant arbitrageurs.

Formally, we make the following assumption:
\begin{assumption} \label{assumption:conti_arbitrage}
Arbitrageurs continuously monitor and immediately act on arbitrage opportunities.
\end{assumption}

Under this assumption, arbitrageurs continuously adjust their trading strategies to maintain the G3M price within the no-arbitrage bounds. Their actions effectively prevent any significant mispricing from persisting.

The following proposition characterizes the dynamics of the mispricing process in this continuous setting:

\begin{proposition}[Continuous Mispricing Dynamics] \label{Prop:conti_mispricing}
Given that the market price $S_t$ is continuous and adheres to the initial condition $\gamma P_0 \le S_0 \le \gamma^{-1} P_0$, and under Assumptions \ref{assumption:no_noisetrading}, \ref{assumption:frictionless}, and \ref{assumption:conti_arbitrage}, we have:
\begin{enumerate}
\item[a)] The mispricing process, denoted as $Z_t$, can be decomposed into $Z_t = \ln S_t - \ln P_0 + L_t - U_t$ and takes value within the range $[\ln \gamma, -\ln \gamma]$ for all $t \ge 0$.
\item[b)] $L_t$ and $U_t$ are non-decreasing and continuous, with their initial values set at $L_0 = U_0 = 0$.
\item[c)] $L_t$ and $U_t$ increase only when $Z_t = - \ln \gamma$ and $Z_t = \ln \gamma$, respectively.
\end{enumerate}
Furthermore, $L_t$ and $U_t$ satisfy:
\begin{align}
L_t &= \sup_{0 \le s \le t} \left( - \ln(\gamma P_0) + \ln S_s - U_s \right)^-, \label{eqn:L} \\
U_t &= \sup_{0 \le s \le t} \left( \ln(\gamma^{-1} P_0) - \ln S_s - L_s \right)^-. \label{eqn:U}
\end{align}
\end{proposition}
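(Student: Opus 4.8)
The plan is to identify the triple $(Z_t, L_t, U_t)$ as the solution of a two-sided Skorokhod reflection problem and to obtain it as the vanishing-mesh limit of the discrete model of Proposition~\ref{Prop:discrete_mispricing}. Set $X_t := \ln S_t - \ln P_0$, which is continuous by the hypothesis on $S$. The initial condition $\gamma P_0 \le S_0 \le \gamma^{-1}P_0$ is exactly $X_0 = \ln(S_0/P_0) \in [\ln\gamma, -\ln\gamma]$, so that $Z_0 = X_0$ lies in the no-arbitrage interval with $L_0 = U_0 = 0$. The decomposition $Z_t = X_t + L_t - U_t$ of part~(a), the confinement $Z_t \in [\ln\gamma, -\ln\gamma]$, and the boundary-activation property of part~(c) together state precisely that $(Z, L, U)$ solves the Skorokhod problem for the input $X$ on $[\ln\gamma, -\ln\gamma]$, with $L$ the regulator enforcing the lower barrier and $U$ the regulator enforcing the upper barrier.

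To construct this solution I would approximate $X$ by discrete arbitrage models. Fix $t$ and a sequence of partitions $0 = \tau_0^{(n)} < \cdots < \tau_{m_n}^{(n)} \le t$ whose mesh tends to $0$; on each grid, Proposition~\ref{Prop:discrete_mispricing} furnishes nondecreasing step processes $L^{(n)}, U^{(n)}$ together with the discrete representations \eqref{eqn:sup_L} and \eqref{eqn:sup_U}. Because $S$, and hence $X$, is uniformly continuous on the compact interval $[0,t]$, the grid suprema $\sup_{\tau_i \le s}$ appearing in \eqref{eqn:sup_L} and \eqref{eqn:sup_U} converge to the continuous suprema $\sup_{0\le s \le t}$ of \eqref{eqn:L} and \eqref{eqn:U}. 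I would then show that $(L^{(n)}, U^{(n)})$ converges uniformly on $[0,t]$ to a limit $(L, U)$ solving the coupled system \eqref{eqn:L}, \eqref{eqn:U}, and set $Z_t := X_t + L_t - U_t$.

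Granting the convergence, properties (a)--(c) follow by standard inheritance. The confinement in (a) passes to the limit because each $Z^{(n)}$ is pinned to $[\ln\gamma, -\ln\gamma]$ at the grid points and the gaps between grid points vanish with the mesh. For (b), $L$ and $U$ are monotone as uniform limits of nondecreasing functions, and they are continuous because the two-sided reflection map is nonexpansive in the supremum norm, so the continuity of $X$ is transmitted to $L$ and $U$. Property (c) is read off from the representations: the running supremum defining $L_t$ can increase only at a time $s$ at which it is attained, and attainment forces $Z_s$ to sit on the barrier that $L$ controls; the symmetric statement holds for $U_t$.

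The main obstacle is the coupling between the two regulators: each of \eqref{eqn:L}, \eqref{eqn:U} contains the other process, so neither $L$ nor $U$ is a bare running maximum and the one-sided argument does not apply directly. Resolving this requires solving the coupled fixed point and proving that the discrete refinements converge to it. I expect to handle this either by a contraction or monotonicity estimate in the supremum norm for the map sending $(L, U)$ to the right-hand sides of \eqref{eqn:L} and \eqref{eqn:U}, exploiting that reflection against opposite barriers is order-preserving and $1$-Lipschitz, or by invoking the explicit two-sided Skorokhod map of Kruk--Lehoczky--Ramanan--Shreve to obtain existence and uniqueness and then matching it to \eqref{eqn:L}, \eqref{eqn:U} via the pathwise induction already used in the discrete case. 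This coupled reflection step, rather than any of the limiting or inheritance arguments, is where the real work lies.
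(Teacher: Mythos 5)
Your proposal is correct in substance, but it takes a genuinely different route from the paper, whose proof is only three sentences long. The paper does not discretize: following \cite[\S 3]{fukasawa2023modelfree}, it simply decomposes the log pool price as $\ln P_t = \ln P_0 + U_t - L_t$, where $U_t$ and $L_t$ are \emph{defined} as the cumulative log-price impacts of buy and sell arbitrage orders; under Assumption \ref{assumption:conti_arbitrage} these processes satisfy (a)--(c) essentially by construction (this is the modeling step), and the representations \eqref{eqn:L}--\eqref{eqn:U} then follow in one stroke from the characterization of the two-sided regulator in \cite[Proposition 2.4]{harrison2013brownian}. Your construction of $(Z,L,U)$ as the vanishing-mesh limit of the discrete model of Proposition \ref{Prop:discrete_mispricing} is a legitimate alternative, and your inheritance arguments for (a)--(b) and your ``supremum attained on the barrier'' argument for (c) are sound. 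What your route buys is a rigorous link between the discrete and continuous arbitrage models, so that Assumption \ref{assumption:conti_arbitrage} is derived as a limit of Assumption \ref{assumption:discrete_arbitrage} rather than postulated; this is more than the paper does. What it costs is that the uniform convergence of the coupled regulators $(L^{(n)},U^{(n)})$ is genuine work, and your own resolution of it --- invoking the two-sided Skorokhod map of Kruk--Lehoczky--Ramanan--Shreve for existence, uniqueness, and Lipschitz continuity --- plays exactly the role that \cite[Proposition 2.4]{harrison2013brownian} plays in the paper. Once that result is available, it can be applied directly to the continuous input $X_t=\ln S_t - \ln P_0$, so the discretization becomes an extra layer rather than a shortcut; you have correctly identified, though, that this coupled-reflection input is where all the mathematical content sits.

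One further point worth flagging: your proof, like the paper's, establishes that $L$ is the lower-barrier regulator (it increases only when $Z_t=\ln\gamma$) and $U$ the upper one (only when $Z_t=-\ln\gamma$). Part (c) as printed in the proposition states the opposite pairing. That pairing is inconsistent with \eqref{eqn:L}--\eqref{eqn:U} themselves, with the discrete model, with Corollary \ref{Cor:arbitrage_inventory}, and with how the boundary values are matched to $dL$ and $dU$ in the proof of Lemma \ref{lma:refl-diffusion-pde}; it is a typo in the statement, and the convention your argument produces is the correct one.
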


\begin{proof}
Following the approach in \cite[\S 3]{fukasawa2023modelfree}, we decompose the log pool price as $\ln P_t = \ln P_0 + U_t - L_t$, where $U_t$ and $L_t$ represent the cumulative changes in log price due to buy and sell arbitrage orders, respectively. These processes naturally satisfy properties (a) to (c). Equations \eqref{eqn:L} and \eqref{eqn:U} then follow directly from the characterization of reflected processes in \cite[Proposition 2.4]{harrison2013brownian}.
\end{proof}

\begin{remark}
The mispricing process $Z_t$ can be viewed as a stochastic storage system with finite capacity, where $L_t$ and $U_t$ act as reflecting barriers that keep $Z_t$ within the allowed range \cite[\S 2.3]{harrison2013brownian}. In the special case where the reference market price $S_t$ follows a geometric Brownian motion, $Z_t$ becomes a reflected Brownian motion \cite[\S 6]{harrison2013brownian}.
\end{remark}

\begin{definition}
We say time $t$ is a \emph{point of increase} (resp.\ \emph{point of decrease}) for $x_t$ if there exists $\delta > 0$ such that $x_{t-\delta_1} < x_{t+\delta_2}$ (resp.\ $x_{t-\delta_1} > x_{t+\delta_2}$) for all $\delta_1, \delta_2 \in (0, \delta]$.
We say that $x_t$ \emph{increases} (resp.\ \emph{decreases}) \emph{only when} $Z_t = a$ if, at every point of increase (resp.\ point of decrease) of $x_t$, we have $Z_t = a$.
\end{definition}


\begin{corollary}[Inventory Dynamics in Arbitrage] \label{Cor:arbitrage_inventory}
Under the assumptions of Proposition \ref{Prop:conti_mispricing}, the following hold: 
\begin{enumerate}
    \item[(a)] $x_t$ and $y_t$ are predictable processes.
    \item[(b)] $x_t$ increases only when $Z_t = \ln \gamma$ and decreases only when $Z_t = - \ln \gamma$; $y_t$ increases only when $Z_t = - \ln \gamma$ and decreases only when $Z_t = \ln \gamma$.
    \item[(c)] $\ln x_t$ and $\ln y_t$ are continuous and of bounded variation on bounded intervals of $[0, \infty)$.
    \item[(d)] The arbitrage inventory process can be characterized by
    \begin{align} \label{eqn:xy_UL}
    d \ln x_t
    &= \frac{1-w}{1-w + \gamma w} dL_t - \frac{\gamma(1-w)}{\gamma(1-w) + w} dU_t, \\
    d \ln y_t
    &= \frac{w}{\gamma(1-w) + w} dU_t - \frac{\gamma w}{1-w + \gamma w} dL_t. \notag
    \end{align}
\end{enumerate}
\end{corollary}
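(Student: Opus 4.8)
The plan is to translate the Skorokhod-type decomposition of the log pool price supplied by Proposition \ref{Prop:conti_mispricing} into the reserve processes, using the continuous trading identities \eqref{eqn:xy_buy} and \eqref{eqn:xy_sell}. The starting point is $\ln P_t = \ln P_0 + U_t - L_t$, which (as $P_t$ is of bounded variation) gives $\frac{dP_t}{P_t} = dU_t - dL_t$ with no quadratic-variation correction. I would first pin down the correspondence between the two reflection terms and the two trading directions. By the discrete update rules \eqref{eqn:P_update}--\eqref{eqn:Z_update} and their continuous analogue, reflection at the boundary $Z_t = \ln\gamma$ (pool overpriced, $S_t < P_t$) is a sell-$X$-to-pool arbitrage that pushes $P_t$ down, hence contributes to $L_t$; reflection at $Z_t = -\ln\gamma$ (pool underpriced) is a buy-$X$-from-pool arbitrage that pushes $P_t$ up, contributing to $U_t$. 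The key structural fact is that these two boundaries are distinct because $\gamma\in(0,1)$ forces $\ln\gamma \neq -\ln\gamma$; consequently $dU_t$ and $dL_t$ are carried on disjoint sets of times, and the two regime-dependent identities \eqref{eqn:xy_buy}, \eqref{eqn:xy_sell} never apply simultaneously.

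With this in hand I would derive (d) one regime at a time. On the buy set ($dL_t=0$, so $\frac{dP_t}{P_t}=dU_t$), substituting the constraint $w\,\frac{dx_t}{x_t} = -\gamma(1-w)\,\frac{dy_t}{y_t}$ from \eqref{eqn:xy_buy} into $\frac{dP_t}{P_t}=\frac{dy_t}{y_t}-\frac{dx_t}{x_t}$ gives $dU_t = \frac{w+\gamma(1-w)}{w}\frac{dy_t}{y_t}$, and solving for both ratios produces the coefficients $-\frac{\gamma(1-w)}{\gamma(1-w)+w}$ and $\frac{w}{\gamma(1-w)+w}$ of $dU_t$. On the sell set ($dU_t=0$, $\frac{dP_t}{P_t}=-dL_t$) the analogous substitution using \eqref{eqn:xy_sell} yields $\frac{1-w}{1-w+\gamma w}$ and $-\frac{\gamma w}{1-w+\gamma w}$ as the coefficients of $dL_t$. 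Summing the two disjointly supported contributions gives exactly the stated formulas \eqref{eqn:xy_UL}.

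The remaining items follow quickly once (d) is established. For (c), since $L_t$ and $U_t$ are non-decreasing and continuous by Proposition \ref{Prop:conti_mispricing}(b), equation \eqref{eqn:xy_UL} exhibits $\ln x_t$ and $\ln y_t$ as differences of non-decreasing continuous processes, hence continuous and of bounded variation on bounded intervals. For (b), I read the signs directly off \eqref{eqn:xy_UL}: in $d\ln x_t$ the coefficient of $dL_t$ is positive and that of $dU_t$ is negative, so $x_t$ increases only where $L_t$ grows (the $Z_t=\ln\gamma$ boundary) and decreases only where $U_t$ grows (the $Z_t=-\ln\gamma$ boundary), with $y_t$ behaving oppositely. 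For (a), continuity from (c) together with the adaptedness of $L_t$, $U_t$ (which are $\mathcal F_t$-measurable through the suprema in \eqref{eqn:L}, \eqref{eqn:U}) and of $S_t$ makes $x_t,y_t$ continuous and adapted, hence predictable.

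The step I expect to be the main obstacle is the rigorous justification of this regime decomposition in continuous time: the identities \eqref{eqn:xy_buy}--\eqref{eqn:xy_sell} are instantaneous relations that hold only along the reflecting motion, so I must argue that on the no-trade interior $\{\ln\gamma < Z_t < -\ln\gamma\}$ the reserves are frozen ($dx_t=dy_t=0$, consistent with $dU_t=dL_t=0$) and that all variation of $x_t$ and $y_t$ is concentrated on the two boundaries. This is exactly the content of the ``increases/decreases only when'' statements, and is cleanest to obtain either by passing to the limit from the discrete dynamics \eqref{eqn:Z_update} or by invoking the support property of the Skorokhod reflection in \cite[Proposition 2.4]{harrison2013brownian}; I would favor the latter for consistency with the proof of Proposition \ref{Prop:conti_mispricing}.
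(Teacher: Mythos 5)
Your proposal is correct and follows essentially the same route as the paper: the paper's proof likewise invokes Proposition \ref{Prop:conti_mispricing}(a)--(c) for continuity, predictability, and bounded variation of $U_t$, $L_t$, $\ln P_t$, and then cites \eqref{eqn:price_diff} --- which is precisely the result of substituting \eqref{eqn:xy_buy} and \eqref{eqn:xy_sell} into $\frac{dP}{P}=\frac{dy}{y}-\frac{dx}{x}$, the computation you carry out explicitly regime by regime. Your write-up simply fills in the details (disjoint supports of $dL_t$ and $dU_t$, the coefficient algebra) that the paper compresses into ``the assertions follow directly.''
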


\begin{proof}
By Proposition \ref{Prop:conti_mispricing} (a)--(c), $U_t$, $L_t$, and $\ln P_t$ are continuous, predictable processes of bounded variation. The assertions follow directly from \eqref{eqn:price_diff}. 
\end{proof}

The dynamics of liquidity growth based on the mispricing process can be described by incorporating \eqref{eqn:xy_UL} into \eqref{eqn:liquidity_diff}.

\begin{corollary}[Liquidity Dynamics in Arbitrage] \label{Cor:arbitrage_liquidity}
Under the assumptions of Proposition \ref{Prop:conti_mispricing}, the liquidity process $\ell_t$ is nondecreasing and predictable. Its rate of change is given by:
\begin{equation} \label{eqn:dl/l}
d \ln \ell_t= \frac{(1-\gamma)w(1-w)}{1-w+\gamma w} dL_t + \frac{(1 - \gamma)w(1-w)}{\gamma(1-w) + w} dU_t.
\end{equation}
\end{corollary}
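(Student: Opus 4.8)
The plan is to obtain the liquidity dynamics by combining the defining relation for $\ell$ with the inventory dynamics already established in Corollary \ref{Cor:arbitrage_inventory}. Since $\ell_t = x_t^w y_t^{1-w}$, taking logarithms gives $\ln \ell_t = w \ln x_t + (1-w) \ln y_t$, so that in differential form the first equality in \eqref{eqn:liquidity_diff} reads
$$
d \ln \ell_t = w \, d\ln x_t + (1-w) \, d\ln y_t.
$$
The strategy is then simply to substitute the expressions for $d\ln x_t$ and $d\ln y_t$ from \eqref{eqn:xy_UL} into this identity and regroup the resulting terms according to $dL_t$ and $dU_t$.

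Carrying this out, I would collect the coefficient of $dL_t$, namely
$$
\frac{w(1-w)}{1-w+\gamma w} - \frac{\gamma w(1-w)}{1-w+\gamma w} = \frac{(1-\gamma)w(1-w)}{1-w+\gamma w},
$$
and the coefficient of $dU_t$, namely
$$
\frac{w(1-w)}{\gamma(1-w)+w} - \frac{\gamma w(1-w)}{\gamma(1-w)+w} = \frac{(1-\gamma)w(1-w)}{\gamma(1-w)+w},
$$
which together yield precisely \eqref{eqn:dl/l}. The algebra is elementary because the two contributions to each coefficient already share a common denominator.

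For the qualitative claims, predictability of $\ell_t$ follows immediately, since $\ln \ell_t$ is a fixed linear combination of the predictable processes $\ln x_t$ and $\ln y_t$ (Corollary \ref{Cor:arbitrage_inventory}(a)), and continuity together with bounded variation is inherited from part (c). For monotonicity I would observe that for $w \in (0,1)$ and $\gamma \in (0,1)$ both coefficients in \eqref{eqn:dl/l} are strictly positive: the common numerator satisfies $(1-\gamma)w(1-w) > 0$ and both denominators are positive. Since $L_t$ and $U_t$ are non-decreasing by Proposition \ref{Prop:conti_mispricing}(b), each increment satisfies $dL_t, dU_t \ge 0$, hence $d\ln \ell_t \ge 0$ and $\ell_t$ is non-decreasing.

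There is no genuine obstacle here, as the result is a direct corollary whose only content is the bookkeeping of substituting \eqref{eqn:xy_UL} into \eqref{eqn:liquidity_diff} and checking signs. The one point meriting care is the sign/positivity verification underpinning the monotonicity claim, together with noting that the differential identity $d\ln \ell_t = w\,d\ln x_t + (1-w)\,d\ln y_t$ holds in the pathwise Stieltjes sense rather than merely formally — which is justified because all processes involved are continuous and of bounded variation by Corollary \ref{Cor:arbitrage_inventory}(c).
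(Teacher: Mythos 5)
Your proof is correct and follows exactly the route the paper intends: the paper's entire justification is the remark that \eqref{eqn:dl/l} is obtained ``by incorporating \eqref{eqn:xy_UL} into \eqref{eqn:liquidity_diff},'' which is precisely your substitution and regrouping of the $dL_t$ and $dU_t$ coefficients. Your added verifications of positivity of the coefficients (for monotonicity) and of predictability via Corollary \ref{Cor:arbitrage_inventory} are details the paper leaves implicit, but they match its logic.
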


\begin{remark}
The liquidity growth term in \eqref{eqn:dl/l} corresponds to the excess growth rate (see Section \ref{sec:market_model}) in Stochastic Portfolio Theory \cite[\S 1.1]{fernholz2002stochastic} as $\gamma \to 1$. 
\end{remark}

\subsection{LP Wealth Growth} \label{sec:LP_wealth_growth}
In this section, we analyze how the wealth of liquidity providers (LPs) evolves in the G3M under arbitrage-driven dynamics. To do this, we express the LP's wealth in terms of the reference market prices and the mispricing process.

The LP's wealth $V_t$ is defined as the total value of their holdings in the pool, denominated in terms of the reference market prices $S^X_t$ and $S^Y_t$ for assets $X$ and $Y$, respectively. This can be written as:
\begin{equation} \label{eqn:wealth_decomp}
    V_t = S^X_t x_t + S^Y_t y_t = S^Y_t (S_t x_t + y_t),
\end{equation}
where $S_t = S^X_t / S^Y_t$ is the relative price of asset $X$ in the reference market.

Using the relationship between the G3M pool price $P_t$ and the reserves (equation \eqref{eqn:weight}), we can derive bounds on the term $S_t x_t + y_t$: 
\begin{align*}
S_t x_t + y_t
&\geq \gamma P_t x_t + y_t
= \gamma \left( P_t x_t + y_t \right) + (1-\gamma) y_t
= \left(1 - w(1-\gamma) \right) \left( P_t x_t + y_t \right); \\
S_t x_t + y_t
&\leq \frac1\gamma P_t x_t + y_t
= \frac1\gamma \left( P_t x_t + y_t \right) + \left(1- \frac1\gamma \right) y_t
= \left\{1 + w\left(\frac1\gamma -1\right) \right\} \left( P_t x_t + y_t \right).
\end{align*}
These inequalities show that the value of the LP's holdings, $S_t x_t + y_t$, is always within a certain factor of the value based on the G3M pool price, $P_t x_t + y_t$. To quantify this relationship, we define the ratio:
$$
d_t := \ln \frac{S_t x_t + y_t}{P_t x_t + y_t}.
$$
From the above inequalities, we see that $d_t$ is bounded:
$$
1 - w(1-\gamma) \leq d_t \leq 1 + w(\gamma^{-1} -1).
$$

Now, we can express the logarithmic wealth of the LP as:
\begin{align} \label{eqn:log_wealth}
    \ln V_t &= \ln \ell_t + w \ln S^X_t + (1-w) \ln S^Y_t + d_t - Z_t \notag \\
    &= \frac{(1-\gamma)w(1-w)}{1-w+\gamma w} L_t + \frac{(1 - \gamma)w(1-w)}{\gamma(1-w) + w} U_t \notag \\
    &\quad + w \ln S^X_t + (1-w) \ln S^Y_t + d_t - Z_t,
\end{align}
Note that $d_t$ and $Z_t$ are bounded. This decomposition is the key to understanding the LP wealth growth in the G3M under arbitrage.

\section{LP Growth Rate} \label{sec:LP_growth_rate}
This section develops a methodology for calculating the ergodic growth rates of LP wealth components, denoted by $L_t$, $U_t$, and $V_t$. We begin by establishing the stochastic market model that governs the price dynamics of the assets involved.

\subsection{Market Model Setup} \label{sec:market_model}
We consider a market with two assets, $X$ and $Y$, whose price dynamics are governed by the following SDEs:
\begin{align} \label{eqn:market_dynamics}
d \ln S^X_t &= \mu^X_t dt + \xi^{XX}_t dB^X_t + \xi^{XY}_t dB^Y_t,\\
d \ln S^Y_t &= \mu^Y_t dt + \xi^{YX}_t dB^X_t + \xi^{YY}_t dB^Y_t, \notag
\end{align}
where $\mu^X_t$, $\mu^Y_t$, $\xi^{XX}_t$, $\xi^{XY}_t$, $\xi^{YX}_t$, and $\xi^{YY}_t$ are measurable and adapted, and $B^X_t$ and $B^Y_t$ are independent Brownian motions. The corresponding covariance processes for $\ln S^X_t$ and $\ln S^Y_t$ are given by
\begin{align} \label{eqn:covariance_processes}
    \sigma^{XX}_t dt &= d\langle \ln S^X, \ln S^X \rangle_t = \left( \xi^{XX}_t \xi^{XX}_t + \xi^{XY}_t \xi^{XY}_t \right) dt, \notag \\
    \sigma^{XY}_t dt &= d\langle \ln S^X, \ln S^Y \rangle_t = \left( \xi^{XX}_t \xi^{YX}_t + \xi^{XY}_t \xi^{YY}_t \right) dt = d\langle \ln S^Y, \ln S^X \rangle_t = \sigma^{YX}_t dt, \\
    \sigma^{YY}_t dt &= d\langle \ln S^Y, \ln S^Y \rangle_t = \left( \xi^{YX}_t \xi^{YX}_t + \xi^{YY}_t \xi^{YY}_t \right) dt. \notag
\end{align}

In this framework, the AMM price, denoted by $P_t$, is driven by the relative price of the two assets, $S_t = \frac{S^X_t}{S^Y_t}$. Applying It\^o's lemma to $S_t$, we obtain
\begin{align} \label{eqn:reference_SDE}
d \ln S_t
&= d \ln S^X_t - d \ln S^Y_t \notag \\
&= (\mu^X_t - \mu^Y_t) dt + (\xi^{XX}_t - \xi^{YX}_t) dB^X_t + (\xi^{XY}_t - \xi^{YY}_t) dB^Y_t \\
&:= \mu_t dt + \sigma_t dB^{XY}_t, \notag
\end{align}
where $\mu_t = \mu^X_t - \mu^Y_t$, $dB^{XY}_t = \frac{(\sigma^{XX}_t)^{\frac12}}{\sigma_t} dB^X_t - \frac{(\sigma^{YY}_t)^{\frac12}}{\sigma_t} dB^Y_t$, and $\sigma_t = \sqrt{\sigma^{XX}_t + \sigma^{YY}_t - 2 \sigma^{XY}_t}$.

Finally, we assume the following long-term limits exist:
\begin{align}\label{D:mu}
\mu_X := \lim_{T \to \infty} \frac{\Eof{\ln S^X_T}}{T}, \quad
\mu_Y := \lim_{T \to \infty} \frac{\Eof{\ln S^Y_T}}{T}.
\end{align}
These limits represent the long-term average growth rates of the asset prices. For convenience, we define $\mu = \mu_X - \mu_Y$, which represents the long-term average growth rate of the relative price $S_t$.

\subsection{Mispricing Process as a Reflected Diffusion}
To analyze the dynamics of mispricing, we model the log relative price $s_t := \ln S_t$ as a diffusion process:
\begin{equation} \label{eqn:log_reference_SDE}
    ds_t = \tilde{\mu}(t,s_t) dt + \tilde{\sigma}(t,s_t) dW_t,
\end{equation}
where $\tilde{\mu}(t,x)$ and $\tilde{\sigma}(t,x)$ both satisfy the standard Lipschitz and linear growth condition: for any $T>0$,
\begin{equation*}
    \sup_{t \in [0,T]} \sup_{x,y \in \mathbb{R}}\left[ \frac{|\tilde\mu(t,x)-\tilde\mu(t,y)| + |\tilde\sigma(t,x)-\tilde\sigma(t,y)|}{|x-y|} + \frac{|\tilde\mu(t,x)|+|\tilde\sigma(t,x)|}{1+|x|} \right] < \infty.
\end{equation*}
Fix a fee tier $\gamma \in (0,1)$ and set $c = -\ln \gamma$.
Suppose that $\ln(S_0/P_0) = z_0 \in (-c,c)$.
According to \cite[Proposition 2.4]{harrison2013brownian}, $s_t$ uniquely determines a process $Z_t$ that satisfies the following properties:
\begin{itemize}
    \item $Z_t$ takes values in $[-c,c]$;
    \item $Z_t = -\ln P_0 + s_t +L_t - U_t$; and
    \item $L_t$ and $U_t$ are two non-decreasing, continuous processes with $L_0=U_0=0$ that increase only when $Z_t=-c$ and $Z_t=c$, respectively.
\end{itemize}
Then, by uniqueness, $Z_t$ coincides with the mispricing process in Proposition \ref{Prop:conti_mispricing}.
In particular, the mispricing process $Z_t$ evolves according to
\begin{equation}\label{eqn:z-sde}
    \begin{cases}
        dZ_t = ds_t +dL_t - dU_t = \tilde{\mu}(t,s_t) dt + \tilde{\sigma}(t,s_t) dW_t + dL_t - dU_t,\\
        Z_0 = z_0 \in [-c,c].
    \end{cases}
\end{equation}
This shows that $Z_t$ is a diffusion process confined to the interval $[-c, c]$. The "reflection" at the boundaries captures the dynamics of mispricing being bounded within a certain range.
Throughout, we assume that $Z_t$ follows \eqref{eqn:z-sde} unless specified otherwise.

One would anticipate that $L_t$ and $U_t$ are boundary local times of $Z_t$. This is justified by the following lemma. For the basic theory on local times, see \cite{revuz1999continuous}.

\begin{lemma}\label{lem:local_time}
    For any $t>0$,
    \begin{align}
        &L_t = \lim_{\varepsilon \to 0^+} \frac{1}{2\varepsilon} \int_0^t {\bf 1}_{[-c,-c+\varepsilon)}(Z_r) \,d\langle Z\rangle_r,
        \label{eqn:L_t}\\
        &U_t = \lim_{\varepsilon \to 0^+}\frac{1}{2\varepsilon}\int_0^t {\bf 1}_{(c-\varepsilon,c]}(Z_r) \,d\langle Z\rangle_r,
        \label{eqn:U_t}
    \end{align}
    where $d\langle Z\rangle_r = \tilde\sigma^2(r,s_r) dr$ and the limits exist in $L^2$.
\end{lemma}

\begin{proof}
Let $f_\varepsilon(x)$ be the function such that $f_\varepsilon''(x) = \frac{1}{\varepsilon}{\bf 1}_{[-c,-c+\varepsilon)}(x)$ and $f_\varepsilon(-c)=f_\varepsilon'(-c)=0$.
By It\^o's lemma and the property that $L_t$ and $U_t$ increase only when $Z_t=-c$ and $Z_t=c$ respectively,
\begin{align*}
    df_\varepsilon(Z_t) &= df_\varepsilon'(Z_t) ds_t + f_\varepsilon'(-c) dL_t - f_\varepsilon'(c) dU_t + \frac{1}{2\varepsilon}{\bf 1}_{[-c,-c+\varepsilon)}(Z_t)\, d\langle Z\rangle_t.
\end{align*}
Since $f_\varepsilon'(-c)=0$ for all $\varepsilon>0$ and $f_\varepsilon'(c)=1$ for small enough $\varepsilon$, it follows that
\begin{align*}
    f_\varepsilon(Z_t) - f_\varepsilon(Z_0) = \int_0^t f_\varepsilon'(Z_r) ds_r - U_t + \frac{1}{2\varepsilon}\int_0^t {\bf 1}_{[-c,-c+\varepsilon)}(Z_r) \, d\langle Z \rangle_r.
\end{align*}
As $\varepsilon \to 0^+$, $f_\varepsilon(x) \to (x+c)^+$ and $f_\varepsilon'(x) \to {\bf 1}_{[-c, \infty)}(x)$. Since $Z_t \in [-c,c]$, we deduce that
\begin{align*}
    (Z_t+c) - (Z_0+c) = \int_0^t 1\, ds_r - U_t + \lim_{\varepsilon\to0^+}\frac{1}{2\varepsilon}\int_0^t {\bf 1}_{[-c,-c+\varepsilon)}(Z_r) \, d\langle Z \rangle_r,
\end{align*}
where the limit exists in $L^2$ thanks to It\^o's isometry.
This and the relation $dZ_t = ds_t + dL_t - dU_t$ from \eqref{eqn:z-sde} yield \eqref{eqn:L_t}.
Similarly, \eqref{eqn:U_t} can be obtained by considering the function $f_\varepsilon(x)$ such that $f_\varepsilon''(x)=\frac{1}{\varepsilon}{\bf 1}_{(c-\varepsilon,c]}(x)$ and $f_\varepsilon(c) = f_\varepsilon'(c) = 0$.
\end{proof}

Motivated by the study of the growth rate of LP's wealth, we examine quantities of the form
\begin{align}\label{quantity:g:int-f}
	g(Z_T) + \int_t^T f(r, Z_r) dr - \int_t^T \alpha_r dL_r + \int_t^T \beta_r dU_r.
\end{align}
The computation of the expectation of \eqref{quantity:g:int-f} simplifies when $\tilde\mu$ and $\tilde\sigma$ are constant because $Z_t$ has Markov property.
However, this is not the case when $\tilde\mu$ or $\tilde\sigma$ depend on $s_t$.
In order to overcome this technical obstacle, we appeal to the method of mimicking \cite{krylov1985once, gyongy1986mimicking} and adapt it to our current setting where local times are involved in the SDE.
Define
\begin{eqnarray} \label{eqn:mu_z-sig_z}
\mu(t, Z_t) = \Eof{\tilde{\mu}(t,s_t)|Z_t}, \qquad \sigma(t, Z_t) = \sqrt{\Eof{\tilde{\sigma}^2(t,s_t)|Z_t}}.
\end{eqnarray}
Consider the following SDE with reflection:
\begin{equation}\label{eqn:zhat-sde}
    \begin{cases}
        d\hat{Z}_t = \mu(t,\hat{Z}_t) dt + \sigma(t,\hat{Z}_t) dW_t + d\hat{L}_t - d\hat{U}_t,\\
        \hat{Z}_0 = z_0 \in (-c,c), \quad \hat{Z}_t \in [-c,c],
    \end{cases}
\end{equation}
where $\hat{L}_t$ and $\hat{U}_t$ are boundary local times of $\hat{Z}_t$ in the sense of Lemma \ref{lem:local_time}, i.e.,
\begin{align}
    &\hat{L}_t = \lim_{\varepsilon \to 0^+} \frac{1}{2\varepsilon} \int_0^t {\bf 1}_{[-c,-c+\varepsilon)}(\hat{Z}_r) \,d\langle \hat{Z}\rangle_r
    ,\label{eqn:L_t-hat}\\
    &\hat{U}_t = \lim_{\varepsilon \to 0^+}\frac{1}{2\varepsilon}\int_0^t {\bf 1}_{(c-\varepsilon,c]}(\hat{Z}_r) \,d\langle \hat{Z}\rangle_r.\label{eqn:U_t-hat}
\end{align}

\begin{lemma}
    The SDE \eqref{eqn:zhat-sde} has a unique strong solution $\hat{Z}_t$ with continuous sample paths and strong Markov property.
\end{lemma}

\begin{proof}
By Theorem 3.1 and Remark 3.3 of \cite{lions1984stochastic}, the Skorokhod problem
\begin{align}\label{skorokhod}
    \begin{cases}
        d\hat{Z}_t = \mu(t,\hat{Z}_t) dt + \sigma(t,\hat{Z}_t) dW_t - d k_t,\\
        \hat{Z}_0 = z_0 \in (-c,c),\\
        |k|_t = \int_0^t {\bf 1}_{\{\hat{Z}_r \in \partial[-c,c]\}} d|k|_r, \quad k_t = \int_0^t n(\hat{Z}_r) d|k|_r,
    \end{cases}
\end{align}
where $|k|_t$ denotes the total variation of $k_t$ and $n(x)$ denotes the outward unit normal for $x \in \partial[-c,c]$, has a unique solution $(\hat{Z}_t, k_t)$ where $\hat{Z}_t$ is a $[-c,c]$-valued continuous process and $k_t$ is a nonnegative continuous bounded variation process.
Since $\partial[-c,c] = \{-c,c\}$, $n(-c) = -1$, and $n(c) = 1$, we can write
\begin{align*}
    |k|_t = k^{-c}_t + k^c_t \quad \text{and} \quad
    k_t = k^c_t - k^{-c}_t,
\end{align*}
where $k^{-c}_t$ and $k^c_t$ are two non-decreasing continuous processes with $k^{-c}_0 = k^c_0 = 0$ that increase only when $\hat{Z}_t = -c$ and $\hat{Z}_t = c$, respectively. In particular,
\begin{align*}
    d\hat{Z}_t = \mu(t,\hat{Z}_t) dt + \sigma(t,\hat{Z}_t) dW_t + d k^{-c}_t - d k^c_t.
\end{align*}
Then, we may use It\^o's lemma as in the proof of Lemma \ref{lem:local_time} to show that $k^{-c}_t=\hat{L}_t$ and $k^c_t=\hat{U}_t$, and hence \eqref{eqn:zhat-sde} holds.
Finally, as we know the solution $\hat{Z}_t$ to \eqref{eqn:zhat-sde} is pathwise unique (because if $\tilde{Z}_t$ is another solution, then $(\tilde{Z}_t, \tilde{U}_t-\tilde{L}_t)$ with $\tilde{L}_t$ and $\tilde{U}_t$ being boundary local times of $\tilde{Z}_t$ is a solution to \eqref{skorokhod} which is known to be unique), it follows from the standard theory that the solution $\hat{Z}_t$ to \eqref{eqn:zhat-sde} is weakly unique and hence has strong Markov property; see \cite[\S I.4--I.5]{bass1998diffusions} or \cite[\S 5.3--5.4]{karatzas1991brownian}.
\end{proof}

In order to avoid technicalities and focus on the main ideas,
we impose hereafter the following assumptions:
\begin{enumerate}
    \item Strong ellipticity condition:
    \begin{align}\label{strong_ellipticity}
        \text{for any $T>0$, } \inf_{t \in [0,T]} \inf_{z \in [-c,c]} \sigma^2(t,z) \ge \varepsilon > 0;
    \end{align}
    \item Existence of marginal densities:
    \begin{align}\label{density}
        \text{for each $t>0$, $Z_t$ and $\hat{Z}_t$ have densities $p(t,z)$ and $\hat{p}(t,z)$, respectively.}
    \end{align}
    \item Regularities: $\mu$ is $C^1$ in $z$, $\sigma$ is $C^2$ in $z$, and $p, \hat{p} \in C^{1,2}((0,T]\times[-c,c])$.
\end{enumerate}

\begin{lemma}[Mimicking]\label{lem:z=zhat}
    For any $t>0$, the random variables $Z_t$ and $\hat{Z}_t$ share the same marginal distribution.
\end{lemma}

\begin{proof}
By It\^o's lemma, for any test function $f \in C^\infty([-c,c])$,
\begin{align}\begin{split}\label{eqn:ito:zhat}
    &\mathbb{E}\left[f(\hat{Z}_t)\right]-f(z_0)\\
    &= \mathbb{E}\left[ \int_0^t \left( f_z(\hat{Z}_r) \mu(r,\hat{Z}_r) + \frac12 f_{zz}(\hat{Z}_r) \sigma^2(r,\hat{Z}_r) \right) dr + \int_0^t f_z(\hat{Z}_r) d\hat{L}_r - \int_0^t f_z(\hat{Z}_r) d\hat{U}_r \right].
\end{split}\end{align}
Under \eqref{density}, it follows that for any smooth test function $f$ vanishing near $\pm c$,
\begin{align*}
    \int_{-c}^c f(z) \hat{p}(t,z) dz - f(z_0)
    = \int_0^t dr \int_{-c}^c dz \left( f_z(z) \mu(r,z) + \frac12 f_{zz}(z) \sigma^2(r,z) \right) \hat{p}(r,z),
\end{align*}
where apparently $\hat p(t,z)$ denotes the probability density for $\hat Z_t$.
Applying $\partial_t$ and integration by parts, we deduce that
\begin{align*}
    \int_{-c}^c f(z) \partial_t \hat{p}(t,z) dz = \int_{-c}^c \left(-f(z)\partial_z(\mu(t,z)\hat{p}(t,z)) + \frac12 f(z) \partial_{zz}(\sigma^2(t,z) \hat{p}(t,z)) \right) dz.
\end{align*}
This shows that $\hat{p}(t,z)$ solves the Fokker-Planck equation
\begin{align}\label{eqn:phat}
    \partial_t \hat{p}(t,z) = -\partial_z(\mu(t,z)\hat{p}(t,z)) + \frac12 \partial_{zz}(\sigma^2(t,z) \hat{p}(t,z)), \quad t > 0, z \in (-c,c).
\end{align}
To derive the boundary conditions for $\hat{p}$, we first note that for any test function $f$ (but this time not necessarily vanishing at $\pm c$),
\begin{align*}
    \mathbb{E}\left[ \int_0^t f_z(\hat{Z}_r) d\hat{L}_t \right]
    &= \lim_{\varepsilon \to 0^+} \frac{1}{2\varepsilon} \mathbb{E}\left[\int_0^t  {\bf 1}_{[-c,-c+\varepsilon)}(\hat{Z}_r) f_z(\hat{Z}_r) \sigma^2(r,\hat{Z}_r) dr \right]\\
    &= \lim_{\varepsilon \to 0^+} \frac{1}{2\varepsilon} \int_0^t \int_{-c}^{-c+\varepsilon} f_z(z) \sigma^2(r,z) \hat{p}(r,z)\, dz \, dr \\
    & = \frac12 \int_0^t f_z(-c) \sigma^2(r,-c) \hat{p}(r,-c)\, dr
\end{align*}
and similarly,
\begin{align*}
    \mathbb{E}\left[ \int_0^t f_z(\hat{Z}_r) d\hat{U}_t \right]
    = \frac12 \int_0^t f_z(c) \sigma^2(r,c) \hat{p}(r,c)\, dr.
\end{align*}
Then, we may plug these into \eqref{eqn:ito:zhat}, apply $\partial_t$, and integrate by parts again to deduce that
\begin{align*}
    \int_{-c}^c f\partial_t \hat{p}\, dz &= \int_{-c}^c \left(f_z\mu\hat{p} + \frac12 f_{zz} \sigma^2 \hat{p}\right)dz - \left[f_z \sigma^2 p\right]_{-c}^c\\
    &= \int_{-c}^c f \left( -\partial_z(\mu \hat{p}) + \frac12 \partial_{zz}(\sigma^2 \hat{p}) \right) dz + \left[ f\mu \hat{p} - \frac12 f \partial_z(\sigma^2 \hat{p}) \right]_{-c}^c.
\end{align*}
Since $\hat{p}$ satisfies the PDE \eqref{eqn:phat}, this implies that $\hat{p}$ satisfies the boundary condition
\begin{align}\label{eqn:phat:bc}
    \mu(t,z) \hat{p}(t,z) - \frac12 \partial_z(\sigma^2(t,z)\hat{p}(t,z)) = 0 \qquad \text{at $z = \pm c$.}
\end{align}
On the other hand, we may apply It\^o's lemma to $f(Z_t)$ and use Fubini's theorem and the tower property to show that
\begin{align*}
    &\mathbb{E}\left[ f(Z_t)\right] - f(z_0)\\
    &=\mathbb{E}\left[ \int_0^t \left( f_z(Z_r)\tilde\mu(r,s_r) + \frac12 f_{zz}(Z_r) \tilde\sigma^2(r,s_r) \right) dr + \int_0^t f_z(Z_r) dL_r - \int_0^t f_z(Z_r) dU_r \right]\\
    &= \mathbb{E}\left[ \int_0^t \left( f_z(Z_r)\mathbb{E}[\tilde\mu(r,s_r)|Z_r] + \frac12 f_{zz}(Z_r) \mathbb{E}[\tilde\sigma^2(r,s_r)|Z_r] \right) dr + \int_0^t f_z(Z_r) dL_r - \int_0^t f_z(Z_r) dU_r \right]\\
    &=\mathbb{E}\left[ \int_0^t \left( f_z(Z_r)\mu(r,Z_r) + \frac12 f_{zz}(Z_r) \sigma^2(r,Z_r) \right) dr + \int_0^t f_z(Z_r) dL_r - \int_0^t f_z(Z_r) dU_r \right].
\end{align*}
With this, we repeat the same calculations above and deduce that $p(t,z)$ also solves the same PDE \eqref{eqn:phat} with the same boundary condition \eqref{eqn:phat:bc} and the same initial condition $\delta_{z_0}$.
Strong ellipticity \eqref{strong_ellipticity} ensures uniqueness of the solution (see Appendix \ref{append:uniqueness}), and hence $p(t,z) = \hat{p}(t,z)$ for all $t>0$ and $z \in [-c,c]$. This completes the proof that $Z_t$ and $\hat{Z}_t$ share the same marginal distribution.
\end{proof}

Even though the law of the mimicking process $\hat{Z}$ may not be the same as $Z$ (particularly when $\tilde\mu$ and $\tilde\sigma$ are non-constant), the next lemma shows that the computation of the expectation in \eqref{quantity:g:int-f} may be reduced to the computation of the same quantity for the mimicking process $\hat{Z}_t$.

\begin{lemma}\label{lem:ez=ezhat}
    For any given deterministic functions $\alpha_t$ and $\beta_t$ of $t$, the following equality holds.     \begin{align}\begin{split}\label{eqn:ez=ezhat}
        &\quad\ \Eof{g(Z_T) + \int_t^T f(r, Z_r) dr - \int_t^T \alpha_r dL_r + \int_t^T \beta_r dU_r}\\
        &= \Eof{g(\hat{Z}_T) + \int_t^T f(r, \hat{Z}_r) dr - \int_t^T \alpha_r d\hat{L}_r + \int_t^T \beta_r d\hat{U}_r}.
    \end{split}\end{align}
\end{lemma}

\begin{proof}
According to Lemma \ref{lem:z=zhat}, $Z_t$ and $\hat{Z}_t$ have the same marginal distribution, it follows that $\mathbb{E}[g(Z_T)] = \mathbb{E}[ g(\hat{Z}_T)]$ and
\begin{align*}
    \mathbb{E}\left[ \int_t^T f(r,Z_r)dr \right]
    = \int_t^T \mathbb{E}[ f(r,Z_r)] dr
    = \int_t^T \mathbb{E}[ f(r,\hat{Z}_r)] dr
    =\mathbb{E}\left[ \int_t^T f(r,\hat{Z}_r)dr \right]
\end{align*}
by applying Fubini's theorem. 
Moreover, thanks to Lemma \ref{lem:local_time} and the tower property,
\begin{align*}
    \mathbb{E}\left[ \int_t^T \alpha_r dL_r \right]
    &= \lim_{\varepsilon \to 0^+} \frac{1}{2\varepsilon} \int_t^T \alpha_r \, \mathbb{E}\left[ {\bf 1}_{[-c,-c+\varepsilon)}(Z_r) \, \tilde\sigma^2(r,s_r) \right]dr\\
    &= \lim_{\varepsilon \to 0^+} \frac{1}{2\varepsilon} \int_t^T \alpha_r \, \mathbb{E}\left[ {\bf 1}_{[-c,-c+\varepsilon)}(Z_r) \, \mathbb{E}\left[ \tilde\sigma^2(r,s_r) |Z_r \right] \right]dr\\
    &= \lim_{\varepsilon \to 0^+} \frac{1}{2\varepsilon} \int_t^T \alpha_r \, \mathbb{E}\left[ {\bf 1}_{[-c,-c+\varepsilon)}(Z_r) \, \sigma^2(r,Z_r) \right]dr \\
    &= \mathbb{E}\left[ \int_t^T \alpha_r d\hat{L}_r \right],
\end{align*}
where we have used \eqref{eqn:L_t-hat} in the last equality. 
The same applies to the integral for $dU_t$.
This concludes the proof.
\end{proof}

The following lemma yields a Feynman-Kac type formula, which provides a useful tool for analyzing the conditional expectation in \eqref{eqn:u-stoch-rep} through the PDE \eqref{eqn:z-pde}.

\begin{lemma}\label{lma:refl-diffusion-pde}
For any given deterministic functions $\alpha_t$ and $\beta_t$ of $t$, the solution $u(t,z)$ to the following parabolic PDE
\begin{equation} \label{eqn:z-pde}
u_t + \frac{\sigma^2(t,z)}2 u_{zz} + \mu(t,z) u_z + f(t,z) = 0
\end{equation}
with boundary condition
\[
    u_z(t, -c) = \alpha_t, \qquad u_z(t, c) = \beta_t
\]
and terminal condition $u(T, z) = g(z)$ has the following stochastic representation
\begin{equation} \label{eqn:u-stoch-rep}
u(t, z) = \mathbb{E}\,\left[g(\hat{Z}_T) + \int_t^T f(r, \hat{Z}_r) dr - \int_t^T \alpha_r d\hat{L}_r + \int_t^T \beta_r d\hat{U}_r\, \Big|\, \hat{Z}_t = z \right],
\end{equation}
where $\hat{Z}_t$ is the reflected diffusion given by \eqref{eqn:zhat-sde}.
\end{lemma}

\begin{proof}
It\^o's lemma implies that
\begin{align*}
& u(T, \hat{Z}_T) - u(t, \hat{Z}_t) \\
=& \int_t^T \left( u_t + \mu(r,\hat{Z}_r) u_z + \frac12 \sigma^2(r,\hat{Z}_r) u_{zz} \right)dr + \int_t^T \sigma(r,\hat{Z}_r) u_z dW_r + \int_t^T u_z d\hat{L}_r - \int_t^T u_z d\hat{U}_r,
\end{align*}
where $u_t$, $u_z$ and $u_{zz}$ are shorthands for $u_t(r,\hat{Z}_r)$, $u_z(r, \hat{Z}_r)$ and $u_{zz}(r, \hat{Z}_r)$, respectively.
Next, we take conditional expectation with respect to the filtration $\hat{\mathscr{F}}_t$ of $\hat{Z}_t$ on both sides of the last equation.
Using the boundary and terminal conditions of $u$ and the martingale property of $\int_0^t \sigma(r,\hat{Z}_r)u_z dW_r$, we deduce that
\begin{eqnarray*}
&& \mathbb{E}\left[g(\hat{Z}_T)\mid \hat{\mathscr{F}}_t \right] - u(t, \hat{Z}_t) \\
&=& \mathbb{E}\left[\int_t^T \left( u_t + \mu(r,\hat{Z}_r) u_z + \frac12 \sigma^2(r,\hat{Z}_r) u_{zz}\right) dr + \int_t^T u_z d\hat{L}_r - \int_t^T u_z d\hat{U}_r \, \Big| \, \hat{\mathscr{F}}_t\right]\\
&=& \mathbb{E}\left[-\int_t^T f(r, \hat{Z}_r)dr + \int_t^T \alpha_r d\hat{L}_r - \int_t^T \beta_r d\hat{U}_r \, \Big| \, \hat{\mathscr{F}}_t\right]
\end{eqnarray*}
since $u$ satisfies the PDE \eqref{eqn:z-pde}.
Thanks to the Markov property of $\hat{Z}_t$, we may replace $\hat{\mathscr{F}}_t$ by $\hat{Z}_t$ and rearrange terms to obtain the stochastic representation \eqref{eqn:u-stoch-rep}.
\end{proof}

In general, the solution to the terminal-boundary value problem in Lemma \ref{lma:refl-diffusion-pde} admits no simple analytical expression. We will focus on two special cases where the eigensystem associated with the operator in \eqref{eqn:u-pde} is readily accessible. To handle these cases, we introduce the following lemma, which addresses parabolic PDEs with non-zero Neumann boundary conditions.

\begin{lemma} \label{lma:u-v}
The solution $u$ to the parabolic PDE
\begin{equation} \label{eqn:u-pde}
u_t + \frac{\sigma^2(t,x)}2 u_{xx} + \mu(t,x) u_x = 0
\end{equation}
with boundary conditions $u_x(t, -c) = a$, $u_x(t, c) = b$ for some constants $a$, $b$, and terminal condition $u(T, x) = 0$ is given by
\begin{equation}
u(t, x) = v(t, x) + \frac{b-a}{4c} x^2 + \frac{b+a}2 x,
\end{equation}
where $v$ is the solution to the following inhomogeneous parabolic PDE
\[
v_t + \frac{\sigma^2(t,x)}2 v_{xx} + \mu(t,x) v_x + \frac{\sigma^2(x)}2 \frac{b-a}{2c} + \mu(x) \left( \frac{b - a}{2c} x + \frac{b + a}2\right) = 0
\]
with Neumann boundary conditions $v_x(t, -c) = v_x(t, c) = 0$ and terminal condition
\[
v(T, x) = -\frac{b-a}{4c} x^2 - \frac{b+a}2 x.
\]
\end{lemma}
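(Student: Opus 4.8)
The plan is to homogenize the Neumann boundary data by subtracting an explicit quadratic that carries the inhomogeneous boundary conditions, thereby reducing the problem to one with zero Neumann data at the cost of an inhomogeneous source term and a modified terminal condition. This is the standard lifting trick for boundary value problems, and the only real work is choosing the correct quadratic and checking it matches.

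First I would set $\phi(x) = \frac{b-a}{4c}x^2 + \frac{b+a}2 x$ and verify that $\phi$ reproduces the prescribed boundary derivatives. Since $\phi'(x) = \frac{b-a}{2c}x + \frac{b+a}2$, a direct evaluation gives $\phi'(-c) = -\frac{b-a}2 + \frac{b+a}2 = a$ and $\phi'(c) = \frac{b-a}2 + \frac{b+a}2 = b$, so $\phi$ satisfies exactly the Neumann conditions imposed on $u$ in \eqref{eqn:u-pde}. Defining $v := u - \phi$, it follows immediately that $v_x(t,-c) = v_x(t,c) = 0$, i.e.\ $v$ carries homogeneous Neumann data.

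Next I would substitute $u = v + \phi$ into the PDE. Because $\phi$ is independent of $t$ we have $u_t = v_t$, while $u_x = v_x + \phi'$ and $u_{xx} = v_{xx} + \phi''$ with $\phi'' = \frac{b-a}{2c}$ constant. Collecting terms yields
$$
v_t + \frac{\sigma^2(t,x)}2 v_{xx} + \mu(t,x) v_x + \frac{\sigma^2(t,x)}2 \frac{b-a}{2c} + \mu(t,x)\left( \frac{b-a}{2c} x + \frac{b+a}2 \right) = 0,
$$
which is precisely the inhomogeneous equation stated for $v$. Finally, the terminal condition $u(T,x) = 0$ transfers to $v(T,x) = -\phi(x) = -\frac{b-a}{4c}x^2 - \frac{b+a}2 x$, completing the identification.

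There is no genuine obstacle here: the content reduces to the observation that an affine Neumann datum on the symmetric interval $[-c,c]$ can be matched by a single quadratic, followed by a routine chain-rule substitution. The one point meriting a moment of care is the algebra confirming $\phi'(\pm c) = a, b$, where the symmetry of $[-c,c]$ is exactly what lets the even part $\frac{b-a}{4c}x^2$ and the odd part $\frac{b+a}2 x$ of $\phi$ decouple the two boundary conditions cleanly; on an asymmetric interval the correction would not be this simple.
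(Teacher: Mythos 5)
Your proof is correct and is precisely the ``straightforward calculation'' that the paper's one-line proof alludes to: lifting the inhomogeneous Neumann data by the quadratic $\phi(x) = \frac{b-a}{4c}x^2 + \frac{b+a}{2}x$ and substituting $u = v + \phi$ into the PDE. Note also that your version writes the source coefficients as $\sigma^2(t,x)$ and $\mu(t,x)$, which is the correct reading of the (apparently typographical) $\sigma^2(x)$, $\mu(x)$ appearing in the lemma's statement of the equation for $v$.
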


\begin{proof}
The result follows from straightforward calculations.
\end{proof}

\subsection{Time-homogeneous reflected diffusion} \label{sec:time-homo}
This section focuses on the scenario where the reflected diffusion $\hat{Z}_t$ defined by \eqref{eqn:zhat-sde} is time-homogeneous. This means that the conditional drift $\mu$ and volatility $\sigma$ in \eqref{eqn:mu_z-sig_z} are independent of time. In this case, we can leverage the eigensystem of the infinitesimal generator of $\hat{Z}_t$ to simplify the analysis and express the conditional expectation in \eqref{eqn:u-stoch-rep}.

In the time-homogeneous case, the infinitesimal generator is $\cL = \frac{1}{2}\sigma^2(x) \p_x^2 + \mu(x) \p_x$ with Neumann boundary condition. In the Sturm-Liouville form (see \eqref{eqn:L-sl}),
$$
\cL = \frac1{\omega(x)} \frac{\p}{\p x} \left(p(x)\frac{\p}{\p x}\right),
$$
where
$$
p(x) := e^{\int \frac{2\mu(x)}{\sigma^2(x)} dx}, \qquad \omega(x) := \frac{2}{\sigma^2(x)} e^{\int \frac{2\mu(x)}{\sigma^2(x)} dx}.
$$
Here, $\omega$ represents the \textbf{speed measure} from classical diffusion theory, which measures how quickly the diffusion moves through different regions of the state space.

Let $\{(\lambda_n, e_n(x) )\}_{n=0}^\infty$ be the normalized eigensystem associated with $\cL$ with Neumann boundary condition in the interval $[-c,c]$. This eigensystem provides a convenient basis for analysis. The following theorem presents an eigensystem expansion for conditional expectations of integrals with respect to the boundary local times.

\begin{theorem}
For a time-homogeneous reflected diffusion $\hat{Z}_t$ defined by \eqref{eqn:zhat-sde} and any constants $a$ and $b$, the conditional expectation
\begin{equation} \label{eqn:cond-exp-LU}
\mathbb{E}\left[ - \int_t^T a d\hat{L}_\tau + \int_t^T b d\hat{U}_\tau \,\Big|\, \hat{Z}_t = x \right]
\end{equation}
admits the following eigensystem expansion associated with $\cL$ with Neumann boundary condition:
\begin{eqnarray}
&& \mathbb{E}\left[ - \int_t^T a d\hat{L}_\tau + \int_t^T b d\hat{U}_\tau \,\Big|\, \hat{Z}_t = x \right] \label{eqn:u-eig-expan} \\
&=& \frac{b-a}{4c} x^2 + \frac{b+a}2 x + \xi_0 (T - t) - \eta_0 + \sum_{n=1}^\infty \left\{\frac{\xi_n}{\lambda_n} \left[1 - e^{-\lambda_n(T - t)} \right] - \eta_n e^{-\lambda_n(T - t)} \right\} e_n(x), \nonumber
\end{eqnarray}
where the coefficients $\xi_n$ and $\eta_n$, for $n \geq 0$, are given by
\begin{equation} \label{eqn:xis-etas}
\xi_n = \int_{-c}^c h(x) e_n(x) \omega(x) dx, \qquad
\eta_n = \int_{-c}^c k(x) e_n(x) \omega(x) dx
\end{equation}
with
$$
h(x) = \frac{\sigma^2(x)}2 \frac{b-a}{2c} + \mu(x) \left( \frac{b - a}{2c} x + \frac{b + a}2\right), \qquad
k(x) = -\frac{b-a}{4c} x^2 - \frac{b+a}2 x.
$$
\end{theorem}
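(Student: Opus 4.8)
The plan is to recognize the conditional expectation \eqref{eqn:cond-exp-LU} as a special case of the Feynman--Kac representation in Lemma \ref{lma:refl-diffusion-pde}. Taking $f \equiv 0$, $g \equiv 0$, and the constant boundary data $\alpha_\tau \equiv a$, $\beta_\tau \equiv b$, the representation \eqref{eqn:u-stoch-rep} shows that \eqref{eqn:cond-exp-LU} equals $u(t,x)$, where $u$ solves the terminal--boundary value problem \eqref{eqn:u-pde} with $u_x(t,-c) = a$, $u_x(t,c) = b$, and $u(T,x) = 0$. This is exactly the problem treated in Lemma \ref{lma:u-v}. Applying that lemma, I would write $u(t,x) = v(t,x) + \phi(x)$ with the deterministic corrector $\phi(x) = \frac{b-a}{4c}x^2 + \frac{b+a}{2}x$ chosen so that $\phi'(\pm c)$ absorbs the inhomogeneous Neumann data; the remainder $v$ then solves the inhomogeneous PDE $v_t + \cL v + h = 0$ under homogeneous Neumann conditions $v_x(t,\pm c) = 0$, with the time-independent source $h$ and terminal data $v(T,\cdot) = k$ displayed in the statement.

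Because this section assumes the reflected diffusion is time-homogeneous, $\cL$ no longer depends on $t$, which is precisely what opens the door to a spectral treatment. Written in Sturm--Liouville form $\cL = \omega^{-1}\,\p_x(p\,\p_x\,\cdot\,)$, the operator is self-adjoint on $L^2([-c,c],\omega\,dx)$ under Neumann boundary conditions, and the associated regular Sturm--Liouville problem furnishes a complete orthonormal eigenbasis $\{e_n\}_{n\ge 0}$ with $\cL e_n = -\lambda_n e_n$ and $0 = \lambda_0 < \lambda_1 \le \cdots$, where $e_0$ is constant. I would then expand the time-independent data along this basis as $h = \sum_n \xi_n e_n$ and $k = \sum_n \eta_n e_n$, whose coefficients are exactly the weighted inner products \eqref{eqn:xis-etas}, and seek $v(t,x) = \sum_n v_n(t)\,e_n(x)$.

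Substituting this expansion into the PDE for $v$ and using $\cL e_n = -\lambda_n e_n$ decouples it into the family of scalar terminal-value ODEs $\dot v_n(t) = \lambda_n v_n(t) - \xi_n$ with $v_n(T) = \eta_n$. The zero mode $n=0$ is special: because $\lambda_0 = 0$ its equation integrates to a term linear in $T-t$ with slope $\xi_0$, and it is precisely this secular term that produces the long-run, linear-in-horizon growth recorded in \eqref{eqn:u-eig-expan}. For each $n \ge 1$ the solution relaxes exponentially at rate $\lambda_n$, combining a forced steady part proportional to $\xi_n/\lambda_n$ with a transient proportional to $e^{-\lambda_n(T-t)}$ that is fixed by the terminal value $\eta_n$. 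Reassembling $u = \phi + \sum_n v_n e_n$ and collecting terms then yields the stated expansion \eqref{eqn:u-eig-expan}.

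The routine parts are the ODE integrations and the bookkeeping of inner products; the genuine work is justifying the spectral machinery. I expect the main obstacle to be establishing rigorously that the Neumann problem for $\cL$ is a regular (or limit-circle) Sturm--Liouville problem with discrete spectrum and a complete eigenbasis in $L^2(\omega)$, which requires controlling $p$ and $\omega$, i.e.\ suitable non-degeneracy and integrability of $\sigma$ and $\mu/\sigma^2$ on $[-c,c]$, and then certifying that the formal term-by-term operations (expanding $h$ and $k$, applying $\cL$ under the sum, and differentiating the series in $t$) are valid. This amounts to verifying convergence of the eigenexpansions in a norm strong enough that $v = \sum_n v_n e_n$ is a genuine solution of the inhomogeneous Neumann problem, after which uniqueness identifies it with the $u$ delivered by Lemma \ref{lma:u-v}.
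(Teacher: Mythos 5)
Your proposal is correct and is essentially the paper's own argument: identify the expectation with the solution of the Neumann terminal--boundary value problem via Lemma \ref{lma:refl-diffusion-pde} (taking $f \equiv 0$, $g \equiv 0$, $\alpha_\tau \equiv a$, $\beta_\tau \equiv b$), remove the inhomogeneous boundary data with Lemma \ref{lma:u-v}, and expand the remainder $v$ in the Neumann Sturm--Liouville eigenbasis --- the only difference being that you re-derive the decoupled mode ODEs $\dot v_n = \lambda_n v_n - \xi_n$, $v_n(T) = \eta_n$, where the paper simply quotes the ready-made expansion from Appendix \ref{sec:S-L}. One small point in your favor: solving those ODEs gives $v_n(t) = \frac{\xi_n}{\lambda_n}\left(1 - e^{-\lambda_n (T-t)}\right) + \eta_n e^{-\lambda_n (T-t)}$ with a \emph{plus} sign on $\eta_n$ (as the terminal condition $u(T,\cdot) = 0$ forces, since $\phi + k = 0$), so the minus signs in front of $\eta_0$ and $\eta_n$ in the paper's displayed expansion are inconsistent with its own definition $\eta_n = \int_{-c}^c k(x) e_n(x) \omega(x)\, dx$ and with the formula in Appendix \ref{sec:S-L}; your version is the correct one, and the discrepancy is harmless downstream because the long-run rate in Corollary \ref{Cor:t-avg-lim-LU} depends only on $\xi_0$.
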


\begin{proof}
Let $u(t,z)$ denote the conditional expectation in \eqref{eqn:cond-exp-LU}.
Then, by Lemma \ref{lma:refl-diffusion-pde}, $u$ satisfies the PDE
\[
u_t + \frac{\sigma^2(x)}2 u_{xx} + \mu(x) u_x = 0
\]
with boundary condition $u_x(t,-c)=a$, $u_x(t,c)=b$, for all $t < T$, and terminal condition $u(T,x) = 0$. Thus, by Lemma \ref{lma:u-v}, the solution $u$ can be written as
\begin{equation*}
u(t, x) = v(t, x) + \frac{b-a}{4c} x^2 + \frac{b+a}2 x,
\end{equation*}
where $v$ is the solution to the following inhomogeneous parabolic PDE
\[
v_t + \frac{\sigma^2(x)}2 v_{xx} + \mu(x) v_x + \frac{\sigma^2(x)}2 \frac{b-a}{2c} + \mu(x) \left( \frac{b - a}{2c} x + \frac{b + a}2\right) = 0
\]
with Neumann boundary conditions $v_x(t, -c) = v_x(t, c) = 0$ and terminal condition
\[
v(T, x) = -\frac{b-a}{4c} x^2 - \frac{b+a}2 x.
\]
Hence, the eigensystem expansion for $v$ as in \eqref{eqn:u-eig-expan} in Section \ref{append:S-L} is given by
\[
v(t,x) = \xi_0 (T - t) - \eta_0
+ \sum_{n=1}^\infty \left\{\frac{\xi_n}{\lambda_n} \left[1 - e^{-\lambda_n(T - t)} \right] - \eta_n e^{-\lambda_n(T - t)} \right\} e_n(x),
\]
where the coefficients $\xi_n$'s and $\eta_n$'s are defined in \eqref{eqn:xis-etas}. Finally, since
\[
\mathbb{E}\left[- \int_t^T a d\hat{L}_\tau + \int_t^T b d\hat{U}_\tau \, \Big| \, \hat{Z}_t = x \right]  = u(t,x) = \frac{b-a}{4c} x^2 + \frac{b+a}2 x  + v(t,x),
\]
it follows that the eigensystem expansion \eqref{eqn:u-eig-expan} holds.
\end{proof}

This theorem provides a concrete way to calculate the conditional expectation in \eqref{eqn:cond-exp-LU} using the eigensystem of $\mathcal{L}$. This result enables us to analyze the long-term behavior of the mispricing process and its impact on LP's wealth.

\begin{corollary}\label{Cor:t-avg-lim-LU}
As $T \to \infty$, the limit of time-average of the expectation of \eqref{eqn:cond-exp-LU} is given by
\begin{align*}
\lim_{T\to\infty}& \frac1{T - t} \mathbb{E} \left[ -a (L_T - L_t) + b (U_T - U_t) \right] \\
&= \xi_0 = \frac{ \int_{-c}^c \left\{\frac{\sigma^2(x)}2 \frac{b-a}{2c} + \mu(x) \left( \frac{b-a}{2c} x + \frac{a+b}2\right)\right\} \omega(x) dx}{\int_{-c}^c \omega(x) dx}.
\end{align*}
\end{corollary}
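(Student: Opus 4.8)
The plan is to take the eigensystem expansion \eqref{eqn:u-eig-expan} established in the preceding theorem, divide it by $T-t$, and send $T\to\infty$, isolating the single term that grows linearly in $T-t$. First I would note that, since $a$ and $b$ are constants, $\Etof{-\int_t^T a\,dL_\tau + \int_t^T b\,dU_\tau} = \Etof{-a(L_T-L_t)+b(U_T-U_t)}$, so the corollary is exactly the time-average of the left-hand side of \eqref{eqn:u-eig-expan}. On the right-hand side of that expansion, the quadratic-plus-linear piece $\frac{b-a}{4c}x^2+\frac{b+a}2 x$ and the constant $-\eta_0$ are independent of $T$, hence vanish after division by $T-t$; the term $\xi_0(T-t)$ contributes precisely $\xi_0$. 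Everything therefore reduces to showing that the series
$$
S(T,x) := \sum_{n=1}^\infty\left\{\frac{\xi_n}{\lambda_n}\bigl[1-e^{-\lambda_n(T-t)}\bigr] - \eta_n e^{-\lambda_n(T-t)}\right\}e_n(x),
$$
divided by $T-t$, tends to $0$.

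The key step is to bound $S(T,\cdot)$ uniformly in $T$. For every $n\ge 1$ the Neumann eigenvalue satisfies $\lambda_n>0$, so $0\le 1-e^{-\lambda_n(T-t)}\le 1$ and $0\le e^{-\lambda_n(T-t)}\le 1$; hence the $n$-th coefficient is bounded in absolute value by $\frac{|\xi_n|}{\lambda_n}+|\eta_n|$, uniformly in $T$. Since $h$ and $k$ from \eqref{eqn:xis-etas} lie in $L^2([-c,c],\omega)$ (here $k$ is a polynomial and $h$ is controlled by $\sigma^2$ and $\mu$ against the speed measure), Parseval gives $\sum_n\xi_n^2<\infty$ and $\sum_n\eta_n^2<\infty$, and because $\lambda_n\to\infty$ one also gets $\sum_n\xi_n^2/\lambda_n^2<\infty$. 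Orthonormality of the $e_n$ in $L^2(\omega)$ then yields the uniform-in-$T$ estimate $\|S(T,\cdot)\|_{L^2(\omega)}^2 \le 2\sum_n\xi_n^2/\lambda_n^2 + 2\sum_n\eta_n^2 < \infty$. Dividing by $T-t$ forces $S(T,\cdot)/(T-t)\to 0$, so the time-average limit equals $\xi_0$.

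It remains to make $\xi_0$ explicit. The $n=0$ mode is the constant function, the Neumann eigenfunction for $\lambda_0=0$; its speed-measure normalization $\int_{-c}^c e_0^2\,\omega\,dx=1$ forces $e_0=\bigl(\int_{-c}^c\omega\bigr)^{-1/2}$. Computing the constant-mode contribution $\xi_0(T-t)e_0(x)$ with $\xi_0=\int_{-c}^c h\,e_0\,\omega\,dx$ from \eqref{eqn:xis-etas}, the two $e_0$ factors combine into $\bigl(\int_{-c}^c\omega\bigr)^{-1}$, giving the effective linear coefficient
$$
\frac{\int_{-c}^c h(x)\,\omega(x)\,dx}{\int_{-c}^c\omega(x)\,dx},
$$
which, upon inserting $h(x)=\frac{\sigma^2(x)}2\frac{b-a}{2c}+\mu(x)\bigl(\frac{b-a}{2c}x+\frac{b+a}2\bigr)$, is exactly the claimed ratio of integrals.

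The main obstacle I anticipate is the rigorous control of $S(T,x)$ and the legitimacy of interchanging the infinite summation with the $T\to\infty$ limit. The cleanest route is to phrase the limit in $L^2(\omega)$, where the uniform bound above applies directly; a pointwise statement for fixed initial value $Z_t=x$ would additionally require uniform bounds on the Neumann eigenfunctions $e_n$, available through standard Sturm–Liouville/Weyl asymptotics on the bounded interval $[-c,c]$.
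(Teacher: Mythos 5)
Your proposal is correct and follows essentially the same route as the paper: the corollary is obtained by dividing the eigensystem expansion \eqref{eqn:u-eig-expan} by $T-t$, letting $T\to\infty$, and identifying the surviving linear coefficient with the weighted average $\int_{-c}^c h\,\omega\,dx / \int_{-c}^c \omega\,dx$ (the paper does this via the appendix statement $\lim_{T\to\infty} u(t,x)/(T-t) = \xi_0$). Your uniform-in-$T$ $L^2(\omega)$ bound on the tail series, and your explicit reconciliation of the Fourier coefficient $\xi_0$ from \eqref{eqn:xis-etas} with the normalized ratio appearing in the corollary via the two factors of $e_0$, simply make rigorous steps the paper leaves implicit.
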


\begin{proof}
Thanks to \eqref{eqn:u-eig-expan}, we have
\begin{align*}
    \lim_{T\to\infty}\frac{1}{T-t} \mathbb{E}\left[ -a(\hat{L}_T-\hat{L}_t) + b(\hat{U}_T-\hat{U}_t)  \mid \hat{Z}_t \right] = \xi_0 \quad \text{a.s.}
\end{align*}
We may apply expectation to the above and apply dominated convergence theorem to pass the limit under expectation, and then use Lemma \ref{lem:ez=ezhat} to finish the proof.
\end{proof}

This corollary, which can also be found in \cite{glynn2013central}, provides a concise expression for the long-term average behavior of the processes $L_t$ and $U_t$, which regulate the reflection of the mispricing process at the boundaries.

Leveraging this result, we can determine the long-term expected logarithmic growth rate of an LP's wealth in the G3M model.

\begin{theorem}[Log Growth Rate of LP Wealth in G3M Time-homogeneous] \label{thm:growth_rate}
Assuming that the functions $\mu$ and $\sigma$ defined in \eqref{eqn:mu_z-sig_z} are independent of $t$, the long-term expected logarithmic growth rate of an LP's wealth in a G3M is given by
\begin{equation} \label{eqn:growth_rate}
\lim_{T \to \infty} \frac{\mathbb{E}[\ln V_T]}{T}
= w \mu_X + (1-w) \mu_Y + \frac{(1-\gamma)w(1-w)}{1-w+\gamma w} \alpha + \frac{(1 - \gamma)w(1-w)}{\gamma(1-w) + w} \beta,
\end{equation}
where $\mu_X$ and $\mu_Y$ are the long-term growth rates of the asset prices defined in \eqref{D:mu}, and
\begin{align*}
&\alpha = \frac{\int_{\ln \gamma}^{-\ln \gamma} -\left\{\frac{\sigma^2(x)}{4 \ln \gamma} + \mu(x) \left( \frac{1}{2\ln \gamma} x - \frac12 \right)\right\} \omega(x) dx}{\int_{\ln \gamma}^{-\ln \gamma} \omega(x) dx}, \\
&\beta = \frac{\int_{\ln \gamma}^{-\ln \gamma} -\left\{\frac{\sigma^2(x)}{4 \ln \gamma} + \mu(x) \left( \frac{1}{2\ln \gamma} x + \frac12 \right)\right\} \omega(x) dx}{\int_{\ln \gamma}^{-\ln \gamma} \omega(x) dx}, \\
&\omega(x) = \frac{2}{\sigma^2(x)} e^{\int \frac{2\mu(x)}{\sigma^2(x)} dx}.
\end{align*}
\end{theorem}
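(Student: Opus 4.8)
The proof starts from the logarithmic wealth decomposition \eqref{eqn:log_wealth}, which writes $\ln V_T$ as a linear combination of the two reflection processes $L_T$ and $U_T$, the log reference prices $\ln S^X_T$ and $\ln S^Y_T$, and the two bounded quantities $d_T$ and $Z_T$. The plan is to apply the conditional expectation $\mathbb{E}_t[\cdot]$ to this identity, divide by $T-t$, and send $T\to\infty$, handling each group of terms separately. Because the decomposition is \emph{exact} (not merely asymptotic), no stochastic estimates beyond those already available are needed: the local-time terms are controlled by Corollary \ref{Cor:t-avg-lim-LU}, the price terms by the standing assumption \eqref{D:mu}, and the remaining terms are bounded.

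I would dispose of the easy terms first. Since $d_T$ and $Z_T$ are uniformly bounded (the former by the explicit bounds derived just before \eqref{eqn:log_wealth}, the latter because $Z_T\in[\ln\gamma,-\ln\gamma]$), their conditional expectations are bounded, so $\mathbb{E}_t[d_T-Z_T]/(T-t)\to 0$. For the price terms, linearity of expectation together with the limits \eqref{D:mu} gives directly $\lim_{T\to\infty}\mathbb{E}_t[\,w\ln S^X_T+(1-w)\ln S^Y_T\,]/(T-t)=w\mu_X+(1-w)\mu_Y$, producing the first two summands of \eqref{eqn:growth_rate}.

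The substance of the argument is the treatment of the $L_T$ and $U_T$ terms, carrying the fixed coefficients $\tfrac{(1-\gamma)w(1-w)}{1-w+\gamma w}$ and $\tfrac{(1-\gamma)w(1-w)}{\gamma(1-w)+w}$. Here I would invoke Corollary \ref{Cor:t-avg-lim-LU}, whose right-hand side $\xi_0$ is affine in the pair $(a,b)$ through the combinations $b-a$ and $a+b$. By this linearity the combined limit splits into the two separate rates $\lim_{T\to\infty}\mathbb{E}_t[L_T-L_t]/(T-t)$ and $\lim_{T\to\infty}\mathbb{E}_t[U_T-U_t]/(T-t)$, obtained by evaluating the corollary at $(a,b)=(-1,0)$ and $(a,b)=(0,1)$ respectively. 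Substituting $c=-\ln\gamma$ into the resulting quotients converts the two $\xi_0$-integrals into exactly the expressions $\alpha$ and $\beta$ displayed in the statement, and multiplying each by its coefficient yields the last two summands of \eqref{eqn:growth_rate}. Since $L_t,U_t$ are $\mathcal F_t$-measurable and finite, replacing $L_T-L_t$, $U_T-U_t$ by $L_T,U_T$ does not change the limits.

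The main point requiring care is the bookkeeping in this last step: one must correctly pair the boundary constants $a$ (at $z=-c$) and $b$ (at $z=c$) with the reflection directions established in Proposition \ref{Prop:conti_mispricing} and Lemma \ref{lma:refl-diffusion-pde}, namely that $L$ accumulates at the lower boundary and $U$ at the upper boundary, so that the $\sigma^2/(4\ln\gamma)$ and $\mu(x)\big(\tfrac{x}{2\ln\gamma}\pm\tfrac12\big)$ terms land in the intended integrals $\alpha$ and $\beta$. Because the interval endpoints carry the sign of $\ln\gamma<0$, it is easy to transpose $\alpha$ and $\beta$, and I would verify the pairing by checking a degenerate case (for instance symmetric $\mu\equiv 0$, where the two rates coincide) before committing to the final identification. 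Apart from this sign accounting, every ingredient is already in place, so the proof is essentially an assembly of the decomposition \eqref{eqn:log_wealth} with Corollary \ref{Cor:t-avg-lim-LU} and the limits \eqref{D:mu}.
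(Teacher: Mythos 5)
Your overall strategy is exactly the paper's own proof: the paper disposes of Theorem \ref{thm:growth_rate} in one sentence by combining \eqref{eqn:log_wealth}, the limits \eqref{D:mu}, and Corollary \ref{Cor:t-avg-lim-LU}, which is precisely the assembly you describe (bounded terms $d_T-Z_T$ vanish after dividing by $T-t$, the price terms give $w\mu_X+(1-w)\mu_Y$, and the regulator terms are read off from the corollary at $(a,b)=(-1,0)$ and $(0,1)$).

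However, the one step you yourself flag as delicate is resolved incorrectly, and the safeguard you propose cannot detect the error. Carry out the substitution $c=-\ln\gamma$ explicitly: for the $L$-rate, $(a,b)=(-1,0)$ gives the $\xi_0$-integrand $\frac{\sigma^2(x)}{4c}+\mu(x)\bigl(\frac{x}{2c}-\frac12\bigr)$, and since $\frac{1}{2c}=-\frac{1}{2\ln\gamma}$, factoring out the overall minus sign flips the relative sign of the constant $\frac12$, so that
\begin{equation*}
\lim_{T\to\infty}\frac{\mathbb{E}_t[L_T-L_t]}{T-t}
=\frac{\int_{\ln\gamma}^{-\ln\gamma}-\left\{\frac{\sigma^2(x)}{4\ln\gamma}+\mu(x)\left(\frac{x}{2\ln\gamma}+\frac12\right)\right\}\omega(x)\,dx}{\int_{\ln\gamma}^{-\ln\gamma}\omega(x)\,dx},
\end{equation*}
which is the expression displayed as $\beta$ in the statement, not $\alpha$; symmetrically, $(a,b)=(0,1)$ (the $U$-rate) produces the expression displayed as $\alpha$. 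Hence the faithful output of your argument is $\frac{(1-\gamma)w(1-w)}{1-w+\gamma w}$ times the displayed $\beta$ plus $\frac{(1-\gamma)w(1-w)}{\gamma(1-w)+w}$ times the displayed $\alpha$ — the printed $\alpha$ and $\beta$ come out transposed, not ``exactly'' as claimed. (This transposition is in fact a defect of the printed theorem rather than of your plan: in Corollary \ref{Cor:log_growth_GBM} the quantity $\alpha=\frac{\theta\sigma^2}{2(\gamma^{-2\theta}-1)}$ equals $\frac{\sigma^2}{2}q(\ln\gamma)$, the rate of the \emph{lower}-boundary regulator $L$, and is paired with the $L$-coefficient, consistent with the computation above and inconsistent with the theorem's displayed integrals.) Your verification in the symmetric case $\mu\equiv 0$ is vacuous for exactly this purpose, since then $\alpha=\beta$ and no transposition is visible; the pairing can only be pinned down with $\mu\neq 0$, e.g.\ via the GBM case, where the two rates $\frac{\mu}{\gamma^{-2\theta}-1}$ and $\frac{\mu}{1-\gamma^{2\theta}}$ differ. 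A further trap: Proposition \ref{Prop:conti_mispricing}(c) as printed also swaps $L$ and $U$; the convention actually enforced by $dZ_t=ds_t+dL_t-dU_t$ and used in Lemma \ref{lma:refl-diffusion-pde} and Corollary \ref{Cor:t-avg-lim-LU} is that $L$ acts at the lower boundary $\ln\gamma$ and $U$ at the upper boundary $-\ln\gamma$, so citing that item as your anchor would compound the sign confusion rather than resolve it.
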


\begin{proof}
This result follows directly from combining the expression for the logarithmic growth rate of LP's wealth \eqref{eqn:log_wealth}, the definitions of $\mu_X$ and $\mu_Y$ \eqref{D:mu}, and Corollary \ref{Cor:t-avg-lim-LU}.
\end{proof}

Theorem \ref{thm:growth_rate} provides a practical method for calculating the long-term growth rate of an LP's wealth. The terms $\alpha$ and $\beta$ quantify the influence of mispricing on this growth rate, capturing the effects of the boundaries on mispricing and the speed measure.

\begin{remark} \label{rmk:excess_growth_rate}
Theorem \ref{thm:growth_rate} reveals an intriguing connection between liquidity wealth dynamics in G3Ms under arbitrage-driven scenarios and the principles of Stochastic Portfolio Theory (SPT), particularly in frictionless markets. More precisely, Equation \eqref{eqn:growth_rate} corresponds to the growth rate
$$
w \mu_X + (1-w) \mu_Y + \frac{w(1-w)}2 \sigma^2
$$
of a constant rebalanced portfolio with the same weight, as discussed in \cite[\S 1.1]{fernholz2002stochastic}. Furthermore, the term $\frac{(1-\gamma)w(1-w)}{1-w+\gamma w} \alpha + \frac{(1 - \gamma)w(1-w)}{\gamma(1-w) + w} \beta$ converges to the excess growth rate $\frac{w(1-w)}2 \sigma^2$ as the fee tier $\gamma$ approaches 1.
\end{remark}

\subsection{Optimal fees and optimal growth}
To illustrate the link between LP wealth growth and SPT (as discussed in Remark \ref{rmk:excess_growth_rate}), we consider a G3M operating within a GBM market model. This simplifies the market dynamics in \eqref{eqn:market_dynamics} by assuming constant parameters for the asset prices. Consequently, the covariance processes in \eqref{eqn:covariance_processes} also have constant parameters, and the relative price $S_t = S^X_t / S^Y_t$ follows a simpler SDE:
$$
d \ln S_t = \mu dt + \sigma dB_t,
$$
where $\mu$ and $\sigma$ are constants defined in \eqref{eqn:reference_SDE}. This simplified setting allows us to explicitly compute the long-term expected logarithmic growth rate of LP wealth using Theorem \ref{thm:growth_rate}.

\begin{corollary}[Log Growth Rate of LP under GBM] \label{Cor:log_growth_GBM}
In a G3M operating under a GBM market with constant drift $\mu$ and volatility $\sigma$, the long-term logarithmic growth rate of LP wealth is:
\begin{align*}
\lim_{T \to \infty} \frac{\mathbb{E}[\ln V_T]}{T}
= w \mu_X + (1-w) \mu_Y + \frac{(1-\gamma)w(1-w)}{1-w+\gamma w} \alpha + \frac{(1-\gamma)w(1-w)}{\gamma(1-w)+ w} \beta,
\end{align*}
where $\theta = \frac{2 \mu}{\sigma^2}$ and
\begin{itemize}
\item $\alpha = \beta = -\frac{\sigma^2}{4 \ln \gamma}$ if $\theta = 0$.
\item $\alpha = \frac{\theta \sigma^2}{2(\gamma^{-2\theta} - 1)}$ and $\beta = \frac{\theta \sigma^2}{2(1-\gamma^{2\theta})}$ if $\theta \neq 0$.
\end{itemize}
Furthermore, the steady-state distribution $\pi$ of the mispricing process $Z_t$ is:
\begin{itemize}
    \item If $\theta = 0$, then $\pi$ is the uniform distribution on $[\ln \gamma, -\ln \gamma]$.
    \item If $\theta \neq 0$, $\pi$ is the truncated exponential distribution
    $$
    \pi(dz) = \frac{\theta e^{\theta z}}{\gamma^{-\theta} - \gamma^{\theta}} dz, \ z \in [\ln \gamma, -\ln \gamma].
    $$
\end{itemize}
\end{corollary}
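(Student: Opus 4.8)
The plan is to obtain this corollary as a direct specialization of Theorem \ref{thm:growth_rate} and Theorem \ref{thm:steady_state_z} to the GBM regime. Under the constant-parameter dynamics $d\ln S_t = \mu\,dt + \sigma\,dB_t$, the conditional drift and volatility in \eqref{eqn:mu_z-sig_z} reduce to the constants $\mu(x)\equiv\mu$ and $\sigma(x)\equiv\sigma$, so the mispricing process is a time-homogeneous reflected diffusion and the prefactor $w\mu_X + (1-w)\mu_Y + \frac{(1-\gamma)w(1-w)}{1-w+\gamma w}\alpha + \frac{(1-\gamma)w(1-w)}{\gamma(1-w)+w}\beta$ is inherited verbatim; only $\alpha$, $\beta$, and $\pi$ need to be evaluated. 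First I would compute the speed measure: since $\int \tfrac{2\mu}{\sigma^2}\,dx = \theta x$ with $\theta = 2\mu/\sigma^2$, we get $\omega(x) = \tfrac{2}{\sigma^2}e^{\theta x}$. This immediately bifurcates the argument into the degenerate case $\theta = 0$ (where $\omega$ is constant) and the generic case $\theta\neq 0$ (where $\omega$ is a pure exponential).

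For $\theta = 0$ I would note that $\mu = 0$ kills every $\mu(x)$-term in the $\alpha$- and $\beta$-integrands of Theorem \ref{thm:growth_rate}, while the constant $\omega$ cancels between numerator and denominator; what survives is the single term $-\sigma^2/(4\ln\gamma)$, giving $\alpha = \beta = -\sigma^2/(4\ln\gamma)$. The steady state then follows from Theorem \ref{thm:steady_state_z} with $\pi \propto \omega = \text{const}$, i.e. the uniform law on $[\ln\gamma,-\ln\gamma]$.

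For $\theta \neq 0$ the substance is two elementary integrals over $[\ln\gamma,-\ln\gamma]$, namely $\int e^{\theta x}\,dx$ and $\int x e^{\theta x}\,dx$ (the latter by parts). Evaluating them at the endpoints and using the identities $e^{-\theta\ln\gamma} = \gamma^{-\theta}$ and $e^{\theta\ln\gamma} = \gamma^{\theta}$ converts everything into powers of $\gamma$. Substituting into the $\alpha$- and $\beta$-formulas and simplifying — here the key cancellation is that the contribution coming from $\sigma^2$ cancels the contribution proportional to $\mu/\theta$, since $1/\theta = \sigma^2/(2\mu)$ — collapses the ratios of exponentials to $\tfrac{\theta\sigma^2}{2(\gamma^{-2\theta}-1)}$ and $\tfrac{\theta\sigma^2}{2(1-\gamma^{2\theta})}$. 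The steady state is again $\pi \propto \omega$ from Theorem \ref{thm:steady_state_z}; normalizing by $\int_{\ln\gamma}^{-\ln\gamma}e^{\theta\zeta}\,d\zeta = (\gamma^{-\theta}-\gamma^{\theta})/\theta$ yields the stated truncated exponential $\pi(dz) = \theta e^{\theta z}/(\gamma^{-\theta}-\gamma^{\theta})\,dz$.

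No conceptual obstacle arises; the difficulty is entirely bookkeeping, and I expect the sign of $\ln\gamma < 0$ to be the main trap. Because $\ln\gamma<0$, the factor $1/(2\ln\gamma)$ is negative and the interval $[\ln\gamma,-\ln\gamma]$ is written with its left endpoint negative, so one must track orientation carefully; moreover the only difference between the $\alpha$- and $\beta$-integrands is the $\mp\tfrac12$ term, so correctly pairing each closed form with the growth rate of $L_t$ versus $U_t$ requires attention to which reflecting boundary each regulator controls. As a consistency check I would verify that the $\theta\neq 0$ expressions recover the $\theta = 0$ case as $\theta\to 0$, using $\gamma^{\mp 2\theta}-1 \sim \mp 2\theta\ln\gamma$ so that both $\alpha$ and $\beta$ tend to $-\sigma^2/(4\ln\gamma)$.
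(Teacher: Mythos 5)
Your overall route is exactly the paper's: specialize Theorem \ref{thm:growth_rate} and Theorem \ref{thm:steady_state_z} to constant coefficients, compute the speed measure $\omega(x)=\frac{2}{\sigma^2}e^{\theta x}$, and evaluate $\int e^{\theta x}dx$ and $\int x e^{\theta x}dx$ over $[\ln\gamma,-\ln\gamma]$. Your $\theta=0$ case, your steady-state normalization, the cancellation you describe (the $\sigma^2$-term against the $1/\theta$-term from integration by parts), and your $\theta\to 0$ consistency check are all correct.

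The gap is precisely the step you flagged but did not resolve: the pairing. If you substitute the GBM data \emph{literally} into the $\alpha$- and $\beta$-formulas printed in Theorem \ref{thm:growth_rate}, you get the two closed forms in the \emph{opposite} assignment to the one the corollary asserts. Because $\ln\gamma=-c<0$, the theorem's $\alpha$-integrand rewrites as
\[
-\left\{\frac{\sigma^2}{4\ln\gamma}+\mu\left(\frac{x}{2\ln\gamma}-\frac12\right)\right\}
=\frac{\sigma^2}{4c}+\mu\left(\frac{x}{2c}+\frac12\right),
\]
which, by Corollary \ref{Cor:t-avg-lim-LU} with $(a,b)=(0,1)$, is the integrand of the long-run rate of $U_t$, not of $L_t$; under GBM it evaluates to $\frac{\mu}{1-\gamma^{2\theta}}$, i.e.\ the corollary's $\beta$, and symmetrically the theorem's $\beta$-formula evaluates to the corollary's $\alpha$. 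So your computation, carried out faithfully as written, terminates in values swapped relative to the statement being proved. The resolution is to bypass the $\mp\frac12$ labels of Theorem \ref{thm:growth_rate} (which are internally inconsistent with the $L$/$U$ coefficients in \eqref{eqn:log_wealth} and \eqref{eqn:growth_rate}) and pair directly through Corollary \ref{Cor:t-avg-lim-LU}: taking $(a,b)=(-1,0)$ gives the $L_t$-rate $\frac{\mu\gamma^{\theta}}{\gamma^{-\theta}-\gamma^{\theta}}=\frac{\theta\sigma^2}{2(\gamma^{-2\theta}-1)}$, which is the quantity multiplying the $L$-coefficient $\frac{(1-\gamma)w(1-w)}{1-w+\gamma w}$, while $(a,b)=(0,1)$ gives the $U_t$-rate $\frac{\theta\sigma^2}{2(1-\gamma^{2\theta})}$ multiplying the $U$-coefficient. (The paper's own one-line proof glosses over the same point, and its displayed intermediate integral even contains a sign error: the second term of $\int x\,\omega\,dx$ should be $-\frac{1}{\theta\mu}\left[\gamma^{-\theta}-\gamma^{\theta}\right]$, the minus sign being what your integration by parts correctly produces and what makes your cancellation work.) With the pairing made explicit in this way, the rest of your plan goes through and yields the stated $\alpha$, $\beta$, and $\pi$.
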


\begin{proof}
Under the GBM assumption, the conditional expectations in \eqref{eqn:mu_z-sig_z} are constants and coincide with $\mu$ and $\sigma$.
In particular,
\begin{align*}
\omega(x) &=
\begin{cases}
    \frac{2}{\sigma^2} &\text{ if } \theta = 0, \\
    \frac{2}{\sigma^2} e^{\theta x} &\text{ if } \theta \neq 0,
\end{cases} \\
\int^{-\ln \gamma}_{\ln \gamma} \omega(x) dx &=
\begin{cases}
    \frac{-2 \ln \gamma}{\sigma^2} &\text{ if } \theta = 0, \\
    \frac{1}{\mu} \left[ \gamma^{-\theta} - \gamma^{\theta} \right] &\text{ if } \theta \neq 0,
\end{cases} \\
\int^{-\ln \gamma}_{\ln \gamma} x \omega(x) dx &=
\begin{cases}
    0 &\text{ if } \theta = 0, \\
    \frac{- \ln \gamma}{\mu} \left[\gamma^{-\theta} + \gamma^{\theta} \right] + \frac{1}{\theta\mu} \left[ \gamma^{-\theta} - \gamma^{\theta}  \right] &\text{ if } \theta \neq 0.
\end{cases}
\end{align*}
Hence, the first part of the corollary follows directly from Theorem \ref{thm:growth_rate}.
Furthermore, since $\mu$ and $\sigma$ are constant, the mispricing process $Z_t$ coincides with the reflected diffusion $\hat{Z}_t$ defined by \eqref{eqn:zhat-sde}.
Therefore, the second part of the corollary follows from Theorem \ref{thm:steady_state_RD} in the appendix.
\end{proof}

\begin{remark}
This result is consistent with the calculation in \cite[Proposition 6.6]{harrison2013brownian}, which utilizes the stationary distribution of the mispricing process.
\end{remark}

\subsubsection{Numerical Analysis and Optimal Fees}
We now numerically investigate the growth rate of LP wealth in G3Ms across different fee tiers $\gamma$ and asset weights $w$. This analysis extends the work in \cite[\S 5]{tassy2020growth}, which focused on the specific case of $w=\frac12$.

To quantify the impact of fees, we define $g(w, \gamma)$ as the ratio of the "mispricing-related" term in the G3M growth rate to the excess return in SPT. Corollary \ref{Cor:log_growth_GBM} yields
$$
g(w, \gamma) =
\begin{cases}
    - \frac{1}{2\ln \gamma} \left\{ \frac{1-\gamma}{1-w+\gamma w} + \frac{1-\gamma}{\gamma(1-w)+ w} \right\} \quad &\text{if } \theta=0; \\
    \theta \left\{ \frac{1-\gamma}{1-w+\gamma w} \frac{\gamma^{2\theta}}{1-\gamma^{2\theta}} + \frac{1-\gamma}{\gamma(1-w)+ w} \frac{1}{1-\gamma^{2\theta}} \right\} \quad &\text{if } \theta \neq 0.
\end{cases}
$$

Figures \ref{fig:growth_ratio_theta} and \ref{fig:growth_ratio_weight} illustrate this growth rate ratio for various weights and values of $\theta = \frac{2\mu}{\sigma^2}$. These figures reveal interesting "phase transitions" for both $\theta$ and $w$, where the ratio can exhibit non-monotonicity. This suggests that the optimal fee tier $\gamma^*$ that maximizes LP wealth growth may lie within the interior of the interval $(0,1)$, a behavior not observed when $w=\frac12$.

The figures also highlight the symmetry of $g(w,\gamma)$ with respect to $\theta$ and $w=\frac12$.
\begin{itemize}
    \item By replacing $\theta$ with $-\theta$, we obtain
    \begin{eqnarray*}
        && -\theta(1 - \gamma) \left\{ \frac1{1-w+\gamma w} \frac{\gamma^{-2\theta}}{1-\gamma^{-2\theta}} + \frac1{\gamma(1-w)+ w} \frac{1}{1-\gamma^{-2\theta}} \right\} \\
        &=& -\theta(1 - \gamma) \left\{ \frac1{1-w+\gamma w} \frac{1}{\gamma^{2\theta}-1} + \frac1{\gamma(1-w)+ w} \frac{\gamma^{2\theta}}{\gamma^{2\theta} - 1} \right\} \\
        &=& \theta(1 - \gamma) \left\{ \frac1{1-w+\gamma w} \frac{1}{1 - \gamma^{2\theta}} + \frac1{\gamma(1-w)+ w} \frac{\gamma^{2\theta}}{1 - \gamma^{2\theta}} \right\},
    \end{eqnarray*}
    which explains the symmetry between the plots for positive and negative values of $\theta$ in Figure \ref{fig:growth_ratio_theta}.
    \item Let $\delta = w - \frac12$. Then, $g$ can be expressed in terms of $\delta$ as
    $$
    g(\delta, \gamma) =
    \begin{cases}
        - \frac{1}{2\ln \gamma} \left\{ \frac{1-\gamma}{\frac12(1 + \gamma) + (\gamma - 1)\delta} + \frac{1-\gamma}{\frac12(1 + \gamma) - (\gamma - 1)\delta} \right\} \quad &\text{if } \theta=0; \\
        \theta \left\{ \frac{1-\gamma}{\frac12(1 + \gamma) + (\gamma - 1)\delta} \frac{\gamma^{2\theta}}{1-\gamma^{2\theta}} + \frac{1-\gamma}{\frac12(1 + \gamma) - (\gamma - 1)\delta} \frac{1}{1-\gamma^{2\theta}} \right\} \quad &\text{if } \theta \neq 0.
    \end{cases}
    $$
\end{itemize}
For $\theta = 0$, we have $g(-\delta,\gamma) = g(\delta,\gamma)$, explaining why only "red colors" appear when $\theta = 0$ in Figures \ref{fig:growth_ratio_theta} and \ref{fig:growth_ratio_weight}. For $\theta \neq 0$, a similar symmetry around $\delta = 0$ (i.e., $w = \frac12$) is observed, as shown in Figure \ref{fig:growth_ratio_theta}.

Heatmaps in Figures \ref{fig:growth_heatmap_weight} and \ref{fig:growth_heatmap_theta} further visualize the growth rate ratio. The location of the maximum value in each row is labeled, emphasizing the potential for non-monotonicity and interior optimal fee tiers. Notably, under certain conditions, the G3M can outperform both the unrebalanced ($\gamma^\ast = 0$) and constant rebalanced ($\gamma^\ast = 1$) portfolio strategies.

\begin{figure}[h!]
\centering
\includegraphics[width=\linewidth]{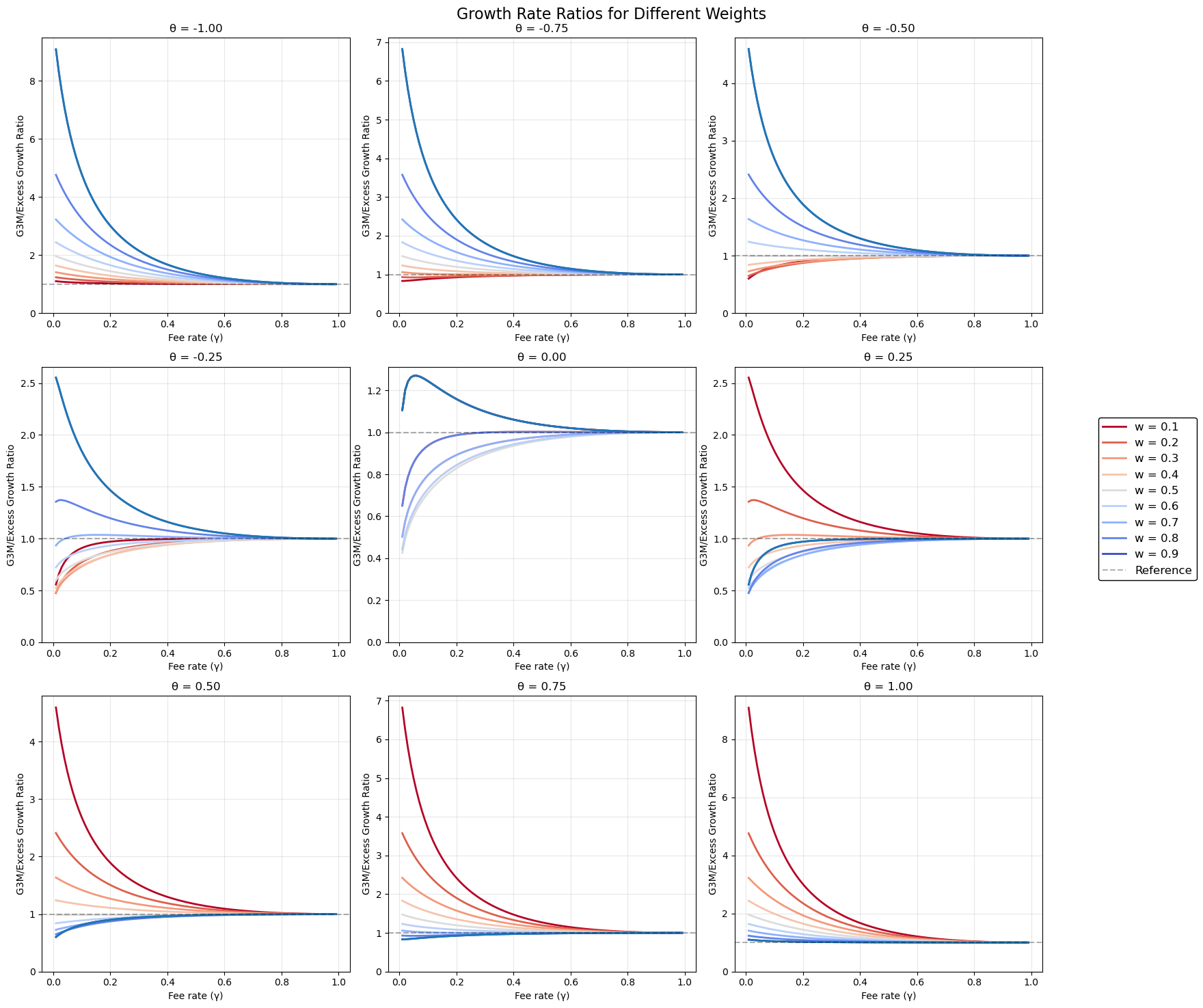}
\caption{Growth rate ratio for different values of weights. Note the symmetry between positive and negative $\theta$ values.}
\label{fig:growth_ratio_theta}
\end{figure}

\begin{figure}[h!]
\centering
\includegraphics[width=1\linewidth]{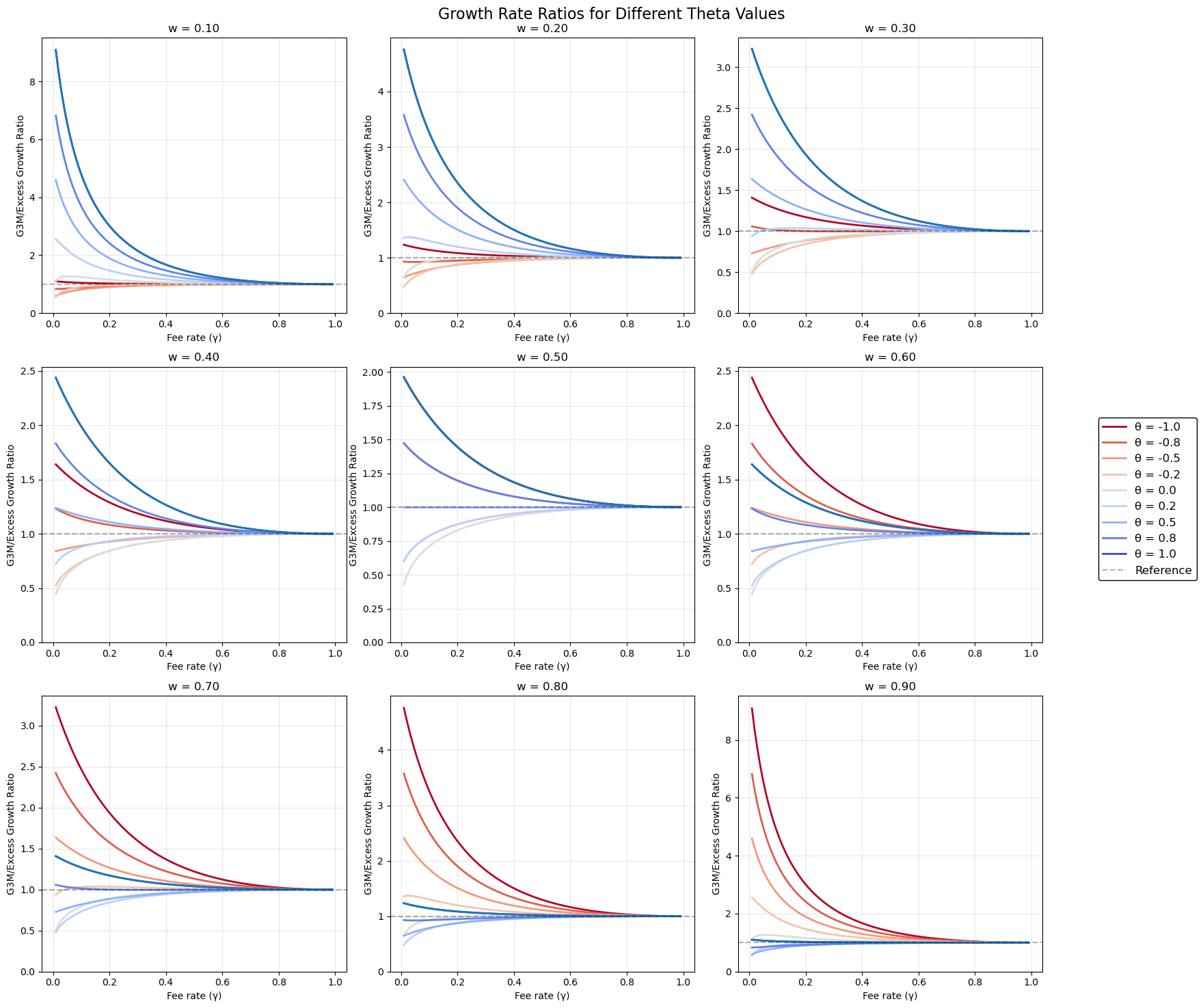}
\caption{Growth rate ratio for different values of $\theta$. Note the symmetry around $w = \frac12$.}
\label{fig:growth_ratio_weight}
\end{figure}

\begin{figure}[h!]
\centering
\includegraphics[width=1\linewidth]{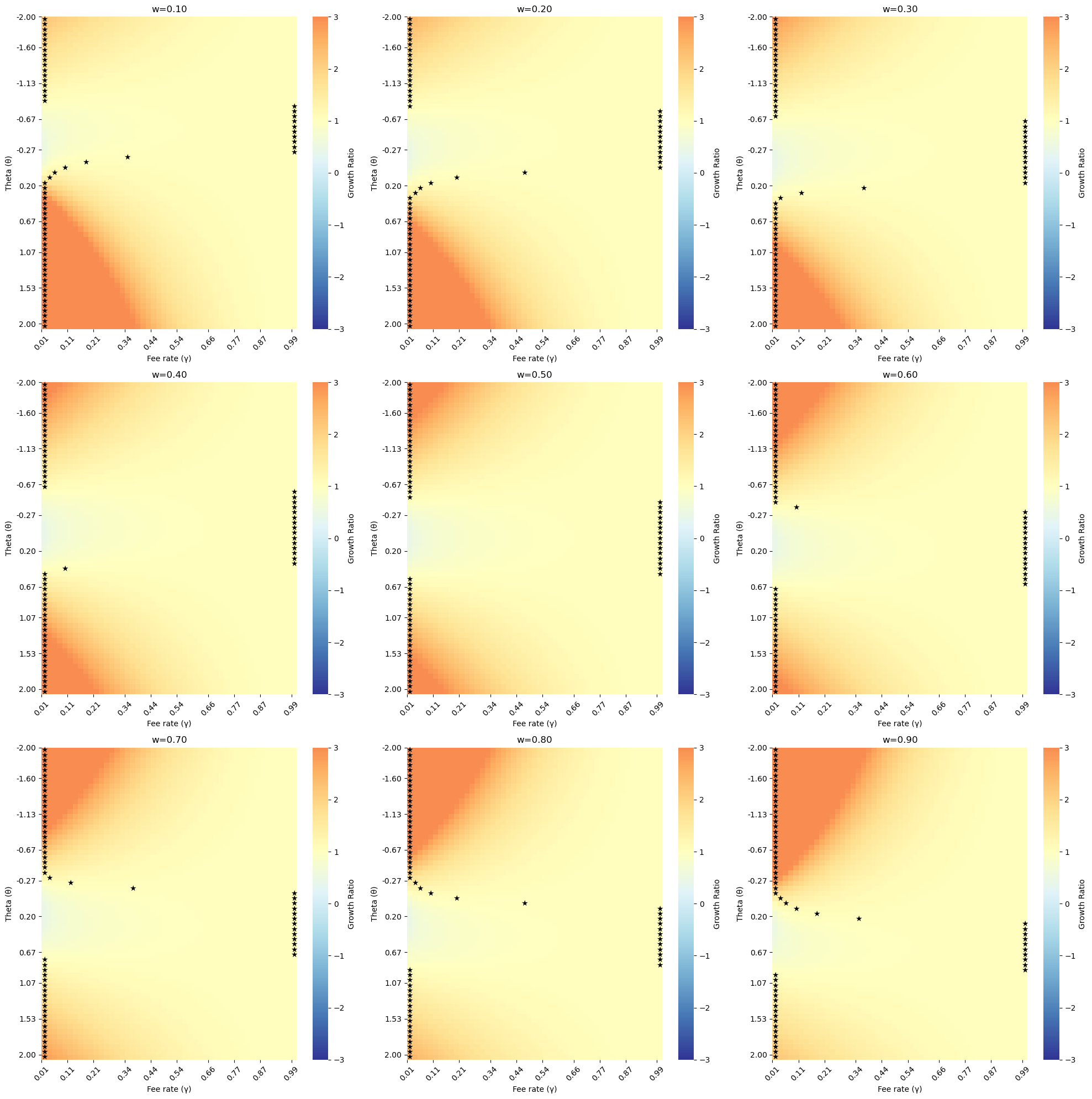}
\caption{Heatmap of growth rate ratio for different weights, with maximum values labeled.}
\label{fig:growth_heatmap_weight}
\end{figure}

\begin{figure}[h!]
\centering
\includegraphics[width=1\linewidth]{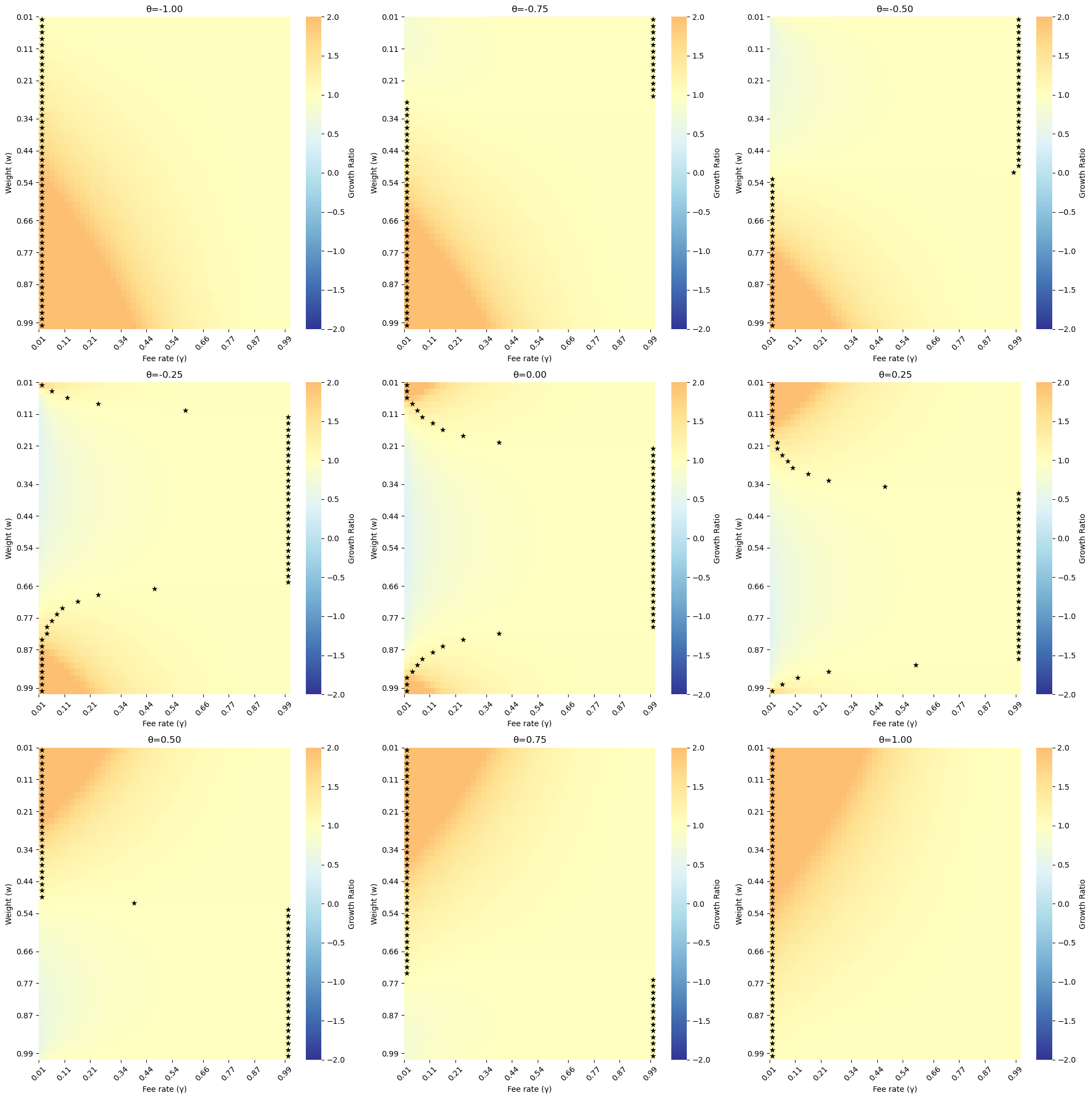}
\caption{Heatmap of growth rate ratio for different values of $\theta$, with maximum values labeled.}
\label{fig:growth_heatmap_theta}
\end{figure}

\subsection{Time-inhomogeneous reflected diffusion} \label{sec:time-inhomo}
In Section \ref{sec:time-homo}, we utilized the eigensystem of the infinitesimal generator to analyze the long-term growth rate of LP wealth for the time-homogeneous case. However, this approach is not applicable when the drift and volatility coefficients are time-dependent, as there are no universal eigenfunctions associated with time-varying eigenvalues.

Nevertheless, we can still determine the long-term expected logarithmic growth rate by analyzing the asymptotic behavior of the time-averaged expectation. This section presents a method to achieve this for time-inhomogeneous reflected diffusions.

Recall the mispricing process $Z_t$ governed by \eqref{eqn:z-sde}:
\begin{equation*}
dZ_t = \tilde{\sigma}(t, s_t) dW_t + \tilde{\mu}(t, s_t) dt + dL_t - dU_t.
\end{equation*}
We assume that the time-dependent drift $\mu(t, z)$ and volatility $\sigma(t,z)$ defined in \eqref{eqn:mu_z-sig_z} converge to limiting functions as $t \to \infty$:
\begin{equation*}
\lim_{t\to\infty} \sigma(t,z) = \sigma(z), \qquad \lim_{t\to\infty} \mu(t,z) = \mu(z)
\end{equation*}
in the $L^2$ sense, and that these limits are smooth and bounded. This implies that the mispricing process eventually approaches a time-homogeneous behavior. Define the limiting speed measure $q(y)$ as:
$$
q(y) = \frac{w(y)}{\int_{-c}^c w(\eta) d\eta}, \qquad
\mbox{ where }
w(y):= \frac2{\sigma^2(y)} e^{\int\frac{2\mu(y)}{\sigma^2(y)}dy}.
$$
This speed measure corresponds to the stationary distribution of the limiting time-homogeneous reflected diffusion with drift $\mu(x)$ and volatility $\sigma(x)$.

The following theorem, whose proof is deferred to Appendix \ref{append:time-inhomo}, characterizes the long-term time-averaged expectation for time-inhomogeneous reflected diffusions.

\begin{theorem} \label{thm:time-inhomo-lim}
Let $u=u(t,x)$ be the solution to the parabolic PDE
$$
u_t + \frac{\sigma^2(t,x)}2 u_{xx} + \mu(t,x) u_x + f(t,x) = 0
$$
with Neumann boundary condition $u_x(t,-c) = u_x(t,c) = 0$ and terminal condition $u(T,x) = g(x)$. Assume that $g\in L^2$ and there exists a function $\bar{f}(x)$ such that
$$
\lim_{t\to\infty} f(t,x) = \bar{f}(x)
$$
in $L^2$. Then, the following asymptotic behavior holds for $u(t,x)$ as $T \to \infty$:
$$
\lim_{T\to\infty} \frac{u(t,x)}{T-t} = \int_{-c}^c \bar{f}(y)q(y)dy.
$$
\end{theorem}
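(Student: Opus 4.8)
The plan is to combine the Feynman--Kac representation of Lemma \ref{lma:refl-diffusion-pde} with an ergodic averaging argument for the reflected diffusion. Since the Neumann data vanish here ($\alpha_\tau \equiv \beta_\tau \equiv 0$), Lemma \ref{lma:refl-diffusion-pde} gives the probabilistic representation
$$
u(t,x) = \Etof{g(Z_T)} + \Etof{\int_t^T f(\tau, Z_\tau)\, d\tau},
$$
where $Z$ is the reflected diffusion \eqref{eqn:z-eqn} started at $Z_t = x$ and confined to $[-c,c]$. Because $Z_\tau\in[-c,c]$ for all $\tau$ and $g\in L^2[-c,c]$, the terminal term $\Etof{g(Z_T)}$ stays bounded uniformly in $T$, so dividing by $T-t$ annihilates it as $T\to\infty$. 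Everything therefore reduces to the time-average of the running cost, which I would rewrite through the transition density $p(\tau,y\mid t,x)$ as
$$
\frac{u(t,x)}{T-t} = o(1) + \frac1{T-t}\int_t^T\!\Big(\int_{-c}^c f(\tau,y)\,p(\tau,y\mid t,x)\,dy\Big)\,d\tau.
$$

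Next I would show the inner spatial integral converges, as $\tau\to\infty$, to the target constant $\int_{-c}^c \bar f(y)q(y)\,dy$, and then invoke the elementary Cesàro fact that the time-average of a convergent integrand inherits the same limit. For the pointwise-in-$\tau$ convergence I split
$$
\int_{-c}^c f(\tau,y)p(\tau,y\mid t,x)dy - \int_{-c}^c \bar f(y)q(y)dy
= \underbrace{\int_{-c}^c (f(\tau,y)-\bar f(y))\,p(\tau,y\mid t,x)dy}_{(I)} + \underbrace{\int_{-c}^c \bar f(y)\,(p(\tau,y\mid t,x)-q(y))dy}_{(II)}.
$$
Term $(I)$ vanishes because $f(\tau,\cdot)\to\bar f$ in $L^2$ by hypothesis while the densities $p(\tau,\cdot\mid t,x)$ remain bounded in $L^2[-c,c]$ for large $\tau$ (they are close to $q\in L^2$, as established below), so Cauchy--Schwarz closes the estimate. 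Term $(II)$ is the genuine content: it demands that the law of $Z_\tau$ relax to the limiting speed measure $q$, which by Theorem \ref{thm:steady_state_z} is precisely the stationary distribution of the limiting time-homogeneous generator $\cL = \frac{\sigma^2(x)}2\p_x^2 + \mu(x)\p_x$ with Neumann boundary conditions.

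To control $(II)$ I would exploit the spectral gap of the limiting operator. On $L^2(q)$ the self-adjoint Sturm--Liouville operator $\cL$ carries a simple zero eigenvalue (constant eigenfunction) together with a positive gap $\lambda_1>0$, so its Neumann heat semigroup contracts toward the projection onto $q$ at exponential rate $e^{-\lambda_1\tau}$. I would then express the forward (Fokker--Planck) evolution of $p(\tau,\cdot\mid t,x)$ as the limiting homogeneous evolution perturbed by the coefficient errors $\mu(\tau,\cdot)-\mu(\cdot)$ and $\sigma(\tau,\cdot)-\sigma(\cdot)$, and run a Duhamel/Gronwall estimate: the homogeneous part drives $p$ exponentially toward $q$, while the perturbation, which tends to $0$ in $L^2$ by assumption, contributes a term that becomes small for large $\tau$. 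This yields $\langle \bar f,\,p(\tau,\cdot\mid t,x)-q\rangle\to 0$, hence $(II)\to 0$, and feeding this back also furnishes the $L^2$-boundedness of $p$ used in $(I)$.

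The hardest part is this uniform relaxation estimate in $(II)$: converting the $L^2$-convergence of the coefficients into decay of the distance between $p(\tau,\cdot\mid t,x)$ and $q$. The spectral gap governs only the frozen-coefficient dynamics, so the delicate step is the Duhamel bootstrap ensuring the slowly vanishing coefficient errors do not destroy the exponential contraction --- concretely, that the convolution of an exponentially decaying kernel against an $o(1)$ forcing is itself $o(1)$. Once that estimate is secured, the boundedness of the terminal term, the Cauchy--Schwarz bound on $(I)$, and the Cesàro averaging are all routine.
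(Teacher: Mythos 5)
Your overall skeleton is exactly the paper's: represent $u$ through the transition density, discard the terminal term after dividing by $T-t$, split the running-cost integrand into $(I)=\int_{-c}^c (f(\tau,\cdot)-\bar f)\,p\,dy$ and $(II)=\int_{-c}^c \bar f\,(p-q)\,dy$, close $(I)$ with Cauchy--Schwarz, and average in time. The divergence is in how $(II)$ --- the relaxation of $p(\tau,\cdot|t,x)$ to $q$ --- is obtained. The paper does not prove it at all: it invokes a classical quantitative estimate $\|p(s,\cdot|t,x)-q\|_2\le C/\sqrt{s-t}$ (Lemma \ref{lma:L2-ergodicity}, cited to the literature), which simultaneously (i) kills $(II)$ after time-averaging, since $\int_t^T (s-t)^{-1/2}ds = 2\sqrt{T-t} = o(T-t)$, (ii) supplies the bound $\|p(s,\cdot)\|_2\le C/\sqrt{s-t}+\|q\|_2$ needed in $(I)$ and for the terminal term, and (iii) is integrable near $s=t$, so the small-time singularity of the density causes no trouble. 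You instead propose to derive the relaxation yourself from the spectral gap of the limiting generator plus a Duhamel/Gronwall perturbation argument.

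That proposed derivation is where your attempt stops being a proof and becomes a plan, and the plan as stated has genuine obstacles. First, the perturbation in the forward (Fokker--Planck) equation is second order, $B_s p=\tfrac12\partial_y^2\left[(\sigma^2(s,\cdot)-\sigma^2)p\right]-\partial_y\left[(\mu(s,\cdot)-\mu)p\right]$, so the naive Duhamel bound picks up the smoothing singularity $\|S(u)\,\partial_y^2\|_{L^2\to L^2}\sim u^{-1}$, which is not integrable at $u=0$; repairing this (divergence form, one integration by parts, $u^{-1/2}$ smoothing) forces you to control the coefficient errors and their derivatives in norms stronger than the assumed $L^2$ convergence, which is only available if one additionally imposes uniform smoothness and boundedness --- precisely the hypotheses under which the paper's Lemma \ref{lma:L2-ergodicity} is quoted. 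Second, your argument is circular as written about the $L^2$-boundedness of $p(\tau,\cdot)$: you use it in $(I)$ and inside the Duhamel bootstrap, but promise it as an output of $(II)$. Third, your justification of the terminal term is wrong as stated: a function $g\in L^2[-c,c]$ need not be bounded, so $Z_T\in[-c,c]$ alone does not bound $\Etof{g(Z_T)}$; one needs Cauchy--Schwarz against an $L^2$ bound on the density --- again the missing estimate. All three gaps are filled at once by the quantitative ergodicity estimate, so the honest summary is that you reproduced the paper's reduction correctly, but the single ingredient the paper imports from the literature (and you elected to prove) is left unestablished, and its proof is substantially harder than your sketch suggests.
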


This theorem provides a tool for analyzing the long-term behavior of time-inhomogeneous reflected diffusions. As a direct consequence, we obtain the following result for the long-term growth rate of LP wealth in the G3M model.

\begin{corollary}[Log Growth Rate of LP Wealth in G3M Time-inhomogeneous] \label{thm:growth_rate_inhomo}
Assume that the mispricing process $Z_t$ is governed by \eqref{eqn:z-sde} with time-dependent coefficients $\sigma(t,z)$ and $\mu(t,z)$ defined in \eqref{eqn:mu_z-sig_z} converging to smooth and bounded limits $\sigma(z)$ and $\mu(z)$ in the $L^2$ sense as $t \to \infty$. Then, the long-term expected logarithmic growth rate of LP wealth in a G3M is:
\begin{align*}
\lim_{T \to \infty} \frac{\mathbb{E}[\ln V_T]}{T}
= w \mu_X + (1-w) \mu_Y + \frac{(1-\gamma)w(1-w)}{1-w+\gamma w} \alpha + \frac{(1 - \gamma)w(1-w)}{\gamma(1-w) + w} \beta,
\end{align*}
where
\begin{align*}
&\alpha = \frac{\int_{\ln \gamma}^{-\ln \gamma} -\left\{\frac{\sigma^2(x)}{4 \ln \gamma} + \mu(x) \left( \frac{1}{2\ln \gamma} x - \frac12 \right)\right\} \bar\omega(x) dx}{\int_{\ln \gamma}^{-\ln \gamma} \omega(x) dx}, \\
&\beta = \frac{\int_{\ln \gamma}^{-\ln \gamma} -\left\{\frac{\sigma^2(x)}{4 \ln \gamma} + \mu(x) \left( \frac{1}{2\ln \gamma} x + \frac12 \right)\right\} \omega(x) dx}{\int_{\ln \gamma}^{-\ln \gamma} \omega(x) dx}, \\
&\omega(x) = \frac{2}{\sigma^2(x)} e^{\int \frac{2\mu(x)}{\sigma^2(x)} dx}.
\end{align*}
\end{corollary}

This corollary generalizes Theorem \ref{thm:growth_rate} to the time-inhomogeneous case, demonstrating that the long-term growth rate of LP wealth can be expressed in a similar form, with the limiting speed measure $q(y)$ playing a key role.

\subsection{Independent stochastic volatility and drift}
In this section, we generalize the analysis to incorporate stochastic volatility and drift in the log price process, $s_t = \ln S_t$. Specifically, we assume $s_t$ follows the diffusion process:
$$
ds_t = \tilde{\mu}_t dt + \tilde{\sigma}_t dW_t,
$$
where both $\tilde{\mu}_t$ and $\tilde{\sigma}_t$ are stochastic processes but are independent of the driving Brownian motion $W_t$. This allows for more realistic modeling of market dynamics where volatility and expected returns can fluctuate randomly over time. The mispricing process, $Z_t = \ln S_t - \ln P_t$, remains a reflected diffusion within the interval $[-c, c]$, where $c = -\ln\gamma$, and satisfies
\begin{equation*}
dZ_t = ds_t + dL_t - dU_t = \tilde{\mu}_t dt + \tilde\sigma_t dW_t + dL_t - dU_t.
\end{equation*}
We further impose a strong ellipticity condition on the volatility, requiring $\tilde{\sigma}_t \geq \epsilon > 0$ almost surely for all $t$. This ensures that the volatility remains strictly positive.

Our goal is to derive the long-term limit of the time-averaged logarithmic growth rate of LP wealth in this setting. As in Section \ref{sec:time-inhomo}, the eigensystem approach used in Section \ref{sec:time-homo} is not applicable here because the time-dependent drift $\tilde{\mu}_t \neq 0$ prevents the existence of time-independent eigenfunctions.

\subsubsection{Time-Dependent Volatility}
To begin, we consider a simplified scenario where the volatility
$\tilde\sigma_t$ is a deterministic function of $t$ and the drift is zero ($\tilde{\mu}_t = 0$ for all $t$). This allows us to isolate the effect of time-dependent volatility. In this case, the infinitesimal generator becomes time-dependent:
$$
\mathcal{L}_t = \frac{\sigma^2(t)}2 \partial_x^2.
$$
Despite the time-dependence, we can still find eigenvalues and eigenfunctions associated with $\mathcal{L}_t$ that satisfy the eigenvalue problem:
$$
\mathcal{L}_t u = \frac{\sigma^2(t)}2 u_{xx} = -\lambda u
$$
with Neumann boundary conditions $u_x(-c) = u_x(c) = 0$. The solutions are:
\begin{equation} \label{eqn:eigensys-time-dep}
\lambda_n(t) = \frac{\sigma^2(t)}{2} \left(\frac{n\pi}c \right)^2, \qquad e_0(x) = \sqrt{\frac1{2c}}, \qquad
e_n(x) = \sqrt{\frac{1}{c}} \cos\left(\frac{n\pi}{c}x \right) \mbox{ for } n \geq 1.
\end{equation}
Importantly, while the eigenvalues $\lambda_n$ are time-dependent, the eigenfunctions $e_n$ are not.

Using a similar approach as in Section \ref{sec:time-homo}, we can express the conditional expectation
$$
\mathbb{E}[- a (\hat{L}_T - \hat{L}_t) + b (\hat{U}_T - \hat{U}_t)\mid \hat{Z}_t = x]
$$
in terms of the eigensystem \eqref{eqn:eigensys-time-dep}. This leads to the following theorem (presented without proof):

\begin{theorem}
Let $\hat{Z}_t$ be the reflected diffusion defined by \eqref{eqn:zhat-sde} where $\mu = 0$ and $\sigma$ is a deterministic function of $t$.
Then, for given constants $a$ and $b$, the conditional expectation
\begin{equation}
\mathbb{E}\left[- \int_t^T a d\hat{L}_\tau + \int_t^T b d\hat{U}_\tau \, \Big| \, \hat{Z}_t = x \right]
\end{equation}
can be written in terms of the eigensystem \eqref{eqn:eigensys-time-dep} associated with $\cL_t$ with Neumann boundary condition as
\begin{eqnarray}
&& \mathbb{E}\left[- \int_t^T a d\hat{L}_\tau + \int_t^T b d\hat{U}_\tau \, \Big| \, \hat{Z}_t = x \right] = \frac{b-a}{4c} x^2 + \frac{b+a}2 x
+ \sum_{n=0}^\infty v_n(t) e_n(x), \label{eqn:u-eig-expan-time-dep}
\end{eqnarray}
where the time-dependent coefficients $v_n$ are given by
\begin{eqnarray} \label{equation:v_n}
v_n(t) = e^{-\frac12\left(\frac{n\pi}c\right)^2 \int_t^T \sigma^2(s) ds}k_n + \int_t^T h_n(s) e^{-\frac12\left(\frac{n\pi}c\right)^2\int_t^s \sigma^2(\tau) d\tau} ds.
\end{eqnarray}
and
\begin{eqnarray*}
&& h_n(t) = \frac{\sigma^2(t)}{4c} \int_{-c}^c (b-a) e_n(x) dx, \qquad
k_n = -\int_{-c}^c \left(\frac{b-a}{4c} x^2 + \frac{b+a}2 x\right) e_n(x) dx.
\end{eqnarray*}
\end{theorem}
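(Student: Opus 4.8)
The plan is to follow the proof of the time-homogeneous theorem in Section \ref{sec:time-homo} essentially verbatim, with the one structural change that the eigenvalues are now $t$-dependent while the eigenfunctions are not. It is precisely this feature of \eqref{eqn:eigensys-time-dep} --- that the Neumann cosine basis $\{e_n\}$ is fixed and only the $\lambda_n(t)$ move --- that lets the spectral decomposition survive the loss of time-homogeneity. First I would set $u(t,x) = \Etof{-\int_t^T a\,dL_\tau + \int_t^T b\,dU_\tau}$ and apply Lemma \ref{lma:refl-diffusion-pde} with $\mu\equiv 0$, volatility $\sigma(t,z)=\sigma(t)$, forcing $f\equiv 0$, terminal data $g\equiv 0$, and constant boundary functions $\alpha_\tau = a$, $\beta_\tau = b$. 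This identifies $u$ as the solution of $u_t + \tfrac{\sigma^2(t)}{2}u_{xx}=0$ with inhomogeneous Neumann data $u_x(t,-c)=a$, $u_x(t,c)=b$ and terminal condition $u(T,x)=0$.

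Next I would invoke Lemma \ref{lma:u-v} to absorb the nonzero boundary data into the quadratic-plus-linear corrector, writing $u = v + \frac{b-a}{4c}x^2 + \frac{b+a}{2}x$. Since $\mu\equiv 0$ here, the source term produced by Lemma \ref{lma:u-v} collapses to the spatially constant forcing $\frac{\sigma^2(t)}{2}\frac{b-a}{2c}$, so $v$ solves the inhomogeneous heat-type PDE with homogeneous Neumann data $v_x(t,\pm c)=0$ and terminal condition $v(T,x) = -\frac{b-a}{4c}x^2 - \frac{b+a}{2}x$.

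The core step is to expand $v(t,x)=\sum_{n\ge 0} v_n(t)\,e_n(x)$ in the fixed Neumann basis. Using $e_n'' = -(n\pi/c)^2 e_n$ together with the homogeneous boundary conditions (so that integration by parts twice introduces no boundary terms), the PDE projects onto each mode as the decoupled scalar linear ODE $v_n'(t) - \frac{\sigma^2(t)}{2}\left(\frac{n\pi}{c}\right)^2 v_n(t) + h_n(t) = 0$, with terminal value $v_n(T)=k_n$, where $h_n$ and $k_n$ are exactly the stated projections of the forcing and of $v(T,\cdot)$; note in passing that $\int_{-c}^c e_n\,dx = 0$ for $n\ge 1$, so only the zeroth mode is actually forced. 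Solving each ODE backward from $T$ with the integrating factor $\exp\!\bigl(-\tfrac12(n\pi/c)^2\int^t \sigma^2\bigr)$ reproduces \eqref{equation:v_n} verbatim, and reassembling $u$ from $v$ and the corrector yields \eqref{eqn:u-eig-expan-time-dep}.

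The main obstacle is analytic rather than algebraic: justifying the term-by-term projection rigorously, i.e. verifying that the series converges in a strong enough norm that $\partial_t$ and $\partial_x^2$ may be interchanged with the summation, and that the expansion of the terminal and forcing data in $\{e_n\}$ is legitimate. This rests on $L^2$ bounds for the coefficients $k_n$ and $h_n$ and on the parabolic smoothing of the $e^{-\lambda_n(t)}$-type factors, which damp the high modes. The entire decoupling depends on the eigenfunctions being independent of $t$; once a nonzero drift $\tilde\mu_t$ is present the generator's eigenfunctions become time-dependent and the modes no longer separate, which is exactly why this theorem is confined to the zero-drift, deterministic-$\sigma(t)$ regime.
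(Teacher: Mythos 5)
Your proposal is correct, and it is essentially the argument the paper intends: the paper states this theorem without proof, remarking only that it follows by "a similar approach as in Section \ref{sec:time-homo}," and your proof is precisely that approach --- Lemma \ref{lma:refl-diffusion-pde} to identify the conditional expectation with the Neumann terminal-boundary value problem, Lemma \ref{lma:u-v} to absorb the boundary data (with the forcing collapsing to $\tfrac{\sigma^2(t)}{2}\tfrac{b-a}{2c}$ since $\mu\equiv 0$), and then mode-by-mode projection onto the fixed cosine basis, where the time-dependent eigenvalues turn the constant-coefficient decay factors of the homogeneous case into the integrating factors $\exp\bigl(-\tfrac12(n\pi/c)^2\int\sigma^2\bigr)$ appearing in \eqref{equation:v_n}. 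Your observations that only the zeroth mode is forced and that the decoupling hinges on the eigenfunctions being time-independent are both accurate and correctly identify why the result is restricted to the driftless, deterministic-volatility setting.
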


Observe from \eqref{eqn:u-eig-expan-time-dep} that we have
\begin{align*}
\lim_{T\to\infty} \frac1{T-t} v_n(t) &= 0 \quad \forall n \geq 1, \\
\lim_{T\to\infty} \frac1{T-t} v_0(t) &= \lim_{T\to\infty}\frac1{T-t} k_0 + \lim_{T\to\infty}\frac1{T-t} \int_t^T h_0(s) ds \\
&= \frac{b-a}{4c}\, \sqrt{2c} \, \lim_{T\to\infty} \frac1{T-t} \int_t^T \sigma^2(s) ds.
\end{align*}
Consequently,
\begin{eqnarray*}
&& \lim_{T\to\infty} \frac1{T-t} \mathbb{E}\left[- a (\hat{L}_T-\hat{L}_t) + b (\hat{U}_T-\hat{U}_t)\mid \hat{Z}_t \right] \\
&=& \lim_{T\to\infty} \frac1{T-t} \left\{
\sum_{n=0}^\infty v_n(t) e_n(\hat{Z}_t) + \frac{b - a}{4c} \hat{Z}_t^2 + \frac{b+a}2 \hat{Z}_t
\right\} \\
&=& \lim_{T\to\infty} \frac1{T-t} v_0(t) e_0(\hat{Z}_t) \\
&=& \frac{b-a}{4c} \lim_{T\to\infty} \frac1{T-t} \int_t^T \sigma^2(s) ds.
\end{eqnarray*}
This together with Lemma \ref{lem:ez=ezhat} leads to the following theorem.

\begin{theorem}[Log Growth Rate of LP Wealth under Time-Dependent Volatility]
Assuming $\mu_X = \mu_Y = \mu$, the logarithmic growth rate of an LP's wealth in a G3M can be expressed as
$$
\lim_{T \to \infty} \frac{\mathbb{E}[\ln V_T]}{T}
= \mu - \left[\frac{(1-\gamma)w(1-w)}{1-w+\gamma w} + \frac{(1-\gamma)w(1-w)}{\gamma(1-w)+ w}\right] \frac1{4 \ln \gamma} \lim_{T\to\infty} \frac{1}{T}\int_0^T \sigma^2(s) ds.
$$
\end{theorem}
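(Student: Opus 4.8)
The plan is to read the growth rate directly off the logarithmic wealth decomposition \eqref{eqn:log_wealth}, combined with the time-averaged reflection limit
\[
\lim_{T\to\infty} \frac1{T-t} \mathbb{E}_t\left[- a (L_T-L_t) + b (U_T-U_t)\right] = \frac{b-a}{4c} \lim_{T\to\infty} \frac1{T-t} \int_t^T \sigma^2(s)\,ds,
\]
derived just above the statement from the eigensystem expansion \eqref{eqn:u-eig-expan-time-dep}. Abbreviating the two liquidity-growth coefficients by $c_L = \frac{(1-\gamma)w(1-w)}{1-w+\gamma w}$ and $c_U = \frac{(1-\gamma)w(1-w)}{\gamma(1-w)+w}$, equation \eqref{eqn:log_wealth} evaluated at time $T$ reads
\[
\ln V_T = c_L L_T + c_U U_T + w\ln S^X_T + (1-w)\ln S^Y_T + d_T - Z_T
\]
up to additive constants that do not affect the limit. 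First I would apply $\mathbb{E}_t[\cdot]$, divide by $T-t$, and let $T\to\infty$, handling the four groups of terms separately.

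The bounded terms $d_T$ and $Z_T$ obey uniform bounds (recorded after \eqref{eqn:log_wealth} and in Proposition \ref{Prop:conti_mispricing}), so their conditional expectations divided by $T-t$ vanish, as do the additive constants. For the price terms, definition \eqref{D:mu} gives
\[
\lim_{T\to\infty} \frac{\mathbb{E}_t\left[w\ln S^X_T + (1-w)\ln S^Y_T\right]}{T-t} = w\mu_X + (1-w)\mu_Y,
\]
which under the standing hypothesis $\mu_X = \mu_Y = \mu$ collapses to $\mu$.

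It remains to treat the reflection terms $c_L L_T + c_U U_T$. Since $L_t$ and $U_t$ are $\mathcal{F}_t$-measurable, replacing $L_T,U_T$ by the increments $L_T-L_t$, $U_T-U_t$ changes the numerator only by a constant and so leaves the limit unchanged. I would then invoke the displayed limit formula with $a=-c_L$ and $b=c_U$, obtaining
\[
\lim_{T\to\infty}\frac{\mathbb{E}_t\left[c_L(L_T-L_t)+c_U(U_T-U_t)\right]}{T-t} = \frac{c_L+c_U}{4c}\lim_{T\to\infty}\frac1{T-t}\int_t^T\sigma^2(s)\,ds.
\]
Substituting $c=-\ln\gamma$ turns $\frac1{4c}$ into $-\frac1{4\ln\gamma}$, and collecting this with the $\mu$ from the price terms yields exactly the asserted formula.

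The substantive content sits in the displays preceding the statement rather than here: since the drift vanishes and the volatility is spatially constant, the PDE forcing is constant in $x$ and hence projects only onto the zeroth Neumann mode $e_0=\sqrt{1/(2c)}$, so every higher coefficient $v_n(t)$, $n\ge1$, stays bounded and washes out after division by $T-t$, while the $\lambda_0=0$ mode accumulates the running average of $\sigma^2$. The only delicate point is the term-by-term passage to the limit in the series \eqref{eqn:u-eig-expan-time-dep}, which is legitimate because the Fourier coefficients $k_n$ of the quadratic-plus-linear terminal data are summable. Granting the preceding derivation, the remaining work is pure bookkeeping---matching the signs $a=-c_L$, $b=c_U$ and the constant $c=-\ln\gamma$.
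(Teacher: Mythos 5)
Your proposal is correct and follows essentially the same route as the paper: the paper itself derives the theorem by combining the displayed limit $\lim_{T\to\infty}\frac{1}{T-t}\mathbb{E}_t[-a(L_T-L_t)+b(U_T-U_t)]=\frac{b-a}{4c}\lim_{T\to\infty}\frac{1}{T-t}\int_t^T\sigma^2(s)\,ds$ (obtained from the eigensystem expansion \eqref{eqn:u-eig-expan-time-dep}) with the wealth decomposition \eqref{eqn:log_wealth}, exactly as you do. Your sign bookkeeping ($a=-c_L$, $b=c_U$, $c=-\ln\gamma$), the vanishing of the bounded terms $d_T$, $Z_T$, and the reduction of the price terms to $\mu$ via \eqref{D:mu} all match the paper's intended argument.
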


This theorem demonstrates how the time-averaged volatility influences the long-term growth of LP wealth. Furthermore, since the volatility process $\tilde\sigma_t$ is independent of the mispricing process $Z_t$, by conditioning on the $\sigma$-algebra generated by $\tilde\sigma_t$ then applying the tower property for conditional expectation, we obtain the logarithmic growth rate of LP's wealth under driftless, independent stochastic volatility as follows.

\begin{theorem}[Log Growth Rate under Independent Stochastic Volatility]
Assuming $\mu_X = \mu_Y = \mu$, the logarithmic growth rate of an LP's wealth in a G3M can be expressed as
\begin{eqnarray*}
&& \lim_{T \to \infty} \frac{\mathbb{E}[\ln V_T]}{T} \\
&=& \mu - \left[\frac{(1-\gamma)w(1-w)}{1-w+\gamma w} + \frac{(1-\gamma)w(1-w)}{\gamma(1-w)+ w}\right] \frac1{4 \ln \gamma} \lim_{T\to\infty} \frac{1}{T}\int_0^T \Eof{\tilde\sigma_s^2} ds.
\end{eqnarray*}
\end{theorem}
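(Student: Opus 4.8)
The plan is to reduce the stochastic-volatility statement to the deterministic time-dependent volatility theorem established immediately above, by conditioning on the entire trajectory of the volatility process. Write $\mathcal{G} = \sigma(\tilde\sigma_s : s \ge 0)$ for the $\sigma$-algebra generated by the volatility path. Because $\tilde\sigma_t$ is assumed independent of the driving Brownian motion $W_t$, once we condition on $\mathcal{F}_t \vee \mathcal{G}$ the map $s \mapsto \tilde\sigma_s$ becomes a (now deterministic) function of time on $[t,T]$, while $W$ remains a Brownian motion. Under this conditioning the mispricing process is exactly a reflected diffusion of the form $dZ_t = \sigma(t)\,dW_t + dL_t - dU_t$ on $[-c,c]$ treated in the Time-Dependent Volatility theorem. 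The driftless hypothesis $\tilde\mu_t = 0$ together with $\mu_X = \mu_Y = \mu$ (so the relative price has zero drift) is precisely what makes this reduction exact.

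First I would invoke the tower property along the inclusion $\mathcal{F}_t \subseteq \mathcal{F}_t \vee \mathcal{G}$ to write
\[
\Etof{\ln V_T} = \Etof{\,\Eof{\ln V_T \mid \mathcal{F}_t \vee \mathcal{G}}\,}.
\]
Second, I would apply the Time-Dependent Volatility theorem to the inner conditional expectation, treating the realized path $\tilde\sigma_\cdot$ as the deterministic volatility, which yields
\[
\lim_{T\to\infty}\frac{\Eof{\ln V_T \mid \mathcal{F}_t \vee \mathcal{G}}}{T-t}
= \mu - C_\gamma\,\frac1{4\ln\gamma}\,\lim_{T\to\infty}\frac1{T-t}\int_t^T \tilde\sigma_s^2\,ds,
\]
where $C_\gamma = \frac{(1-\gamma)w(1-w)}{1-w+\gamma w} + \frac{(1-\gamma)w(1-w)}{\gamma(1-w)+w}$ is the bracketed constant. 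Third, I would divide by $T-t$, pass to the limit, apply the outer conditional expectation $\Etof{\cdot}$, and finally use the conditional Fubini theorem to move $\Etof{\cdot}$ inside the time integral, converting $\tilde\sigma_s^2$ into $\Etof{\tilde\sigma_s^2}$ and producing the stated formula.

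The main obstacle is justifying the interchange of the limit $T\to\infty$ with the outer expectation $\Etof{\cdot}$ in the third step, since the Time-Dependent Volatility theorem supplies convergence only for each fixed realization of $\mathcal{G}$. To control this I would combine the strong ellipticity bound $\tilde\sigma_t \ge \epsilon > 0$ with an upper envelope on the volatility (for instance a uniform bound $\sup_s \Etof{\tilde\sigma_s^2} < \infty$, equivalently integrability of the time averages $\frac1{T-t}\int_t^T \tilde\sigma_s^2\,ds$) to exhibit a dominating integrable random variable, and then apply dominated convergence for conditional expectations; the uniform boundedness of $d_T$ and $Z_T$ in \eqref{eqn:log_wealth} guarantees that those two terms contribute nothing to the growth rate. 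Once the domination hypothesis is secured, the remaining manipulations — the Fubini exchange and the identification $\Eof{\tilde\sigma_s^2 \mid \mathcal{F}_t} = \Etof{\tilde\sigma_s^2}$ — are routine.
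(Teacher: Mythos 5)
Your proposal matches the paper's own argument: the paper establishes this theorem precisely by conditioning on the $\sigma$-algebra generated by the volatility process $\tilde\sigma$, invoking the time-dependent volatility result pathwise, and then applying the tower property, exactly as you do. The only difference is that you explicitly justify the interchange of the limit $T\to\infty$ with the outer expectation via a domination argument, a point the paper leaves implicit.
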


We conclude the section by showing that if the stochastic but independent drift and volatility converge to their corresponding $L^2$ limits, similar asymptotics in expectation as in Corollary \ref{thm:growth_rate_inhomo} can also be obtained.

\begin{theorem}[Log Growth Rate of LP Wealth in G3M Independent Volatility and Drift] \label{thm:growth_rate_stoch}
Assume that the mispricing process $Z_t$ follows the reflected diffusion process in the interval $[-c, c]$ governed by
\begin{equation}
dZ_t = \tilde{\mu}_t dt + \tilde\sigma_t dW_t + dL_t - dU_t,
\end{equation}
where the coefficients $\tilde{\sigma}_t$ and $\tilde{\mu}_t$ are stochastic but independent of the driving Brownian motion $W_t$. Further, assume that the limits of the coefficients $\tilde{\sigma}_t$ and $\tilde{\mu}_t$ as $t$ approaches infinity exist almost surely and in $L^2$. Specifically, there exists an $\epsilon > 0$ such that
\begin{eqnarray}
&& \lim_{t\to\infty} \tilde{\sigma}_t^2 = \sigma^2 \geq \epsilon, \qquad  \lim_{t\to\infty} \tilde{\mu}_t = \mu
\end{eqnarray}
almost surely and in $L^2$, where $\mu$ and $\sigma^2$ are square integrable random variables. The long-term expected logarithmic growth rate of an LP's wealth in a G3M can be expressed as
\begin{align*}
\lim_{T \to \infty} \frac{\mathbb{E}[\ln V_T]}{T}
= w \mu_X + (1-w) \mu_Y + \frac{(1-\gamma)w(1-w)}{1-w+\gamma w} \alpha + \frac{(1 - \gamma)w(1-w)}{\gamma(1-w) + w} \beta,
\end{align*}
where
\begin{eqnarray*}
\begin{array}{ll}
\displaystyle \alpha = \beta = -\frac1{4 \ln \gamma}\Eof{\sigma^2}, & \mbox{ if } \theta = 0 \mbox{ almost surely}; \\
& \\
\displaystyle \alpha = \Eof{\frac{\mu}{\gamma^{-2\theta} - 1}},  \quad \beta = \Eof{\frac{\mu}{1-\gamma^{2\theta}}}, & \mbox{ if } \theta \neq 0,
\end{array}
\end{eqnarray*}
where $\theta = \frac{2 \mu}{\sigma^2}$, should the expectations exist.
\end{theorem}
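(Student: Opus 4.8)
The plan is to reduce the growth-rate computation to the two time-averaged reflection rates of $L_t$ and $U_t$, and then to dispatch the randomness of the coefficients by conditioning on their realization. Starting from the log-wealth decomposition \eqref{eqn:log_wealth}, the terms $d_t$ and $Z_t$ are bounded (they lie in fixed intervals determined by $\gamma$ and $w$), so they drop out of the time-averaged limit, while $w\ln S^X_T + (1-w)\ln S^Y_T$ contributes $w\mu_X + (1-w)\mu_Y$ by the defining limits \eqref{D:mu}. Hence it suffices to show that
$$
\alpha = \lim_{T\to\infty} \frac{\Etof{L_T - L_t}}{T-t}, \qquad \beta = \lim_{T\to\infty} \frac{\Etof{U_T - U_t}}{T-t}
$$
equal the asserted expressions. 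These are precisely the quantities controlled by Corollary \ref{Cor:t-avg-lim-LU}, except that the drift and volatility feeding the reflected diffusion are now random.

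The key idea is to exploit the independence of the coefficient processes $\tilde{\sigma}_t$ and $\tilde{\mu}_t$ from the driving Brownian motion $W_t$. Let $\mathcal{G}$ denote the $\sigma$-algebra generated by $(\tilde{\sigma}_s, \tilde{\mu}_s)_{s\ge 0}$. Conditioning on $\mathcal{G}$ freezes these coefficients into deterministic, time-varying functions, so that conditionally $Z_t$ is a time-inhomogeneous reflected diffusion on $[-c,c]$ of exactly the type treated in Section \ref{sec:time-inhomo}; by hypothesis these conditionally deterministic coefficients converge almost surely to the limits $\sigma^2 \ge \epsilon$ and $\mu$, and the strong-ellipticity bound keeps the generator uniformly nondegenerate. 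Applying Theorem \ref{thm:time-inhomo-lim} (equivalently, the time-inhomogeneous growth formula of Corollary \ref{thm:growth_rate_inhomo}) conditionally on $\mathcal{G}$, the conditional time-averaged limits of $L$ and $U$ depend only on the limiting coefficients through the limiting speed measure, and hence coincide with the constant-coefficient (GBM) values of Corollary \ref{Cor:log_growth_GBM} evaluated at the realized limits $\sigma, \mu$. Using $\tfrac{\theta\sigma^2}{2} = \mu$, this gives the conditional rates $-\tfrac{\sigma^2}{4\ln\gamma}$ when $\theta = 0$ and $\tfrac{\mu}{\gamma^{-2\theta}-1}$, $\tfrac{\mu}{1-\gamma^{2\theta}}$ when $\theta \ne 0$.

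Finally I would take expectations via the tower property, $\Etof{L_T - L_t} = \Etof{\,\Eof{L_T - L_t \mid \mathcal{G}}\,}$, and pass the limit inside to obtain $\alpha = \Eof{\lim_T (T-t)^{-1}\Eof{L_T - L_t \mid \mathcal{G}}}$, and likewise for $\beta$; substituting the conditional rates from the previous step yields the stated formulas. The main obstacle is exactly this interchange of the $T\to\infty$ limit with the outer expectation: the conditional argument only furnishes pathwise (in $\mathcal{G}$) convergence of $(T-t)^{-1}\Eof{L_T-L_t\mid\mathcal{G}}$, so to conclude I must produce a $\mathcal{G}$-measurable, integrable dominating bound uniform in $T$. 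This is where the strong ellipticity $\tilde{\sigma}_t \ge \epsilon$ and the boundedness of the limiting coefficients are essential, since they give a uniform-in-$T$ upper bound on the normalized reflection local times (the rate at which a uniformly nondegenerate reflected diffusion accumulates boundary local time on $[-c,c]$ is controlled by $\epsilon$ and $c$ alone), supplying the dominating function for dominated convergence. A secondary check is that the dichotomy $\{\theta = 0\}$ versus $\{\theta \ne 0\}$ is handled consistently: one verifies that $\tfrac{\mu}{\gamma^{-2\theta}-1} \to -\tfrac{\sigma^2}{4\ln\gamma}$ as $\theta \to 0$, so the two expressions glue continuously and the expectation remains well defined on any event where $\mu$ may vanish.
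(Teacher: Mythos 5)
Your proposal follows essentially the same route as the paper's own (very terse) proof: condition on the realization of $(\tilde{\mu}_t,\tilde{\sigma}_t)$, which is legitimate by their independence from $W_t$, apply the time-inhomogeneous asymptotics of Corollary \ref{thm:growth_rate_inhomo} conditionally to recover the GBM-type rates of Corollary \ref{Cor:log_growth_GBM} at the realized limits, and then use the tower property. In fact you go further than the paper, which leaves implicit the interchange of the $T\to\infty$ limit with the outer expectation; your domination argument via the ellipticity bound $\tilde{\sigma}_t \ge \epsilon$ fills exactly that gap.
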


\begin{proof}
The proof essentially is based on conditioning on the realizations of $\tilde{\mu}_t$ and $\tilde\sigma_t$ followed by applying the tower property since $\tilde{\mu}$ and $\tilde{\sigma}$ are independent of the Brownian motion $W_t$.
\end{proof}

We remark that the long-term expected logarithmic growth rate considered in Theorem \ref{thm:growth_rate_stoch} can also be obtained differently by first calculating the condition expectations as in \eqref{eqn:mu_z-sig_z}, then apply the asymptotic result given in Corollary \ref{thm:growth_rate_inhomo}. This route is applicable even when $\tilde{\mu}_t$ and $\tilde\sigma_t$ are not independent of $W_t$; it is, however, subject to the determination of the conditional expectations in \eqref{eqn:mu_z-sig_z}, which in general do not admit easy-to-access analytical expressions. The expression obtained in Theorem \ref{thm:growth_rate_stoch} is more tractable in that it is subject to the determination of the limiting distributions for $\mu$ and $\sigma^2$ as well as the corresponding expectations. However, it applies only if $\tilde{\mu}_t$ and $\tilde{\sigma}_t$ are independent of $W_t$.
\section*{Conclusion}
This paper studied the long-term growth of liquidity provider (LP) wealth in
Geometric Mean Market Makers (G3Ms), explicitly account for the effects of
continuous-time arbitrage and transaction fees. Building on the framework of
\cite{milionis2022quantifying, milionis2022automated, milionis2023automated} and extending the constant product analysis of \cite{tassy2020growth} to a broader class of G3Ms, we developed a tractable stochastic model driven by the mispricing process between the G3M pool and an external reference market.

A central finding is that trading fees play a dual role in LP wealth dynamics: they generate income through the no-arbitrage band they impose on the pool price, while simultaneously mediating the adverse selection risk posed by arbitrageurs. Our explicit formulas for the long-term logarithmic growth rate of LP wealth capture both effects, and reveal that --- under appropriate fee selection --- G3Ms can outperform both buy-and-hold and constant-rebalanced portfolio strategies. The result connects G3M dynamics to the excess growth rate in the Stochastic Portfolio Theory \cite{fernholz2002stochastic}, and supports the view of G3Ms as viable on-chain index fund infrastructure: short-term arbitrage losses are transformed, through the fee mechanism, into long-run volatility harvesting.

The present analysis is carried out in the two-asset setting, which allows for a complete and tractable treatment of the mispricing dynamics and their ergodic properties. The structural form of these dynamics, however, extends naturally to the multi-asset case. For a pool with $n$ assets, the log-mispricing of asset $i$ relative to asset $j$ satisfies
$$
dZ_t^{ij} = d\ln S_t^{ij} + \sum_{k \neq i} \alpha^{ij}_{ik} dL_t^{ik} - \sum_{k \neq j} \beta^{ij}_{kj} dU_t^{kj},
$$
where $\alpha^{ij}_{ik}$ and $\beta^{ij}_{kj}$ are constants determined by the fee parameter $\gamma$, the pool weights $w$, and the normalization of order flows, and $L_t^{ij}$, $U_t^{ij}$ are continuous non-decreasing processes that increase only when $Z_t^{ij} = \ln\gamma$ and $Z_t^{ij} = -\ln\gamma$, respectively. The principal challenge in extending our results to this setting lies not in the dynamics' structural form, which carries over directly, but in characterizing the ergodic behavior of the boundary local times $L_t^{ij}$ and $U_t^{ij}$. These are driven by a higher-dimensional reflected diffusion for which the existing literature does not yet provide a complete theory. We intend to pursue this direction in subsequent work.

Several further directions emerge naturally from this work. These include incorporating noise-trader order flow to capture the contribution of uninformed trading to LP profitability, investigating the interplay between multiple AMM liquidity pools and a fragmented market microstructure, and extending the framework to G3Ms with dynamic weights in the spirit of \cite{evans2021liquidity, fernholz2002stochastic, karatzas2009stochastic}. Developing the ergodic theory for multi-asset reflected diffusions and deriving the corresponding LP wealth growth rates for pools with $n > 2$ assets constitutes the most immediate and natural direction for future research, which we intend to pursue in subsequent work.

\section*{Acknowledgement}
The authors express their sincere gratitude to Shuenn-Jyi Sheu for his invaluable insights and guidance throughout this research. We also appreciate the fruitful discussions and support from our colleagues, which significantly contributed to the development of this work. We are also grateful to the anonymous referee for their careful reading of the manuscript and their constructive comments and suggestions, which helped improve the quality of this paper.

S.-N. T. gratefully acknowledges the financial support from the National Science and Technology Council of Taiwan under grant 111-2115-M-007-014-MY3. Furthermore, S.-N. T. extends heartfelt thanks to Ju-Yi Yen for her unwavering encouragement and support, which were instrumental in making this collaborative effort possible.

C. Y. Lee was supported in part by a research startup fund of the Chinese University of Hong Kong, Shenzhen, and the Shenzhen Peacock fund 2025TC0013.
\appendix

\section{Uniqueness for the Fokker-Planck equation}\label{append:uniqueness}
Here we provide details for the uniqueness of solutions $p$ and $\hat{p}$ in the proof of Lemma \ref{lem:z=zhat}. Let $q(t,z) = p(t,z) - \hat{p}(t,z)$. Then $q$ solves the PDE
\begin{align}\label{FPE0}
    \begin{cases}
        q_t = -(\mu q)_z + \frac12 (\sigma^2 q)_{zz}, & t > 0, z \in (-c,c),\\
        \mu q - \frac12 (\sigma^2 q)_{zz} = 0 & \text{at $z=\pm c$,}\\
        q(0,z) = 0 & z \in [-c,c].
    \end{cases}
\end{align}
We use the energy method to show that $q=0$. Let
\begin{align*}
    E(t) = \frac12 \int_{-c}^c q^2(t,z) dz.
\end{align*}
Using the PDE in \eqref{FPE0}, integrating by parts, and taking into account the boundary condition in \eqref{FPE0}, we have for any $t \in [0,T]$ that
\begin{align*}
    E'(t) &= \int_{-c}^c q q_t dz = -\int_{-c}^c q (\mu q)_z dz + \frac12 \int_{-c}^c q (\sigma^2 q)_{zz} dz\\
    & = \int_{-c}^c \mu q q_z dz - \frac12 \int_{-c}^c q_z (\sigma^2 q)_z dz\\
    & = \int_{-c}^c (\mu-\sigma \sigma_z) q q_z dz - \frac12 \int_{-c}^c \sigma^2 (q_z)^2 dz.
\end{align*}
By assumption, $\mu$, $\sigma$ and $\sigma_z$ are bounded.
Moreover, by the strong ellipticity \eqref{strong_ellipticity} of $\sigma$ and Young's inequality, we deduce that for some $C>0$,
\begin{align*}
    E'(t) &\le C\int_{-c}^c |q| |q_z| dz - \frac{\varepsilon}{2} \int_{-c}^c (q_z)^2\\
    & \le \frac{C^2}{2\varepsilon} \int_{-c}^c q^2 dz + \frac{\varepsilon}{2} \int_{-c}^c (q_z)^2 - \frac{\varepsilon}{2} \int_{-c}^c (q_z)^2 dz\\
    &\le \frac{C^2}{2\varepsilon} \int_{-c}^c q^2 dz = \frac{C^2}{\varepsilon}E(t).
\end{align*}
Hence, Gronwall's inequality implies that $E(t) \le e^{(C^2/\varepsilon)t}E(0) = 0$ for any $t \in [0,T]$. This shows that $p(t,\cdot) = \hat{p}(t,\cdot)$.

\section{Sturm-Liouville theory} \label{append:S-L}
This section provides a brief overview of Sturm-Liouville theory, which plays a key role in analyzing the mispricing process and deriving the growth rate of LP wealth in our framework.

\subsection{Eigensystem} \label{sec:eigensystem}
Consider a second-order differential operator $\mathcal{L}$ in the Sturm-Liouville form
\begin{equation} \label{eqn:L-sl}
\mathcal{L} u = \frac1{\omega(x)} \frac{d}{dx}\left( p(x) \frac{du}{dx}\right),
\end{equation}
where $\omega(x)$ and $p(x)$ are smooth functions. Let $(\lambda_n, e_n(x))$ represent the eigensystem of $\mathcal{L}$, satisfying
$$
\mathcal{L} e_n(x) = -\lambda_n e_n(x) \quad \text{for } n \geq 0,
$$
with Neumann boundary conditions $e_n'(-c) = e_n'(c) = 0$ on the interval $[-c,c]$. These boundary conditions correspond to reflecting boundaries for the associated diffusion process. Since a constant function always satisfies the equation with $\lambda_0 = 0$, the first normalized eigenfunction is $e_0 \equiv \frac{1}{K}$, where $K = (\int_{-c}^c \omega(x) dx)^{\frac12}$.

The eigensystem possesses the following important properties:
\begin{itemize}
    \item $\lambda_n > 0$ for all $n \in \mathbb{N}_{>0}$ and each eigenvalue is of multiplicity one.
    \item The normalized eigenfunctions $e_n$ form an orthonormal basis for the space $L^2[-c, c]$ with respect to the weight function $\omega(x)$. This means
    $$
    \int_{-c}^c e_n(x)e_m(x) \omega(x)dx = \delta_{nm}
    $$
    where $\delta_{nm}$ is the Kronecker delta. Consequently, any function $f \in L^2[-c, c]$ can be expressed as
    $$
    f(x) = \sum_{n=0}^\infty \xi_n e_n(x),
    $$
    where the coefficients $\xi_n$ are given by
    $$
    \xi_n = \int_{-c}^c f(x)e_n(x)\omega(x)dx.
    $$
    This expansion converges in the $L^2$ sense with respect to the weight function $\omega(x)$.
\end{itemize}

\subsection{Solution to inhomogeneous PDE with general terminal condition}
For the inhomogeneous parabolic PDE
\begin{equation}
u_t + \mathcal{L} u + f(x) = 0 \label{eqn:inhom-pde}
\end{equation}
with terminal condition $u(T,x) = g(x)$ and Neumann boundary conditions $u_x(t,-c) = u_x(t, c) = 0$ for $t < T$, we show how to formulate its solution using the eigensystem given in Section \ref{sec:eigensystem}.
Let the eigenfunction expansions for functions $f$ and $g$ be represented as
$$
f(x) = \sum_{n=0}^\infty \xi_n e_n(x), \qquad g(x) = \sum_{n=0}^\infty \eta_n e_n(x)
$$
where the coefficients $\xi_n$ and $\eta_n$ are defined as:
$$
\xi_n = \int_{-c}^c f(x)e_n(x)\omega(x)dx, \qquad \eta_n = \int_{-c}^c g(x)e_n(x)\omega(x)dx.
$$
In particular, the term $\xi_0$, expressed as
\begin{equation} \label{eqn:xi0}
\xi_0 = \frac{\int_{-c}^c f(x)\omega(x) dx}{\int_{-c}^c \omega(x)dx},
\end{equation}
is the weighted average of $f$ over the interval $[-c, c]$, weighted by $\omega$.

The solution to the terminal-boundary value problem \eqref{eqn:inhom-pde}
can be expressed in terms of eigenvalues and eigenfunctions for $\cL$ as
\begin{equation}
u(t, x) = \xi_0 (T - t) + \eta_0 + \sum_{n=1}^\infty \left\{\frac{\xi_n}{\lambda_n} \left[1 - e^{-\lambda_n(T - t)} \right] + \eta_n e^{-\lambda_n(T - t)} \right\} e_n(x).
\end{equation}
Consequently, the following long-term time-averaged limit of $u$ exists
$$
\lim_{T\to\infty} \frac{u(t,x)}{T-t} = \xi_0,
$$
where recall that $\xi_0$, given in \eqref{eqn:xi0}, is the zeroth Fourier coefficient of the inhomogeneous term $f$. We note that this long-term time-averaged limit depends only on the zeroth coefficient of the inhomogeneous term; no other higher-order coefficients are involved. Furthermore, we have the following long-term limit of $u$ as $T \to \infty$
$$
\lim_{T\to\infty} \left\{ u(t,x) - \xi_0(T - t) \right\} = \eta_0 + \sum_{n=1}^\infty \frac{\xi_n}{\lambda_n} e_n(x).
$$

\subsection{Transition density in terms of eigensystem}
The following proposition shows that the transition density of a reflected diffusion \eqref{eqn:zhat-sde} within a bounded interval can be expressed in terms of the eigensystem of its infinitesimal generator with Neumann boundary conditions.

\begin{proposition} \label{Prop:transition_reflected_diffusion}
The transition density $p$ of a reflected diffusion in the interval $[-c, c]$ with infinitesimal generator $\cL$ given in \eqref{eqn:L-sl} can be expressed in terms of the eigensystem for $\cL$ as
$$
p(T,y|t,x) = \sum_{n=0}^\infty e^{-\lambda_n(T - t)} e_n(x) e_n(y) \omega(y).
$$
\end{proposition}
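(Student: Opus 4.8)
The plan is to recognize the stated formula as the spectral (eigenfunction) expansion of the transition semigroup generated by $\cL$, exploiting that in the Sturm-Liouville form \eqref{eqn:L-sl} the operator $\cL$ is self-adjoint on $L^2([-c,c],\omega)$ under Neumann boundary conditions. Concretely, for smooth $u,v$ with $u'(\pm c)=v'(\pm c)=0$, writing $\cL u = \frac1{\omega}(p u')'$ and integrating by parts gives
$$
\langle \cL u, v\rangle_\omega = \int_{-c}^c (p u')' v\,dx = \big[p u' v\big]_{-c}^c - \int_{-c}^c p u' v'\,dx = -\int_{-c}^c p u' v'\,dx,
$$
where the boundary term vanishes by the Neumann condition; the right-hand side is symmetric in $u,v$, so $\cL$ is self-adjoint. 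This is exactly the structure that produces the $\omega$-orthonormal, complete eigenbasis $\{e_n\}$ recalled in Section \ref{sec:eigensystem}, which I take as given.

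Next, I set $\tau = T-t$ and use that, since $\cL$ is time-homogeneous, the density $p(T,y|t,x)$ depends on $(t,T)$ only through $\tau$ and, as a function of the backward variables $(t,x)$, solves the backward Kolmogorov equation $\p_t p + \cL_x p = 0$ with reflecting (Neumann) boundary conditions at $x=\pm c$ and the terminal condition $p(T,y|T,x)=\delta(x-y)$; equivalently $\p_\tau p = \cL_x p$ with initial datum $\delta_y$. I then expand in the eigenbasis in the $x$-variable, $p = \sum_{n\ge 0} c_n(\tau) e_n(x)$. Substituting and using $\cL e_n = -\lambda_n e_n$ decouples the PDE into the ODEs $c_n'(\tau) = -\lambda_n c_n(\tau)$, so that $c_n(\tau) = c_n(0)\,e^{-\lambda_n \tau}$.

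The initial condition is matched by computing the $\omega$-weighted Fourier coefficients of the delta mass: $c_n(0) = \int_{-c}^c \delta(x-y)\,e_n(x)\,\omega(x)\,dx = e_n(y)\,\omega(y)$. Assembling the pieces yields $p(T,y|t,x) = \sum_{n\ge 0} e^{-\lambda_n(T-t)}\,e_n(x)\,e_n(y)\,\omega(y)$, as claimed. As a sanity check I would verify this is a genuine transition density: $\omega$-orthonormality of $\{e_n\}$ gives Chapman–Kolmogorov, the $n=0$ term (with $\lambda_0=0$ and $e_0\equiv 1/K$, $K^2=\int_{-c}^c\omega$) gives $\int_{-c}^c p\,dy = 1$, and dividing by $\omega(y)$ exposes the symmetric kernel $\sum_n e^{-\lambda_n\tau}e_n(x)e_n(y)$ of the self-adjoint semigroup $e^{\tau\cL}$ relative to the speed measure.

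The main obstacle is analytic rather than algebraic: one must justify that the reflected diffusion's density indeed satisfies the Neumann (zero-flux) boundary condition matching the eigenvalue problem, and that the eigenfunction series may be differentiated term by term and converges. Both are controlled by the smoothing property of the heat semigroup: for any $\tau>0$ the factors $e^{-\lambda_n\tau}$ decay faster than any power of $\lambda_n$, so the series together with its termwise $x$- and $\tau$-derivatives converges uniformly on $[-c,c]$. The delta initial datum is interpreted in the $L^2(\omega)$ (distributional) sense, and its expansion is legitimate by the completeness of $\{e_n\}$ from Sturm-Liouville theory; positivity of $p$ then follows from its identification with the law of the reflected diffusion.
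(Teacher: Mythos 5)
The paper itself gives no proof of this proposition --- it is stated as a classical fact, with the appendix (Section \ref{sec:S-L}) only recording the eigensystem properties it rests on --- so there is nothing to compare your argument against; judged on its own, your proof is correct and is the standard derivation. The chain is sound: integration by parts shows $\cL$ is self-adjoint on $L^2([-c,c],\omega)$ under Neumann conditions; the backward Kolmogorov equation in $(t,x)$ with reflecting boundary separates in the eigenbasis into $c_n'(\tau)=-\lambda_n c_n(\tau)$; and the $\omega$-weighted Fourier coefficients of the delta datum are indeed $c_n(0)=\int_{-c}^c\delta(x-y)e_n(x)\omega(x)\,dx=e_n(y)\omega(y)$, assembling to the stated kernel. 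Your sanity checks are also right: $\omega$-orthonormality gives Chapman--Kolmogorov, and since $\lambda_0=0$, $e_0\equiv 1/K$ with $K^2=\int_{-c}^c\omega$, the $n=0$ term alone yields $\int_{-c}^c p\,dy=1$; this is exactly the structure the paper exploits in Theorem \ref{thm:steady_state_RD}. The two issues you flag as the genuine analytic content are the right ones, and your sketches for closing them are adequate: (i) the identification of the reflected diffusion's backward equation with the Neumann problem is where the probabilistic definition of reflection (the regulators $L_t$, $U_t$ acting only at $\pm c$) enters, and a careful write-up would either cite the Feller-semigroup characterization of reflecting diffusions or argue by uniqueness for the Neumann heat equation that the series, which manifestly solves it, must equal $p$; (ii) term-by-term differentiation and uniform convergence for $\tau>0$ follow from Weyl asymptotics $\lambda_n\asymp n^2$ and uniform boundedness of the normalized eigenfunctions of a regular Sturm--Liouville problem. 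With those two points filled in as you indicate, the proof is complete.
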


This leads to the following characterization of the steady-state distribution:

\begin{theorem}[Steady-State Distribution] \label{thm:steady_state_RD}
The reflected diffusion within the interval $[-c,c]$ with infinitesimal generator \eqref{eqn:L-sl} has a steady-state distribution $\pi$ given by
$$
\pi(dx) = \frac{\omega(x)}{\int_{-c}^c \omega(\xi) d\xi} dx, \quad x \in [-c,c].
$$
\end{theorem}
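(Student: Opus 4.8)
The plan is to read the steady-state distribution directly off the spectral representation of the transition density furnished by Proposition \ref{Prop:transition_reflected_diffusion}. The decisive structural input, recorded in Section \ref{sec:eigensystem}, is that the Neumann eigensystem has a spectral gap: $\lambda_0 = 0$ with constant eigenfunction $e_0 \equiv 1/K$, where $K = \left(\int_{-c}^c \omega(x)\,dx\right)^{1/2}$, while $\lambda_n > 0$ for every $n \geq 1$. This separation of the zeroth mode from the rest of the spectrum is exactly what singles out the invariant measure.

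First I would split the transition density into its zeroth mode and the remaining tail,
$$
p(T,y\,|\,t,x) = e_0(x)\,e_0(y)\,\omega(y) + \sum_{n=1}^\infty e^{-\lambda_n(T-t)}\,e_n(x)\,e_n(y)\,\omega(y).
$$
Because $\lambda_n > 0$ for $n \geq 1$, every tail term decays exponentially as $T - t \to \infty$. Letting $T \to \infty$ and using $e_0 \equiv 1/K$ then gives
$$
\lim_{T\to\infty} p(T,y\,|\,t,x) = e_0(x)\,e_0(y)\,\omega(y) = \frac{\omega(y)}{\int_{-c}^c \omega(\xi)\,d\xi},
$$
which is precisely the density of $\pi$ and is independent of the initial point $x$, as a steady-state density must be.

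To certify that this limit is a genuine invariant measure rather than a merely formal one, I would verify the stationarity identity $\int_{-c}^c p(T,y\,|\,t,x)\,\pi(dx) = \pi(y)$ for every $T - t > 0$. Substituting the eigenexpansion and using the orthonormality consequence $\int_{-c}^c e_n(x)\,\omega(x)\,dx = K\,\delta_{n0}$, the integral against $\pi(dx) = \omega(x)\,dx / \int_{-c}^c \omega$ annihilates every mode with $n \geq 1$, and the surviving $n = 0$ term returns $\pi(y)$ exactly. Since $\int_{-c}^c \pi(dx) = 1$ by construction, $\pi$ is a bona fide probability measure, and this computation shows it is preserved by the semigroup for all times.

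The main obstacle is analytic rather than conceptual: justifying the term-by-term passage to the limit in the infinite eigenexpansion. I would control the tail using the uniform bound $\lambda_n \geq \lambda_1 > 0$ for $n \geq 1$ together with $L^2$-completeness of $\{e_n\}$ with respect to the weight $\omega$, which yields a dominated-convergence estimate on the remainder. It is worth noting that the stationarity verification of the previous paragraph circumvents this difficulty altogether, since the orthonormality relation makes each inner product exact; this supplies an independent, purely algebraic derivation of $\pi$ that never interchanges a limit with an infinite sum.
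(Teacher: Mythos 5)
Your proposal is correct and follows essentially the same route as the paper: both take $T \to \infty$ in the eigenexpansion of the transition density from Proposition \ref{Prop:transition_reflected_diffusion}, noting that $\lambda_n > 0$ for $n \geq 1$ kills every mode except the zeroth, which equals $e_0(x)e_0(y)\omega(y) = \omega(y)/\int_{-c}^c \omega(\xi)\,d\xi$. Your additional verification of the invariance identity $\int_{-c}^c p(T,y|t,x)\,\pi(dx) = \pi(y)$ via the orthonormality relation $\int_{-c}^c e_n(x)\,\omega(x)\,dx = K\delta_{n0}$ is a sound supplement that the paper omits, and it does sidestep the tail-estimate issue in the limit interchange.
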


\begin{proof}
By Proposition \ref{Prop:transition_reflected_diffusion}, as $T \to \infty$, the steady-state distribution is given by
\begin{align*}
\lim_{T\to\infty} p(T,y|t, x)
&= \lim_{T\to\infty}\sum_{n=0}^\infty e^{-\lambda_n(T - t)} e_n(x) e_n(y) \omega(y) \\
&= e_0(x) e_0(y) \omega(y)
= \frac{\omega(y)}{\int_{-c}^c \omega(x) dx}
\end{align*}
since the zeroth eigenfunction $e_0(x)$ is a constant $e_0(x) = \left(\int_{-c}^c \omega(\xi)d\xi\right)^{-\frac12}$.
\end{proof}

\section{Time-inhomogeneous reflected diffusion} \label{append:time-inhomo}
In this appendix, we provide the proof of the long-term time averaged expectation of a time-inhomogeneous reflected diffusion as stated in Theorem \ref{thm:time-inhomo-lim}. For fixed $t, x$, we shall sometimes suppress the reference to $t$, $x$ in the transition density $p$ and simply denote $p(s,y|t,x)$ by $p(s,y)$ for simplicity. For any function $\varphi$ defined in $[-c,c]$, $\|\varphi\|_2$ denotes the $L^2$ norm of $\varphi$ in $[-c,c]$. We start with stating an estimate of the $L^2$-norm between the transition density $p$ and the stationary density $q$ in the following lemma, whose proof is omitted (for interested readers, we refer it to, for instance, \cite{kahane1983asymptotic}, see (3.21) on P.276), is classical and crucial to the proof that follows.

\begin{lemma} \label{lma:L2-ergodicity}
Assume that the infinitesimal generator operator $\cL_t := \frac{\sigma^2(t,x)}2 \p_x^2 + \mu(t,x)\p_x$ is strongly elliptic, i.e., there exists an $\epsilon > 0$ such that $\sigma(t,x) \geq \epsilon$ for all $t$, $x$, and that the coefficients $\sigma$ and $\mu$ are smooth and bounded,
the following estimates hold. For any $T > t$, we have
\begin{equation} \label{eqn:L2-ergodicity}
\|p(T,\cdot|t,x) - q\|_2 \leq \frac C{\sqrt{T - t}}
\end{equation}
for some constant $C$ depending only on the interval $[-c,c]$. As a result, we note that the $L^2$ norm of $p(s, y)$ is bounded above by
\begin{equation} \label{eqn:p-ub}
\|p(s,\cdot)\|_2 \leq \|p(s,\cdot) - q\|_2 + \|q\|_2 \leq \frac C{\sqrt{s - t}}+ \|q\|_2
\end{equation}
for $s > t$.
\end{lemma}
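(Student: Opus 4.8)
The plan is to prove \eqref{eqn:L2-ergodicity} by splitting the time horizon into a short-time regime and a long-time regime, using a different mechanism in each: short-time parabolic smoothing to tame the singularity of the point-mass initial datum $\delta_x$, and a spectral gap (Poincar\'e inequality) to control the long-time relaxation toward $q$. The rate $C/\sqrt{T-t}$ is then obtained as a single majorant dominating both regimes. Note that the second estimate \eqref{eqn:p-ub} is immediate from the first by the triangle inequality, so the entire content is the first bound.

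For the long-time regime I would exploit the reversible Sturm--Liouville structure of Section \ref{sec:eigensystem}. For the limiting (frozen) generator, $\cL=\frac1{\omega}\p_x(p\,\p_x)$ is self-adjoint on $L^2([-c,c],\omega\,dx)$ with discrete Neumann spectrum $0=\lambda_0<\lambda_1\le\cdots$, and by Proposition \ref{Prop:transition_reflected_diffusion} the transition density expands as $p(T,y\mid t,x)=\sum_{n\ge0}e^{-\lambda_n(T-t)}e_n(x)e_n(y)\omega(y)$, whose $n=0$ term is exactly $q$ (Theorem \ref{thm:steady_state_RD}). Subtracting $q$, using orthonormality of the $e_n$ with weight $\omega$, and the fact that $\omega$ is bounded above and below on the compact interval (by strong ellipticity, boundedness, and smoothness), yields $\|p(T,\cdot\mid t,x)-q\|_2\le C_1 e^{-\lambda_1(T-t)}$ for $T-t\ge1$. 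Since $e^{-\lambda_1 s}\le C/\sqrt s$ for $s\ge1$, this settles the long-time window; the only quantitative input is a lower bound on the gap $\lambda_1$, which follows from the one-dimensional Poincar\'e inequality on $[-c,c]$ with weight $\omega$.

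For the short-time regime $0<T-t\le1$ I would instead bound $\|p(T,\cdot\mid t,x)-q\|_2\le\|p(T,\cdot\mid t,x)\|_2+\|q\|_2$ and control the first term by the standard on-diagonal (Nash) heat-kernel estimate for uniformly elliptic one-dimensional diffusions, namely $\|p(T,\cdot\mid t,x)\|_2\le C_2(T-t)^{-1/4}$, equivalent to integrating the Gaussian upper bound $p\le C(T-t)^{-1/2}\exp(-c|x-y|^2/(T-t))$ against itself. Because $(T-t)^{-1/4}\le(T-t)^{-1/2}$ and $\|q\|_2$ is a fixed constant on $(0,1]$, this gives $\|p-q\|_2\le C_3(T-t)^{-1/2}$. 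Combining the two windows produces the uniform bound $C/\sqrt{T-t}$ with $C$ depending only on $[-c,c]$ through $\lambda_1$, the ellipticity constant $\epsilon$, and the extrema of $\omega$.

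The main obstacle is the genuinely time-inhomogeneous character of $\cL_t$: the eigenexpansion of Proposition \ref{Prop:transition_reflected_diffusion} is unavailable since there are no time-independent eigenfunctions, and $q$ is the stationary density of the \emph{limiting} operator rather than an invariant density for $\cL_t$ at finite $t$. I would replace the spectral expansion by an energy estimate for the Fokker--Planck equation satisfied by $h(s,\cdot):=p(s,\cdot\mid t,x)$: a \emph{uniform-in-time} Poincar\'e inequality (available because strong ellipticity and boundedness give a spectral gap for each frozen $\cL_s$ bounded below independently of $s$) forces the weighted $L^2$ distance to the instantaneous invariant density to contract exponentially, while the drift of that moving invariant density is controlled by the $L^2$ convergence $\sigma(t,\cdot)\to\sigma(\cdot)$, $\mu(t,\cdot)\to\mu(\cdot)$. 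Reconciling the moving target with the fixed limit $q$ and absorbing both contributions into the single rate $1/\sqrt{T-t}$ is the delicate step; this is precisely the classical parabolic estimate recorded as (3.21) in \cite{kahane1983asymptotic}, which I would invoke to make the contraction rigorous.
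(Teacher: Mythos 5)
You should first be clear about what the paper actually does here: it does \emph{not} prove this lemma. The proof is explicitly omitted and delegated to a classical reference (estimate (3.21), p.~276 of \cite{kahane1983asymptotic}). So the only load-bearing content to compare against is that citation --- and your proposal, after all its scaffolding, also rests its one genuinely hard step (what you call ``the delicate step'') on exactly the same citation. In that sense your attempt is no more and no less complete than the paper's treatment. The pieces you do carry out yourself are correct but peripheral: the reduction of \eqref{eqn:p-ub} to \eqref{eqn:L2-ergodicity} by the triangle inequality, the short-time on-diagonal (Nash/Aronson) bound $\|p(T,\cdot|t,x)\|_2 \lesssim (T-t)^{-1/4}$ handling the singularity as $T \downarrow t$, and the spectral-gap decay via Proposition \ref{Prop:transition_reflected_diffusion} --- the latter valid only for a \emph{time-homogeneous} generator, which is not the setting of the lemma.

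The gap is in the mechanism you sketch for the time-inhomogeneous long-time regime, which is the entire content of the lemma. A uniform-in-time Poincar\'e inequality can force contraction of the Fokker--Planck flow toward the \emph{instantaneous} invariant density $q_s$ of the frozen operator $\cL_s$, but reconciling the moving target $q_s$ with the fixed limit $q$ produces a source term involving $\p_s q_s$, i.e.\ it requires a \emph{rate} at which the coefficients approach their limits. The hypotheses here (and in Section \ref{sec:time-inhomo}) provide only qualitative $L^2$ convergence $\mu(t,\cdot)\to\mu$, $\sigma(t,\cdot)\to\sigma$, with no rate, and from that no bound of the form $C/\sqrt{T-t}$ with $C$ depending only on $[-c,c]$ can follow. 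Concretely: let the coefficients equal ``wrong'' values $\mu'$, $\sigma'$ (with stationary density $q' \neq q$) until a late time $\tau$ and equal the limiting ones afterwards; this is consistent with all stated hypotheses, yet for $T$ near $\tau$ one has $\|p(T,\cdot|t,x)-q\|_2 \approx \|q'-q\|_2 > 0$ while $C/\sqrt{T-t}\approx C/\sqrt{\tau - t}$ is arbitrarily small. So your energy argument, carried out honestly, yields convergence but not the stated rate; the rate must come from the stronger quantitative hypotheses under which the cited classical estimate is actually proved, or else the constant must be allowed to depend on the coefficients and their mode of convergence. This also surfaces in a small internal inconsistency of your write-up: you say $C$ depends ``only on $[-c,c]$ through $\lambda_1$, the ellipticity constant $\epsilon$, and the extrema of $\omega$'' --- those are properties of the coefficients, not of the interval.
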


With Lemma \ref{lma:L2-ergodicity} in hand, we provide the proof of Theorem \ref{thm:time-inhomo-lim} as follows. Note that we have
\begin{align*}
u(t, x) &= \mathbb{E}\left[g(X_T) + \int_t^T f(s, X_s)ds\, \Big| \, \hat{Z}_t = x \right] \\
&= \int_{-c}^c g(y)p(T,y|t,x) dy + \int_t^T\int_{-c}^c f(s,y)p(s,y|t,x)dyds
\end{align*}
since $p$ is the transition density. Consider
\begin{equation} \label{eqn:u-inhomo}
\frac{u(t, x)}{T-t} = \frac1{T-t}\int_{-c}^c g(y)p(T,y|t,x) dy + \frac1{T - t}\int_t^T\int_{-c}^c f(s,y)p(s,y|t,x)dyds,
\end{equation}
We separately determine the limits of the two terms on the right-hand side of \eqref{eqn:u-inhomo}.

For the first term in \eqref{eqn:u-inhomo}, by applying the Cauchy-Schwarz inequality we obtain that, for $T \geq t$,
\begin{align*}
& \left|\int_{-c}^c g(y) p(T,y|t, x) dy \right| \leq  \|g\|_2 \|p(T,\cdot|t, x)\|_2  \leq  \|g\|_2 \left\{\frac C{\sqrt{T-t}} + \|q\|_2 \right\}
\end{align*}
where in the second inequality, we applied the upper bound for $p(T,\cdot)$ given in \eqref{eqn:p-ub}. It follows that
\begin{equation} \label{eqn:1st-term-lim}
\lim_{T\to\infty}\frac1{T - t}\left|\int_{-c}^c g(y) p(T,y|t, x) dy\right|
\leq \lim_{T\to\infty}\frac{\|g\|_2}{T-t} \left\{\frac C{\sqrt{T-t}} + \|q\|_2 \right\} = 0.
\end{equation}

For the second term in \eqref{eqn:u-inhomo}, we claim that, as $\lim_{t\to\infty}f(t, x) = \bar f(x)$ in $L^2$, we have
\begin{equation} \label{eqn:2nd-term-lim}
\lim_{T\to\infty}\frac1{T-t}\int_t^T \int_{-c}^c f(s,y)p(s,y|t,x) dy ds = \int_{-c}^c \bar f(y)q(y) dy
\end{equation}
for every $t$ and $x$. Note that by applying the Cauchy-Schwarz inequality, we have
\begin{align}
& \left|\int_{-c}^c \left\{f(s,y)p(s,y) - \bar f(y)q(y)\right\} dy \right|  \nonumber \\
\leq& \int_{-c}^c \left|f(s,y) - \bar f(y)\right|p(s,y) dy + \int_{-c}^c |\bar f(y)||p(s,y)-q(y)| dy \nonumber \\
\leq& \|f(s,\cdot) - \bar f\|_2 \|p(s,\cdot)\|_2 + \|\bar f\|_2 \|p(s,\cdot) - q \|_2.\label{eqn:2nd-term}
\end{align}
We shall deal with the two pieces in \eqref{eqn:2nd-term} separately. For the first piece, since $f(s,y) \to \bar f(y)$ as $s\to\infty$ in $L^2$, for any $\epsilon > 0$, there exists a $t_1 \geq t$ such that
$$
\|f(s,\cdot) - \bar f\|_2 < \epsilon
$$
for all $s > t_1$. Hence, for given $T > t_1$ we have
\begin{eqnarray*}
&& \int_t^T \|f(s,\cdot)-\bar f\|_2 \|p(s,\cdot)\|_2 ds \\
&=& \int_t^{t_1} \|f(s,\cdot)-\bar f\|_2 \|p(s,\cdot)\|_2 ds + \int_{t_1}^T \|f(s,\cdot)-\bar f\|_2 \|p(s,\cdot)\|_2 ds \\
&\leq& M \int_t^{t_1} \left\{ \frac C{\sqrt{s-t}} + \|q\|_2\right\} ds + \epsilon \int_{t_1}^T \left\{ \frac C{\sqrt{s-t}} + \|q\|_2\right\} ds \\
&=& M \left\{ 2C \sqrt{t_1-t} + \|q\|_2 (t_1 - t) \right\} + \epsilon \left\{ 2 C \sqrt{T-t_1} + \|q\|_2 (T - t_1)\right\} ,
\end{eqnarray*}
where in the inequality we applied the upper bound for $p$ given in \eqref{eqn:p-ub} and the constant $M > 0$ is defined as
$$
\infty > M := \max_{t\leq s\leq t_1}\|f(s,\cdot)\|_2 + \|\bar f\|_2 \geq \|f(s,\cdot)\|_2 + \|\bar f\|_2 \geq \|f(s,\cdot)-\bar f\|_2.
$$ It follows that
\begin{eqnarray*}
&& \lim_{T\to\infty}\frac1{T - t}\int_t^T \|f(s,\cdot)-\bar f\|_2 \|p(s,\cdot)\|_2 ds \\
&\leq& \lim_{T\to\infty} \frac M{T-t} \left\{ 2C \sqrt{t_1-t} + \|q\|_2 (t_1 - t) \right\} + \lim_{T\to\infty} \frac\epsilon{T-t} \left\{ 2 C \sqrt{T-t_1} + \|q\|_2 (T - t_1)\right\} \\
&=& \epsilon \|q\|_2.
\end{eqnarray*}
Since $\epsilon > 0$ is arbitrary, we obtain the limit of time-average of the first piece in \eqref{eqn:2nd-term} as $T$ approaches infinity as
\begin{align*}
\lim_{T\to\infty}\frac1{T - t}\int_t^T \|f(s,\cdot)-\bar f\|_2 \|p(s,\cdot)\|_2 ds = 0.
\end{align*}
Next, for the second piece in \eqref{eqn:2nd-term}, note that from \eqref{eqn:L2-ergodicity} we have
\begin{align*}
& \int_t^T \|p(s,\cdot) - q \|_2 ds \leq \int_t^T \frac C{\sqrt{s-t}} ds = 2 C \sqrt{T - t}.
\end{align*}
It follows immediately that
\begin{align*}
&\lim_{T\to\infty} \frac1{T-t} \int_t^T \|\bar f\|_2\|p(s,\cdot) - q \|_2 ds
\leq 2 C \lim_{T\to\infty} \frac{\|\bar f\|_2}{T-t}  \sqrt{T - t}
= 0.
\end{align*}
Finally, by combing \eqref{eqn:1st-term-lim} and \eqref{eqn:2nd-term-lim} we conclude that
\begin{align*}
\lim_{T\to\infty}\frac1{T-t} u(t,x) = \int_{-c}^c \bar f(y)q(y)dy.
\end{align*}

\bibliographystyle{alpha}
\bibliography{Reference}
\end{document}